\documentclass[12pt,draftcls,journal,onecolumn]{IEEEtran}
%

\usepackage{amsfonts,color,morefloats,pslatex}
\usepackage{amssymb,amsthm, amsmath,latexsym,booktabs}

\newtheorem{theorem}{Theorem}
\newtheorem{lemma}[theorem]{Lemma}
\newtheorem{proposition}[theorem]{Proposition}

\newtheorem{example}[theorem]{Example}

\newtheorem{remark}[theorem]{Remark}

\newcommand{\ord}{{\mathrm{ord}}}

\newcommand{\lcm}{{\mathrm{lcm}}}
\newcommand{\tr}{{\mathrm{Tr}}}

\newcommand{\gf}{{\mathrm{GF}}}

\newcommand{\wt}{{\mathtt{wt}}}

\newcommand{\m}{\mathbb{M}}

\newcommand{\C}{{\mathcal{C}}}

\newcommand{\bc}{{\mathbf{c}}}

\newcommand{\bzero}{{\mathbf{0}}}

\usepackage{blindtext}

\ifCLASSINFOpdf

\else

\fi

\hyphenation{op-tical net-works semi-conduc-tor}

\begin{document}
%
\title{Two families of negacyclic BCH codes 
\thanks{
X. Wang's research was supported by The National Natural Science Foundation of China under Grant Number  12001175. 
Z. Sun's research was supported by The National Natural Science Foundation of China under Grant Number  62002093. 
C. Ding's research was supported by The Hong Kong Research Grants Council, Proj. No. $16301522$.}
}

\author{Xiaoqiang Wang\thanks{X. Wang is with The Hubei Key Laboratory of Applied Mathematics, Faculty of Mathematics and Statistics, Hubei University, Wuhan 430062, China (email: waxiqq@163.com).}, 
\and 
Zhonghua Sun\thanks{Z. Sun is with The School of Mathematics, Hefei University of Technology, Hefei, 230601, Anhui, China 
(email:  sunzhonghuas@163.com).}, \and 
Cunsheng Ding\thanks{C. Ding is with The Department of Computer Science and Engineering,
The Hong Kong University of Science and Technology, Clear Water Bay, Kowloon, Hong Kong, China (email: cding@ust.hk).}
}

\maketitle

\begin{abstract}

Negacyclic BCH codes are a subclass of neagcyclic codes and are the best linear codes in many cases. However, there have been very few results on negacyclic BCH codes. Let $q$ be an odd prime power and $m$ be a positive integer. The objective of this paper is to study
   negacyclic BCH codes with length $\frac{q^m-1}{2}$ and $\frac{q^m+1}{2}$ over the finite field $\gf(q)$ and analyse their parameters. The negacyclic BCH codes presented in this paper have good parameters in general, and contain many optimal linear codes. For certain $q$ and $m$, compared with cyclic codes with the same dimension and length, the negacyclic BCH codes presented in this paper have a larger minimum distance in some cases.

\end{abstract}

\begin{IEEEkeywords}
Constacyclic code, \and neagcyclic BCH code,  \and negacyclic code, \and linear code.
\end{IEEEkeywords}

%
\IEEEpeerreviewmaketitle

\section{Introduction}\label{sec-introduction}

Let $q$ be a prime power and $n$ be a positive integer. An $[n, k, d]$ linear code $\mathcal{C}$ over the finite field $\gf(q)$ is a $k$-dimensional linear subspace of $\gf(q)^n$ with minimum Hamming distance $d$. The dual code of $\mathcal{C}$ is defined by
$$\mathcal{C}^{\perp}=\{\mathbf{b} \in \gf(q)^n\,:\,\mathbf{b}\mathbf{c}^{T}=0 \,\,\text{for any $\mathbf{c} \in \mathcal{C}$}\}, $$
where $\mathbf{b}\mathbf{c}^{T}$ is the standard inner product of two vectors $\mathbf{b}$ and $\mathbf{c}$. An $[n,k]$ linear code $\C$ over $\gf(q)$ is said to be {\it negacyclic} if
$(c_0,c_1, c_2, \ldots, c_{n-1}) \in \C$ implies $(-c_{n-1}, c_0, c_1, c_2,\ldots, c_{n-2})
\in \C$.
 By identifying any vector $(c_0,c_1,\ldots,c_{n-1}) \in \gf(q)^n$ with the polynomial
$$c_0+c_1x+c_2x^2+\cdots+c_{n-1}x^{n-1} \in \gf(q)[x]/(x^n+1),$$
any negacyclic code $\C$ of length $n$ over $\gf(q)$ corresponds to an ideal of the quotient ring
$\gf(q)[x]/(x^n+1)$.  It is known that every ideal of $\gf(q)[x]/(x^n+1)$ must be principal. Thus, every negacyclic code $\C$ can be expressed as $\C=(g(x))$, where
$g(x)$ is a monic polynomial with the smallest degree and is called the {\it generator polynomial} of $\C$. Let $h(x)=(x^n+1)/g(x)$, then $h(x)$ is referred to as the {\em check polynomial} of $\C$.
The zeros of $g(x)$ and $h(x)$ are called zeros and non-zeros of $\C$~\cite{MacWilliams77}. The dual of $\C$ is also a negacyclic code and has generator polynomial $g^{*}(x)=h_0^{-1}x^{k}h(x^{-1})$, i.e., the \emph{reciprocal polynomial} of $h(x)$, where $h(x)=h_0+h_1x+ \cdots + h_{k-1}x^{k-1}+x^k$.

 An $[n, k, d]$ linear code over $\gf(q)$ is said to be \textit{distance-optimal} (respectively,
\textit{dimension-optimal} and \textit{length-optimal}) if there does not exist an $[n, k, d' \geq d+1]$ (respectively,
$[n, k' \geq k+1, d]$ and $[n' \leq n-1, k, d]$) linear code over $\gf(q)$. A code is said to be \emph{optimal} if it is
 length-optimal, or dimension-optimal, or distance-optimal, or meets a bound for linear codes.

In this paper, we always let $n$ be a positive integer and $q$ be an odd prime power with $\gcd(n,q)=1$. Similar to cyclic BCH codes, negacyclic BCH codes and their designed distances are defined as follows. Let $\alpha$ be a primitive element of $\gf(q^\ell)$, where $\ell=\ord_{2n}(q)$ is the order of $q$ modulo $2n$. Put $\beta=\alpha^{(q^\ell-1)/2n}$,
then $\beta$ is a primitive $2n$-th root of unity in $\gf(q^\ell)$.
 For any $i$ with $0\leq i\leq n-1$,
let $\m_{\beta^{1+2i}}(x)$ denote the minimal polynomial of $\beta^{1+2i}$ over $\gf(q)$. For any $\delta$
 with $2 \leq \delta \leq n$, let
\begin{eqnarray}\label{eqn-BCHdefiningSet}
g_{(q,n,\delta,b)}(x)=\lcm \left(\m_{\beta^{1+2b}}(x), \m_{\beta^{1+2(b+1)}}(x), \ldots, \m_{\beta^{1+2(b+\delta-2)}}(x)\right),
\end{eqnarray}
where $b$ is an integer and lcm denotes the least common multiple of these minimal polynomials. Let $\C_{(q,n,\delta,b)}$ denote
the negacyclic code of length $n$ over $\gf(q)$ with generator polynomial $g_{(q,n,\delta,b)}(x)$, then $\C_{(q,n,\delta,b)}$ is called a \emph{negacyclic BCH code} with {\it designed distance} $\delta$.

Cyclic BCH codes were introduced in 1959 by Hocquenghem \cite{Hocquenghem59}, and independently in 1960 by Bose and Ray-Chaudhuri \cite{Bose62}. They were extended to cyclic BCH
codes over finite fields by Gorenstein and Zierler in 1961 \cite{Gorenstein61}. In the past decade,  cyclic BCH codes have been widely studied and are treated in almost every book on coding theory as 
cyclic BCH codes are a special class of cyclic codes with interesting properties and applications, and are usually among the best cyclic codes.
The reader is referred to, for example, \cite{Aly07,Augot94,Charpin90,Ding2015,Ding15,Ding17,Lid017,Lid17,Liu17, Yue15,Dianwu96} for information on cyclic BCH codes. However, litter progress on the study of the dimension and minimum distance of negacyclic BCH codes has been made.

Negacyclic codes over finite fields  were initiated by Berkelamp in 1968 \cite{Berlekamp68,Berlekamp681}. Krishna and Sarwate \cite{KS90} found that negacyclic codes over finite fields can product optimal linear codes in many cases. Since then, a lot of quantum codes with good parameters have been constructed from negacyclic BCH codes \cite{Kai13,Kai131,Guo20,Zhu20}. Recently, Zhu et al. \cite{Zhu20} studied the negacyclic BCH codes of length $(q^{2m}-1)/(q-1)$ with designed distance $\delta \leq q^m+2$ and determined their dimensions. At the same time, Pang et al. \cite{Pang18} researched three classes of negacyclic BCH codes with designed distance in some ranges.
It is known that coset leaders provide information on the Bose distance and dimension of cyclic BCH codes. Similarly, odd coset leaders provide information on the minimum distance and dimension of negacyclic BCH codes. However, until now 
few results on the first several largest odd coset leaders are known.

With  the help of Magma, we found that for some codes with the same dimension and length, the best negacyclic BCH codes have better parameters than the best cyclic codes. This shows that negacyclic BCH codes have better parameters than cyclic BCH codes in some cases. Some examples of such code parameters are given in Table \ref{table1}.

\begin{table}[h]
{\caption{\rm The best cyclic codes and negacyclic BCH codes.
}\label{table1}
\begin{center}
\begin{tabular}{ccccc}\hline
$q$     &$m$                       & best cyclic codes  & negacyclic BCH codes\\\hline
3    &5                                              &[122,112,2]      &[122,112,5]    \\\hline
3    &3                                               &[14,8,2]      &[14,8,5]    \\\hline
3    &4                                                &[40,36,2]      &[40,36,3]    \\\hline
3    &4                                               &[40,28,5]      &[40,28,6]    \\\hline
3    &4                                           &[40,8,20]      &[40,8,21]    \\\hline
\end{tabular}
\end{center}}
\end{table}

Inspired and motivated by the examples of negacyclic codes in Table \ref{table1} and earlier works on negacyclic codes,  we study the negacyclic BCH codes of length $\frac{q^m-1}{2}$ and $\frac{q^m+1}{2}$ over $\gf(q)$ in this paper.
The first objective of this paper is to analyse the parameters
 of negacyclic BCH codes of length $\frac{q^m-1}{2}$ and $\frac{q^m+1}{2}$ with small  and large dimensions. The second objective
 of this paper is to
 determine the exact Hamming distance of neagcyclic BCH codes  of  length $\frac{q^m-1}{2}$ and $\frac{q^m+1}{2}$ with generator polynomials being an irreducible polynomial or the product of two irreducible polynomials.
 To investigate the optimality
 of the codes studied in this paper, we compare them with the tables of the best known linear codes maintained in \cite{Grassl2006}, and show that some of  the codes presented in this paper are optimal.

The rest of this paper is organized as follows. Section~\ref{sec-auxiliary} introduces some preliminaries. Section~\ref{sec-MDSqeven}
and Section~\ref{sec-MDSqodd}
study the parameters
 of negacyclic BCH codes of length $\frac{q^m-1}{2}$ and $\frac{q^m+1}{2}$ with small  and large dimensions, respectively. Section~\ref{sec-largedimes}
determines the parameters of BCH codes with generator polynomials being an irreducible polynomial or the product of two irreducible polynomials. Section \ref{sec-finals} concludes this paper.

\section{Preliminaries}\label{sec-auxiliary}

In this section, we introduce some
basic concepts and known results, which will be used later in this paper.
\subsection{Notation used starting from now on}
Starting from now on, we adopt the following notation unless otherwise stated:
\begin{itemize}
\item $\gf(q)$ is the finite field with $q$ elements.
\item $\C_{(q,n,\delta,b)}$ denotes 
the negacyclic BCH code of length $n$ over $\gf(q)$ with generator polynomial $g_{(q,n,\delta,b)}(x)$, where $g_{(q,n,\delta,b)}(x)$ was defined in (\ref{eqn-BCHdefiningSet}).
\item $[u,v]:=\{u,u+1,\ldots,v\},$ where $u,v$ are nonnegative integers with $u \leq v$. If $u > v$, then $[u, v]=\emptyset$. 
\item $\delta_i$ is the $i$-th largest odd coset leader modulo $q^m-1$ and $q^m+1$ in Section III and Section IV, respectively.
\item $\lfloor x \rfloor$ denotes the largest integer less than or equal to $x$.
\item $\lceil x \rceil$ denotes the smallest integer greater than or equal to $x$. 
\item $u \bmod v$ denotes the nonnegative remainder in $[0, v-1]$ when $u$ is divided by $v$, where $u$ and $v \geq 2$ are 
integers.    
\end{itemize}

\subsection{Negacyclic LCD  codes}
A linear code $\C$ is called an {\it LCD code} ({\it linear complementary dual}) if $\C\cap \C^{\bot}=\{\bzero\}$. Let $$g(x)=g_tx^t+g_{t-1}x^{t-1}+\cdots+g_1x+g_0$$ be a polynomial over $\gf(q)$ with $g_t\neq 0$ and $g_0\neq 0$, the reciprocal $g^*(x)$ of $g(x)$ is defined by
$$g^*(x)=g_0^{-1}x^tg(x^{-1}).$$
The following basic results about negacyclic LCD codes over finite fields were given in \cite{Pang18}.
\begin{lemma}\cite[Theorem 3.2]{Pang18}\label{lemma1}
Let $\mathcal{C}$ be a negacyclic code of length $n$ over $\gf(q)$ with generator polynomial $g(x)$. Then the following statements are equivalent.
\begin{itemize}
\item $\mathcal{C}$ is an LCD code.
\item $g(x)$ is self-reciprocal.
\item $\gamma^{-1}$ is a root of $g(x)$ for every root $\gamma$ of $g(x)$ over the splitting field of $g(x)$.
\end{itemize}
Furthermore, the negacyclic code of length $n$ over $\gf(q)$ is LCD if $-1$ is a power of $q$ modulo $2n$.
\end{lemma}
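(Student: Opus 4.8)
The plan is to translate the whole statement into the language of root sets in the splitting field and to exploit the duality between a negacyclic code and its dual at the level of zeros. Write $U=\{\gamma\in\ov{\gf(q)}:\gamma^n=-1\}=\{\beta^{1+2i}:0\leq i\leq n-1\}$ for the set of roots of $x^n+1$; since $\gcd(n,q)=1$ this polynomial is separable, so $U$ has exactly $n$ elements and every monic divisor of $x^n+1$ is determined by its set of roots. The first observation I would record is that $U$ is closed under inversion, because $(\gamma^{-1})^n=(\gamma^n)^{-1}=(-1)^{-1}=-1$. Let $Z(g)\subseteq U$ denote the roots of $g(x)$ and $Z(h)=U\setminus Z(g)$ the roots of the check polynomial $h(x)=(x^n+1)/g(x)$. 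The equivalence of the second and third items is then essentially the definition of the reciprocal: from $g^*(x)=g_0^{-1}x^tg(x^{-1})$ one computes $g^*(\gamma^{-1})=g_0^{-1}\gamma^{-t}g(\gamma)$, so $\gamma$ is a root of $g$ if and only if $\gamma^{-1}$ is a root of $g^*$. Hence the root set of $g^*$ is exactly $Z(g)^{-1}$, and since $g$ and $g^*$ are both monic of the same degree and separable, they coincide precisely when $Z(g)=Z(g)^{-1}$, which is the third item.

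For the equivalence of the first item with the third I would argue through zero sets. The intersection $\C\cap\C^{\perp}$ is again an ideal of $\gf(q)[x]/(x^n+1)$, i.e. a negacyclic code, and a polynomial lies in it iff it vanishes on $Z(\C)\cup Z(\C^{\perp})$; such a code equals $\{\bzero\}$ exactly when its zero set is all of $U$, equivalently when its generator polynomial is $x^n+1$. By the duality recalled in the introduction, the generator polynomial of $\C^{\perp}$ is the reciprocal of $h(x)$, so the zeros of $\C^{\perp}$ are the inverses of the roots of $h$, namely $Z(h)^{-1}=(U\setminus Z(g))^{-1}=U\setminus Z(g)^{-1}$, where I use $U^{-1}=U$. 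Therefore $\C$ is LCD iff $Z(g)\cup(U\setminus Z(g)^{-1})=U$. Taking complements inside $U$, the condition becomes $(U\setminus Z(g))\cap Z(g)^{-1}=\emptyset$, that is $Z(g)^{-1}\subseteq Z(g)$; since inversion is a bijection of $U$ it preserves cardinality, so this forces $Z(g)=Z(g)^{-1}$, completing the cycle of equivalences.

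For the final sufficient condition, suppose $-1\equiv q^s\pmod{2n}$ for some $s$. I would write $Z(g)=\{\beta^j:j\in T\}$, where $T$ is a union of $q$-cyclotomic cosets modulo $2n$, the coset structure being forced by the fact that the root set of a polynomial over $\gf(q)$ is stable under the Frobenius map $\gamma\mapsto\gamma^q$. Inversion acts on the exponents by $j\mapsto -j\equiv q^s j\pmod{2n}$, and $q^s j$ lies in the same $q$-cyclotomic coset as $j$; hence $T$ is closed under negation modulo $2n$, so $Z(g)=Z(g)^{-1}$ for the generator of an arbitrary negacyclic code of length $n$, and the equivalence just proved shows that every such code is LCD.

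The bulk of the work is bookkeeping, so the main delicate point is getting the zeros of the dual code exactly right, i.e. the reciprocal (inversion) step together with the facts that $U$ is inversion-closed and that $g$ and $h$ are separable with complementary root sets; once those are pinned down, the LCD criterion reduces to a counting argument on the finite set $U$ and the cyclotomic-coset argument for the sufficient condition is immediate.
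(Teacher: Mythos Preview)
Your argument is correct and follows the standard route through root sets and reciprocals; the equivalences are cleanly handled and the cyclotomic-coset justification of the final sufficient condition is the right one. Note, however, that the paper does not prove this lemma at all: it is quoted as \cite[Theorem 3.2]{Pang18} and stated without proof, so there is no in-paper argument to compare against.
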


By Lemma \ref{lemma1}, it is straightforward that every negacyclic code of length $\frac{q^m+1}{2}$ is LCD. However, this is not true for negacyclic codes with length $\frac{q^m-1}{2}$. In \cite{Pang18}, the authors determined the number of LCD negacyclic codes with length $\frac{q^m-1}{2}$.

\subsection{Some bounds of linear codes}

Similar to cyclic codes, negacyclic codes over finite fields have the following BCH bound.

\begin{lemma}\label{lem:2}\cite[Lemma 4]{KS90}
Let $\mathcal{C}$ be a negacyclic code of length $n$ over $\gf(q)$ and $\gamma \in  \gf(q^m)$ be a primitive 2n-th root of unity. Let $g(x)$ be the generator polynomial of $\mathcal{C}$. If there are integers $e$, $h$, $\delta$ with $\gcd(e,n)=1$ and $2\leq \delta \leq n$ such that
$$g(\gamma^{1+2eh})=g(\gamma^{1+2e(h+1)})=\cdots=g(\gamma^{1+2e(h+\delta-2)})=0,$$
then the minimum Hamming distance of $\mathcal{C}$ is at least $\delta$.
\end{lemma}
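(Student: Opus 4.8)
The plan is to run the classical Vandermonde argument underlying the BCH bound, adapted to the negacyclic setting where the prescribed zeros sit at \emph{odd} powers of a $2n$-th root of unity. First I would reduce the claim to showing that $\mathcal{C}$ contains no nonzero codeword of Hamming weight at most $\delta-1$. Since $g(x)$ is the generator polynomial, every codeword $c(x)$ is a multiple of $g(x)$, so the hypothesis forces
$$c(\gamma^{1+2ej})=0 \quad \text{for } j=h,h+1,\ldots,h+\delta-2,$$
giving $\delta-1$ consecutive vanishing conditions.

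Next, suppose for contradiction that there is a nonzero $c=(c_0,\ldots,c_{n-1})\in\mathcal{C}$ with support $\{i_1,\ldots,i_w\}$ of size $w\le\delta-1$. Writing $\gamma^{1+2ej}=\gamma\,(\gamma^{2e})^j$ and substituting into $c(\gamma^{1+2ej})=\sum_{l=1}^w c_{i_l}(\gamma^{1+2ej})^{i_l}$, I would absorb the position-dependent factor $\gamma^{i_l}$ by setting $y_l=c_{i_l}\gamma^{i_l}$ and $x_l=\gamma^{2e i_l}$, so that the vanishing conditions become the homogeneous linear system $\sum_{l=1}^w y_l x_l^{\,j}=0$ for $j=h,\ldots,h+w-1$ (keeping the first $w$ of the $\delta-1$ equations). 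The coefficient matrix is $\big(x_l^{\,h+r}\big)_{0\le r\le w-1,\,1\le l\le w}$, whose determinant equals $\big(\prod_l x_l^{\,h}\big)\prod_{l<l'}(x_{l'}-x_l)$ up to sign.

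The crux is to show this Vandermonde-type determinant is nonzero, i.e. that the $x_l$ are pairwise distinct and nonzero; this is the one step where the negacyclic, odd-power structure genuinely enters. Nonzeroness is immediate since $\gamma$ is a root of unity. For distinctness, $x_l=x_{l'}$ means $\gamma^{2e i_l}=\gamma^{2e i_{l'}}$; because $\gamma$ has order $2n$, this forces $2e(i_l-i_{l'})\equiv 0\pmod{2n}$, hence $e(i_l-i_{l'})\equiv 0\pmod{n}$, and the assumption $\gcd(e,n)=1$ then gives $i_l\equiv i_{l'}\pmod{n}$, so $i_l=i_{l'}$ since the positions lie in $[0,n-1]$. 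I expect this modular bookkeeping---correctly cancelling the factor of $2$ against the modulus $2n$ and invoking $\gcd(e,n)=1$---to be the only real obstacle; everything else is formal.

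Finally, with the determinant nonzero the homogeneous system forces $y_l=0$ for all $l$, and since $\gamma^{i_l}\neq 0$ this yields $c_{i_l}=0$, contradicting that $i_1,\ldots,i_w$ index the nonzero coordinates. Hence $\mathcal{C}$ has no nonzero codeword of weight below $\delta$, so the minimum Hamming distance of $\mathcal{C}$ is at least $\delta$, as claimed.
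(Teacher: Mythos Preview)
Your argument is correct: the Vandermonde reduction with $y_l=c_{i_l}\gamma^{i_l}$ and $x_l=\gamma^{2ei_l}$, together with the modular computation $2e(i_l-i_{l'})\equiv 0\pmod{2n}\Rightarrow i_l=i_{l'}$ using $\gcd(e,n)=1$, is exactly the standard BCH-bound proof adapted to odd powers of a $2n$-th root of unity. The paper does not supply its own proof of this lemma; it simply cites \cite[Lemma~4]{KS90}, whose argument is essentially the one you have written.
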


For the negacyclic BCH codes with length $\frac{q^m+1}{2}$, we have the following bound, which is much better than the bound of Lemma \ref{lem:2} when $\delta$ is getting large.

\begin{lemma}\label{lem:3}\cite[Theorem 4.3]{Pang18}
Let $n=\frac{q^m+1}{2}$,
 then the code $\mathcal{C}_{(q,n,\delta,0)}$ has minimum distance $d\geq 2\delta-1$.
\end{lemma}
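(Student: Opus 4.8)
The plan is to exploit the identity $2n=q^m+1$, which forces the defining set of $\mathcal{C}_{(q,n,\delta,0)}$ to be closed under negation modulo $2n$, and then to splice the two resulting blocks of odd zeros into a single consecutive run that is essentially twice as long as the one supplied by the designed distance. First I would record the elementary congruence $q^m=2n-1\equiv -1 \pmod{2n}$; in particular $-1$ is a power of $q$ modulo $2n$ (and $\ord_{2n}(q)=2m$ since $q^m\not\equiv 1$). By Lemma \ref{lemma1} this means $\mathcal{C}_{(q,n,\delta,0)}$ is LCD and its generator polynomial $g_{(q,n,\delta,0)}(x)$ is self-reciprocal, so its set of roots is closed under inversion.

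Next I would phrase this in terms of the defining set $T=\{\, j : \beta^{j} \text{ is a root of } g_{(q,n,\delta,0)}(x)\,\}$, which is a union of $q$-cyclotomic cosets modulo $2n$ and, by construction, contains the odd integers $1,3,5,\dots,2\delta-3$ (i.e.\ $1+2i$ for $0\le i\le \delta-2$). Because $T$ is closed under multiplication by $q$, it is closed under multiplication by $q^{m}\equiv -1$, hence under $j\mapsto -j$. Therefore $T$ also contains $-1,-3,\dots,-(2\delta-3)$, that is, the odd residues $2n-1,\,2n-3,\,\dots,\,2n-2\delta+3$.

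The key observation is that these two blocks abut across the wrap-around, since $(2n-1)+2\equiv 1\pmod{2n}$. Listing the exponents in increasing steps of $2$ gives
\[
2n-2\delta+3,\ 2n-2\delta+5,\ \dots,\ 2n-1,\ 1,\ 3,\ \dots,\ 2\delta-3,
\]
a set of $2(\delta-1)=2\delta-2$ consecutive odd residues modulo $2n$. Writing this run as $1+2e(h+j)$ with $e=1$, $h=n-\delta+1$, and $0\le j\le 2\delta-3$, and noting $\gcd(e,n)=1$, I would apply the negacyclic BCH bound of Lemma \ref{lem:2} to conclude $d\ge (2\delta-2)+1=2\delta-1$.

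The computation is routine; the only place needing care is the index bookkeeping for the two blocks and the check that they join without a gap at $2n-1\mapsto 1$. When $2\delta-2\le n$ the blocks are disjoint and yield exactly $2\delta-2$ consecutive odd zeros; when they would overlap, $T$ simply contains still more consecutive odd residues, so the bound $d\ge 2\delta-1$ follows a fortiori (in the degenerate extreme $T$ exhausts all odd residues and the code is trivial). I therefore expect no genuine obstacle: the one conceptual step is recognizing that the LCD/self-reciprocal structure doubles the effective designed distance, after which Lemma \ref{lem:2} does all the work.
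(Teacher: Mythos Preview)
Your argument is correct. The paper does not supply its own proof of this lemma but cites it from \cite{Pang18}; the approach you outline---using $q^m\equiv -1\pmod{2n}$ to deduce that the defining set is closed under negation, thereby doubling the run of consecutive odd zeros before invoking Lemma~\ref{lem:2}---is the standard and natural proof for this type of bound on antiprimitive BCH codes, and is essentially what the original reference does.
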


The following is a well-known bound for any code.

\begin{lemma}\cite[Lemma 6]{Rouayheb2007}\label{bound2}
Let $q$ be an odd prime power and $A_q(n,d)$ be the maximum number of codewords of a  code with length $n$ over $\gf(q)$ and minimum Hamming distance at least $d$. Let $q\geq 3$, $t=n-d+1$ and $r=\lfloor \min\{\frac{n-t}{2}, \frac{t-1}{q-2}\}\rfloor$. Then
\begin{equation*}
\begin{split}
A_q(n,d)\leq \frac{q^{t+2r}}{\sum_{i=0}^r\left(\begin{array}{cccc}
   t+2r  \\
     i  \\
\end{array}
\right)(q-1)^i}.
\end{split}
\end{equation*}
\end{lemma}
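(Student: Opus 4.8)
The plan is to reduce the claim to the classical sphere-packing (Hamming) bound by \emph{puncturing} an optimal code down to length exactly $t+2r$, where the radius-$r$ balls around the codewords become disjoint. Let $\C$ be a code over $\gf(q)$ of length $n$, minimum Hamming distance at least $d$, and with $|\C|=A_q(n,d)$ codewords. The first thing I would record is the identity $n-t=d-1$, immediate from $t=n-d+1$; thus the first quantity in the definition of $r$ is $(n-t)/2=(d-1)/2$, and the defining minimum guarantees $2r\leq d-1$, so that $s:=n-(t+2r)=(d-1)-2r\geq 0$. This is exactly the number of coordinates we can afford to delete.

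Next I would puncture $\C$ on any $s$ coordinates to obtain a code $\C'$ of length $t+2r$, and verify two points; this bookkeeping is the technical heart of the argument. First, the puncturing map is injective: two distinct codewords of $\C$ differ in at least $d$ positions, so they cannot agree on all $t+2r$ surviving coordinates, since otherwise they would differ only among the $s=d-1-2r<d$ deleted ones, which is impossible. Hence $|\C'|=|\C|=A_q(n,d)$. Second, the minimum distance can drop by at most the number of deleted coordinates, so $\C'$ has minimum distance at least $d-s=2r+1$; consequently the Hamming balls of radius $r$ centred at the codewords of $\C'$ are pairwise disjoint.

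I would then invoke the sphere-packing bound in $\gf(q)^{t+2r}$: the $|\C'|$ disjoint balls of radius $r$ each contain $\sum_{i=0}^{r}\binom{t+2r}{i}(q-1)^i$ points, and their union lies in a space of $q^{t+2r}$ points, giving
\[
A_q(n,d)\cdot\sum_{i=0}^{r}\binom{t+2r}{i}(q-1)^i\leq q^{t+2r},
\]
which is precisely the asserted inequality after dividing through.

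Finally, I would explain the role of the second quantity $(t-1)/(q-2)$ in the definition of $r$. The validity of the displayed inequality uses only $r\leq(n-t)/2$, so for correctness that term is not needed; rather, it selects the largest radius for which the choice is worthwhile. Indeed, it is designed so that the ball volume satisfies $\sum_{i=0}^{r}\binom{t+2r}{i}(q-1)^i\geq q^{2r}$, equivalently so that the resulting bound is at least as strong as the Singleton bound $A_q(n,d)\leq q^{t}$; taking $r$ any larger would make the ball volume fall below $q^{2r}$ and yield a weaker bound, while at the boundary the estimate is met with equality by MDS codes, a convenient sanity check. The main obstacle is therefore not a deep estimate but the careful verification that puncturing to length $t+2r$ simultaneously preserves all $A_q(n,d)$ codewords \emph{and} keeps the minimum distance at least $2r+1$, together with the justification that $r=\lfloor\min\{(n-t)/2,\,(t-1)/(q-2)\}\rfloor$ is the sharpest admissible radius.
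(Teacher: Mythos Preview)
The paper does not supply its own proof of this lemma; it is quoted verbatim from \cite{Rouayheb2007} and used later only as a black box (in the proof of Theorem~\ref{Theorem-05}). So there is no in-paper argument to compare against.

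Your puncturing argument is correct and is in fact the standard derivation of this bound. The key observation you make---that $s:=n-(t+2r)=d-1-2r\geq 0$ because $r\leq (n-t)/2$, that deleting $s$ coordinates is injective since $s<d$, and that the punctured code retains minimum distance at least $d-s=2r+1$---is exactly what is needed to apply the Hamming bound at length $t+2r$. Your final paragraph is also accurate in spirit: the inequality itself only uses $r\leq (n-t)/2$, while the second cap $r\leq (t-1)/(q-2)$ is an optimisation choice rather than a hypothesis required for validity; one should perhaps not insist on the precise claim that it forces the ball volume to be at least $q^{2r}$ without a separate verification, but this does not affect the proof of the stated inequality.
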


\subsection{Some known and basic results}

Let $\mathbb{Z}_N$ denote the ring of integers modulo $N$, which is a positive integer. Let $i$ be an integer with $0\leq i \leq N-1$. The {\it $q$-cyclotomic coset} of $i$ modulo $N$ is defined by
$$C_i^{(q,N)}=\{i, iq, iq^2, \ldots, iq^{\ell_{i-1}}\} \bmod N \subseteq \mathbb{Z}_N, $$
where $\ell_i$ is the smallest positive integer such that $i\equiv iq^{\ell_i} \bmod N$, and is the size of the $q$-cyclotomic coset. The smallest integer in $C_i^{(q, N)}$ is called the {\it coset leader} of $C_i^{(q, N)}$.

\begin{lemma}\label{lem:7}\cite{Liuv2, Yan2018}
Let $q$ be an odd prime and $m\geq 2$. Let $\phi_1$, $\phi_2$ and $\phi_3$ be the first three largest coset leaders modulo $q^m+1$, respectively. Then  $\phi_1=\frac{q^m+1}{2}$ for any $m$, and $\phi_2=\frac{(q-1)(q^m+1)}{2(q+1)}$ and $\phi_3=\frac{(q-1)(q^m-2q^{m-2}-1)}{2(q+1)}$ for odd $m$. Moreover, $|C_{\phi_1}^{(q,2n)}|=1$ for any $m$, and $|C_{\phi_2}^{(q,2n)}|=2$ and $|C_{\phi_3}^{(q,2n)}|=2m$ for odd $m$.
\end{lemma}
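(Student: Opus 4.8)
The plan is to work modulo $N=q^m+1$ (so $N=2n$) and exploit one structural fact: since $q^m\equiv-1\pmod N$, multiplication by $q^m$ sends $a\mapsto N-a$, so the $q$-cyclotomic coset of $a$ is
\[
C_a^{(q,N)}=\{\,aq^j\bmod N,\ N-(aq^j\bmod N)\ :\ 0\le j\le m-1\,\}.
\]
Hence every coset leader $\ell$ satisfies $\ell\le N/2$, and an integer $a\le N/2$ is a leader if and only if $aq^j\bmod N\in[a,\,N-a]$ for all $j\ge1$; writing $x=a/N$ and using $aq^j\bmod N=N\{q^jx\}$ (here $\{\cdot\}$ is the fractional part), this says $\{q^jx\}\in[x,1-x]$ for all $j\ge1$. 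I would first dispose of $\phi_1$: no $a>N/2$ is a leader (its partner $N-a<a$ lies in $C_a$), whereas $a=\frac{q^m+1}{2}$ satisfies $aq\equiv a$ because $q$ is odd and $q^m\equiv-1$, so $C_a=\{a\}$. Thus $\phi_1=\frac{q^m+1}{2}$, $|C_{\phi_1}^{(q,N)}|=1$, and every further leader is strictly below $N/2$.

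For $\phi_2$ (now $m$ odd, so $q+1\mid q^m+1$) I would verify membership and size directly: with $\phi_2=\frac{q-1}{2}\cdot\frac{q^m+1}{q+1}$ one gets $\phi_2q-(N-\phi_2)=(q^m+1)\frac{q^2-2q-3}{2(q+1)}=(q^m+1)\frac{q-3}{2}$, a multiple of $N$, so $\phi_2q\equiv N-\phi_2$ and $C_{\phi_2}^{(q,N)}=\{\phi_2,\,N-\phi_2\}$ has size $2$. To see $\phi_2$ is the \emph{second} largest leader it then suffices to rule out every $a$ with $\phi_2<a<N/2$, and here a single shift already works. Writing $x=\tfrac12-\epsilon$, the range $\phi_2/N<x<\tfrac12$ is exactly $0<\epsilon<\frac1{q+1}$, and $\{qx\}=\{\tfrac12-q\epsilon\}$ equals $\tfrac12-q\epsilon<x$ when $\epsilon\le\frac1{2q}$ and equals $\tfrac32-q\epsilon>1-x$ when $\epsilon>\frac1{2q}$; in both cases $\{qx\}\notin[x,1-x]$, so $a$ is not a leader. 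This gives $\phi_2=\frac{(q-1)(q^m+1)}{2(q+1)}$.

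For $\phi_3=\frac{(q-1)(q^m-2q^{m-2}-1)}{2(q+1)}$ I would again check by direct computation that it is a leader and that its orbit $\{\pm\phi_3q^j\bmod N:0\le j\le m-1\}$ consists of $2m$ distinct residues, i.e.\ that $\phi_3q^j\not\equiv\pm\phi_3\pmod N$ for $0<j<m$; this yields $|C_{\phi_3}^{(q,N)}|=2m$. The substantive part is to show no leader lies in the open interval $(\phi_3,\phi_2)$. The single-shift argument fails because this band $[x,1-x]$ is now too wide, so one must follow the orbit $y_j=\{q^jx\}$ for several steps. Near the top the mechanism is clean: since $\{qc\}=1-c$ and $\{q(1-c)\}=c$ for $c=\phi_2/N$, writing $x=c-\delta$ gives $y_1=(1-c)-q\delta$ and $y_2=\{c-q^2\delta\}=c-q^2\delta<x$ as soon as $q^2\delta\le c$, which exits the band below and eliminates $a$.

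For $a$ farther from $\phi_2$ (that is, $\delta>c/q^2$) the orbit wraps past an integer, after which the perturbation $q^j\delta$ keeps growing geometrically until $y_j$ leaves $[x,1-x]$ on one side or the other (either side suffices, since leaving below means $y_j<x$ and leaving above means $N-Ny_j<a$). Making this uniform over the whole interval forces a case analysis on the base-$q$ digits of $a$ and on where the successive wrap-arounds occur. I expect this digit bookkeeping, rather than any single inequality, to be the hard core of the argument, as in the treatments of \cite{Liuv2,Yan2018}; completing it shows $\phi_3$ is the third largest leader and finishes the lemma.
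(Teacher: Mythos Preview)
The paper does not give its own proof of this lemma: it is quoted verbatim from \cite{Liuv2,Yan2018} and used as a black box, so there is no in-paper argument to compare your proposal against.

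On its merits, your treatment of $\phi_1$ and $\phi_2$ is correct and clean. The observation that $q^m\equiv-1\pmod N$ forces $N-a\in C_a^{(q,N)}$, hence every leader is at most $N/2$, immediately gives $\phi_1=N/2$ with $|C_{\phi_1}|=1$. Your fractional-part argument for $\phi_2$ is also sound: writing $x=a/N=\tfrac12-\epsilon$ with $0<\epsilon<\tfrac1{q+1}$, one checks that $\{qx\}$ lands outside $[x,1-x]$ in both sub-cases $\epsilon\le\tfrac1{2q}$ and $\epsilon>\tfrac1{2q}$, so a single $q$-shift already disqualifies every $a$ in $(\phi_2,N/2)$. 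The size computation $|C_{\phi_2}|=2$ via $\phi_2 q\equiv N-\phi_2$ is likewise correct.

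Where your proposal is genuinely incomplete is $\phi_3$. You correctly anticipate that the single-shift trick fails and that one must track several iterates $y_j=\{q^jx\}$, but the passage ``the perturbation $q^j\delta$ keeps growing geometrically until $y_j$ leaves $[x,1-x]$'' is not an argument: after a wrap-around the sign and magnitude of the deviation from the reference orbit can change in either direction, and without controlling \emph{which} integer is crossed at each step you cannot conclude the orbit eventually exits the band rather than cycling inside it. The actual proofs in \cite{Liuv2,Yan2018} handle this by an explicit base-$q$ digit analysis of all $a\in(\phi_3,\phi_2)$, locating for each digit pattern a specific shift that produces a smaller coset element; your sketch names this step but does not carry it out, and there is no softer route that avoids it.
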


\begin{lemma}\label{lem:8}\cite[Theorem 13]{Ding2015}
Let $q$ be an odd prime and $m$ be an integer. Then the largest coset leader modulo $q^m-1$ is $(q-1)q^{m-1}-1$. Moreover, the size of the coset containing this largest coset leader is $m$.
\end{lemma}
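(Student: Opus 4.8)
The statement asserts two facts about $q$-cyclotomic cosets modulo $N = q^m - 1$: that the largest coset leader is $(q-1)q^{m-1} - 1$, and that the coset containing it has size exactly $m$. Let me sketch how I would establish both.

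For the size claim, I would work with the base-$q$ representation. Every integer $i$ with $0 \le i \le q^m - 2$ can be written uniquely as $i = \sum_{j=0}^{m-1} a_j q^j$ with $0 \le a_j \le q-1$, and multiplication by $q$ modulo $q^m - 1$ acts as a cyclic shift of the digit string $(a_0, a_1, \ldots, a_{m-1})$. The size $\ell_i$ of the cyclotomic coset $C_i^{(q,N)}$ is therefore the period of this digit string under cyclic rotation, hence a divisor of $m$. The candidate leader $(q-1)q^{m-1} - 1$ has base-$q$ digits $(q-2, q-1, q-1, \ldots, q-1)$ — that is, one digit equal to $q-2$ and the remaining $m-1$ digits equal to $q-1$. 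Since exactly one digit differs from the others, no nontrivial rotation fixes the string, so its period is $m$; thus $|C_{\phi_1}^{(q,N)}| = m$.

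For the "largest coset leader" claim, the plan is to show that any coset leader $i$ must satisfy $i \le (q-1)q^{m-1} - 1$, and that $(q-1)q^{m-1}-1$ is itself a coset leader. Being a coset leader means $i$ is the smallest element among all cyclic rotations of its digit string. I would argue as follows. First, $(q-1)q^{m-1}-1$ is a coset leader: its rotations are obtained by moving the single $q-2$ digit through the positions; placing $q-2$ in the top position $q^{m-1}$ yields exactly this value, and I would check directly that this is the minimum over the $m$ rotations (moving the small digit to a lower position only increases the value, because the high-order positions then all carry $q-1$). Second, for the upper bound, suppose $i > (q-1)q^{m-1}-1$ is a coset leader; then its leading digit $a_{m-1}$ must be $q-1$ (otherwise $i < (q-1)q^{m-1} \le (q-1)q^{m-1}-1$ fails only in the boundary case, which I would handle separately). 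A string whose top digit is $q-1$ can be rotated so that the $q-1$ moves to a lower-order position unless \emph{all} digits equal $q-1$, but the all-$(q-1)$ string represents $q^m - 1 \equiv 0$, which is not a valid leader in $[1, q^m-2]$. So some digit is strictly less than $q-1$; rotating that smaller digit into the top position produces a strictly smaller value in the same coset, contradicting minimality of $i$. This forces $i \le (q-1)q^{m-1}-1$.

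The main obstacle is the careful bookkeeping in the upper-bound argument: I must ensure the comparison of rotations handles ties and boundary digits correctly, and in particular rule out strings that are "almost" all $q-1$ but whose rotations happen to coincide in the leading digits. The cleanest route is to phrase the whole comparison lexicographically on the reversed digit string (most significant digit first), since a coset leader is precisely the rotation that is lexicographically smallest in the most-significant-first ordering; with that framing, the claim reduces to identifying the rotation-minimal string of length $m$ over the alphabet $\{0, \ldots, q-1\}$ subject to representing a value in $[1, q^m-2]$, and the digit pattern $(q-2, q-1, \ldots, q-1)$ read from the top falls out naturally. I expect the size computation to be routine once the digit-shift description is in place; the delicate part is the extremality proof.
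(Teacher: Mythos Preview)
The paper does not prove this lemma: it is quoted verbatim from \cite[Theorem 13]{Ding2015} and used as a black box. So there is no in-paper argument to compare against.

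Your proposed proof is correct, and it is essentially the standard argument (and matches the spirit of the $q$-adic framework the paper sets up in Lemma~\ref{lem:5}). The key observations---that multiplication by $q$ modulo $q^m-1$ cyclically shifts the base-$q$ digit string, that the digits of $(q-1)q^{m-1}-1$ are a single $q-2$ in the top position and $q-1$ elsewhere, that this string has full period $m$, and that any $i\ge (q-1)q^{m-1}$ with $i\le q^m-2$ has top digit $q-1$ but some smaller digit elsewhere which can be rotated to the top---are all correct and suffice. The ``obstacle'' you flag about ties is not a real difficulty here: once some digit is strictly below $q-1$, the rotation placing it in position $m-1$ has value strictly less than $(q-1)q^{m-1}\le i$, regardless of the lower-order digits, so no delicate lexicographic tie-breaking is needed.

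One small notational slip: you write the digit string of $(q-1)q^{m-1}-1$ as $(q-2,q-1,\ldots,q-1)$ immediately after introducing the order $(a_0,a_1,\ldots,a_{m-1})$, which would put $a_0=q-2$; in fact $a_{m-1}=q-2$ and $a_0=\cdots=a_{m-2}=q-1$. You clearly have the right picture (you later say ``placing $q-2$ in the top position $q^{m-1}$''), so this is just a typo to clean up.
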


The following lemma documents all the $q$-cyclotomic coset leaders modulo $q-1$ and $q+1$.

\begin{lemma}\label{lem:9}
Let $q$ be an odd prime. If $n=q+1$, then a nonnegative integer $i \leq q+1$ modulo $n$ is a coset leader if and only if $0\leq i\leq \frac{q+1}{2}$.
If $n=q-1$, then every nonnegative integer $i \leq q-2$ modulo $n$ is a coset leader.
\end{lemma}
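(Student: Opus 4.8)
The plan is to exploit the two congruences $q \equiv 1 \pmod{q-1}$ and $q \equiv -1 \pmod{q+1}$, which completely control the action of multiplication-by-$q$ on $\Z_{q-1}$ and $\Z_{q+1}$ respectively, and hence pin down the structure of every $q$-cyclotomic coset. Once the cosets are described explicitly, reading off the coset leaders is immediate.

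For the case $n = q-1$, I would first note that $q \equiv 1 \pmod{q-1}$ gives $iq \equiv i \pmod{q-1}$ for every $i$, so each coset $C_i^{(q,q-1)}$ equals the singleton $\{i\}$. Consequently every element of $\Z_{q-1}$ is its own coset leader, which is exactly the assertion for this case.

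For the case $n = q+1$, the key observation is that $q \equiv -1 \pmod{q+1}$ gives $iq \equiv -i \equiv q+1-i \pmod{q+1}$, while $iq^2 \equiv i$. Hence, for $i \neq 0$, the coset is $C_i^{(q,q+1)} = \{\, i,\ q+1-i \,\}$, which collapses to a singleton precisely when $i = q+1-i$, i.e. $i = \frac{q+1}{2}$ (an integer because $q$ is odd). Together with $C_0^{(q,q+1)} = \{0\}$, this yields three types of cosets: $\{0\}$, the fixed point $\{\frac{q+1}{2}\}$, and the pairs $\{\, i,\ q+1-i \,\}$ with $1 \leq i \leq \frac{q-1}{2}$. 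In each pair the coset leader is $\min\{i,\ q+1-i\} = i$, since $i < q+1-i$ throughout this range, so the full list of leaders is $\{0,1,\ldots,\frac{q-1}{2}\} \cup \{\frac{q+1}{2}\} = \{0,1,\ldots,\frac{q+1}{2}\}$. This is precisely the claimed range $0 \leq i \leq \frac{q+1}{2}$, and the ``only if'' direction follows since every residue outside this set is the larger element $q+1-i$ of some pair and therefore not a leader.

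The argument is elementary and I do not anticipate a genuine obstacle; the only point requiring care is the bookkeeping at the two fixed points of the involution $i \mapsto q+1-i$ on $\Z_{q+1}$, namely $i = 0$ and $i = \frac{q+1}{2}$, which must be separated out as singleton cosets before reading off the minimum in each remaining pair. The parity of $q$ is used exactly once, to guarantee that $\frac{q+1}{2}$ is an integer so that this second fixed point actually lies in the index set.
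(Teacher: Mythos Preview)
Your proof is correct and follows essentially the same approach as the paper: both arguments identify the cosets modulo $q+1$ as $\{0\}$ together with the pairs $\{i,\,q+1-i\}$ for $1\le i\le \frac{q+1}{2}$, from which the coset leaders are read off directly. You give more detail (explicitly invoking $q\equiv -1\pmod{q+1}$ and handling the fixed points) and also supply the $n=q-1$ case that the paper omits as obvious, but the underlying idea is the same.
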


\begin{proof}
We prove the conclusions of this lemma only for the case that $n=q+1$, and omit the proof of the conclusions for $n=q-1$, which are  obvious.

It is easy to see that the $q$-cyclotomic cosets modulo $q+1$
are the following
$$
\{0\} \mbox{ and } \{i, q+1-i\}, \ 1 \leq i \leq  \frac{q+1}{2}.
$$
By the definition of the coset leaders, we have the desired result.
\end{proof}

Combining Lemmas \ref{lem:2}, \ref{lem:3}, \ref{lem:9} and the sphere packing bound, we obtain the following results.

\begin{proposition}
Let $n=\frac{q-1}{2}$ and $2\leq \delta \leq \frac{q-1}{2} $ be an integer.
Then the negacyclic BCH code $\C_{(q,n,\delta,0)}$ has parameters $\left[n,n-(\delta-1),\delta\right].$
 All of these negacyclic BCH codes are MDS.
\end{proposition}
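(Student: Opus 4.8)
The plan is to exploit the fact that for $n=\frac{q-1}{2}$ the splitting field collapses to $\gf(q)$ itself, so that every minimal polynomial appearing in (\ref{eqn-BCHdefiningSet}) is linear and the whole statement reduces to a degree count plus two matching bounds. First I would compute $\ell=\ord_{2n}(q)$. Since $2n=q-1$ and $q\equiv 1 \pmod{q-1}$, the order of $q$ modulo $q-1$ is $1$, so $\ell=1$ and $\alpha$ is already a primitive element of $\gf(q)$. Consequently $\beta=\alpha^{(q-1)/(q-1)}=\alpha$ is a primitive $(q-1)$-th root of unity lying in $\gf(q)$, and the odd powers $\beta^{1},\beta^{3},\ldots,\beta^{q-2}$ are precisely the roots of $x^{n}+1$.

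Next I would pin down the generator polynomial and hence the dimension. Because $q\equiv 1 \pmod{q-1}$, the $q$-cyclotomic coset of every residue modulo $q-1$ is a singleton; this is exactly the $n=q-1$ case of Lemma \ref{lem:9}. Therefore each $\m_{\beta^{1+2i}}(x)$ equals the linear polynomial $x-\beta^{1+2i}$, and
\[
g_{(q,n,\delta,0)}(x)=\prod_{i=0}^{\delta-2}\left(x-\beta^{1+2i}\right).
\]
The exponents $1,3,\ldots,2\delta-3$ are odd and pairwise distinct modulo $q-1$, since the hypothesis $\delta\le\frac{q-1}{2}$ forces $2\delta-3\le q-4<q-1$. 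Hence $\deg g_{(q,n,\delta,0)}=\delta-1$, and the dimension of $\C_{(q,n,\delta,0)}$ is $k=n-(\delta-1)$.

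It then remains to squeeze the minimum distance between two matching bounds. For the lower bound I would apply the negacyclic BCH bound of Lemma \ref{lem:2} with $e=1$ and $h=0$: the $\delta-1$ consecutive roots $\beta^{1+2\cdot 0},\ldots,\beta^{1+2(\delta-2)}$ of $g_{(q,n,\delta,0)}(x)$ together with $\gcd(1,n)=1$ yield $d\ge\delta$. For the upper bound I would invoke the Singleton bound, $d\le n-k+1=(\delta-1)+1=\delta$. The two estimates coincide, so $d=\delta$ and the code meets the Singleton bound, i.e. it is MDS.

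The argument is short precisely because the observation $\ell=1$ does the heavy lifting, and I do not expect a genuine obstacle. The only points requiring care are verifying that the defining set really consists of $\delta-1$ distinct odd residues modulo $q-1$ (which is where the range $\delta\le\frac{q-1}{2}$ is used) and checking the consecutiveness hypothesis of Lemma \ref{lem:2} so that the BCH lower bound is legitimately $\delta$ and not smaller; these are the two bookkeeping checks on which the whole proof hinges.
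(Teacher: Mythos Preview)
Your proposal is correct and follows essentially the same route as the paper: exploit Lemma~\ref{lem:9} (every residue modulo $q-1$ is its own coset leader, equivalently $\ell=1$) to conclude that each minimal polynomial is linear and hence the dimension is $n-(\delta-1)$, then sandwich the minimum distance between the negacyclic BCH bound of Lemma~\ref{lem:2} and the Singleton bound. The paper's one-line justification cites Lemmas~\ref{lem:2}, \ref{lem:3}, \ref{lem:9} and the sphere-packing bound jointly for both propositions; for this proposition the relevant ingredients are exactly Lemma~\ref{lem:9}, Lemma~\ref{lem:2}, and the Singleton bound, which is precisely what you use.
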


\begin{proposition}
 Let $n=\frac{q+1}{2}$ and $2\leq \delta \leq \left\lfloor\frac{q+3}{4} \right\rfloor$ be an integer.
Then the negacyclic BCH code $\C_{(q,n,\delta,0)}$ has parameters $\left[n, n-2\left(\delta-1\right) , 2\delta-1 \right].$
 All of these negacyclic BCH codes are MDS.
\end{proposition}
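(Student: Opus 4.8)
The plan is to read off the dimension of $\C_{(q,n,\delta,0)}$ from the $q$-cyclotomic coset structure modulo $2n=q+1$, and then to sandwich the minimum distance between a BCH-type lower bound and the Singleton upper bound, so that the two meet and force the MDS property. First I would describe the defining set $T=\bigcup_{i=0}^{\delta-2}C_{1+2i}^{(q,q+1)}$, the union of the $q$-cyclotomic cosets (taken modulo $q+1$) of the $\delta-1$ odd integers $1,3,\ldots,2\delta-3$. Since $q\equiv -1\pmod{q+1}$, multiplication by $q$ sends $j\mapsto q+1-j$, so by Lemma~\ref{lem:9} each coset is the pair of odd residues $C_{1+2i}^{(q,q+1)}=\{1+2i,\,q-2i\}$.

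Next I would compute the dimension by showing $|T|=2(\delta-1)$. The ``small'' representatives $1+2i$ run through $1,3,\ldots,2\delta-3$ and the ``large'' representatives $q-2i$ run through $q,q-2,\ldots,q-2\delta+4$. The hypothesis $\delta\le\lfloor(q+3)/4\rfloor$, i.e. $4\delta\le q+3$, gives both $4i\le 4(\delta-2)\le q-5<q-1$, so that $1+2i<q-2i$ and each coset has size exactly $2$, and $2\delta-3<q-2\delta+4$, so that the small block and the large block are disjoint. Hence the $\delta-1$ cosets are pairwise disjoint and each of size $2$, whence $\deg g_{(q,n,\delta,0)}=|T|=2(\delta-1)$ and $\dim\C_{(q,n,\delta,0)}=n-2(\delta-1)$, as claimed.

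For the distance I would exploit the symmetry of the root set. The zeros are $\beta^{\pm(1+2i)}$ for $0\le i\le\delta-2$, i.e. the odd exponents $\{\pm1,\pm3,\ldots,\pm(2\delta-3)\}$, which assemble into a single block of $2(\delta-1)$ consecutive odd integers $3-2\delta,\,5-2\delta,\,\ldots,\,2\delta-3$. Writing these as $1+2h$ with $h=1-\delta,\ldots,\delta-2$ and noting that $2(\delta-1)<n$ rules out any wrap-around collision among the odd residues modulo $q+1$, the BCH bound of Lemma~\ref{lem:2} with $e=1$ yields $d\ge 2\delta-1$ (alternatively this is immediate from Lemma~\ref{lem:3} with $m=1$). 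Finally, the Singleton bound gives $d\le n-\dim\C_{(q,n,\delta,0)}+1=2\delta-1$, so $d=2\delta-1$ and every such code is MDS.

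The only delicate point, and the step I expect to be the main (though routine) obstacle, is the coset bookkeeping modulo $q+1$: one must check simultaneously that the $\delta-1$ cosets stay disjoint and of full size $2$, and that the symmetric root set $\{\pm(1+2i)\}$ closes up into one uninterrupted arc of consecutive odd exponents, so that the BCH bound delivers exactly $2\delta-1$ rather than the weaker value $\delta$ coming from the naive run $1,3,\ldots,2\delta-3$ alone. Both facts hinge on the single inequality $4\delta\le q+3$ supplied by the hypothesis $\delta\le\lfloor(q+3)/4\rfloor$, and I would verify the sharpness of this threshold as part of confirming that the stated range of $\delta$ is exactly what makes the argument go through.
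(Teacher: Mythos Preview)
Your argument is correct and follows essentially the same route as the paper: use the $q$-cyclotomic coset structure modulo $q+1$ from Lemma~\ref{lem:9} to determine the dimension, invoke the BCH-type lower bound (Lemma~\ref{lem:2} or the $m=1$ case of Lemma~\ref{lem:3}) to get $d\ge 2\delta-1$, and then cap with the Singleton bound to force equality and the MDS property. The paper states this as a direct consequence of Lemmas~\ref{lem:2}, \ref{lem:3}, \ref{lem:9} and the ``sphere packing bound'' (this last is almost certainly a slip for the Singleton bound, since sphere packing cannot separate $d=2\delta-1$ from $d=2\delta$ here); your explicit verification that the $\delta-1$ cosets stay disjoint and of size $2$ under the hypothesis $4\delta\le q+3$ is exactly the bookkeeping the paper leaves implicit.
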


Let $c(x)$ be a polynomial in $\gf(q)[x]/( x^n+1)$. Define an isomorphic map
\begin{equation*}
\begin{split}
\varphi\,\,:\,\, \gf(q)[x]/( x^n&+1)\,\,\to \,\, \gf(q)[x]/(x^n-1)\\
&c(x)\,\,\,\,\mapsto \,\,c(-x).
\end{split}
\end{equation*}
Then we have the following results.

\begin{lemma}\label{lem:12}
Let $\varphi$ be defined as above. Let $n$ be odd and $\C$ be a negacyclic code of length $n$ over $\gf(q)$, then $\C$ and $ \varphi(\C)$ have the same parameters.
\end{lemma}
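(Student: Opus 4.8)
The plan is to exhibit an explicit $\gf(q)$-linear isomorphism between $\C$ and $\varphi(\C)$ that preserves Hamming weight, which immediately forces the two codes to share the same length, dimension, and minimum distance. The map $\varphi$ sends $c(x)\mapsto c(-x)$, so on the level of coefficient vectors it acts as $(c_0,c_1,c_2,\dots,c_{n-1})\mapsto(c_0,-c_1,c_2,-c_3,\dots)$, i.e. it negates the coefficients in the odd-indexed positions. First I would verify that $\varphi$ is a well-defined ring isomorphism from $\gf(q)[x]/(x^n+1)$ to $\gf(q)[x]/(x^n-1)$: since $n$ is odd, substituting $-x$ sends $x^n+1$ to $(-x)^n+1 = -x^n+1 = -(x^n-1)$, so the ideal $(x^n+1)$ is carried onto $(x^n-1)$ and $\varphi$ descends to a bijective ring homomorphism between the quotients. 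Because $\varphi$ is moreover $\gf(q)$-linear, it carries the ideal $\C$ of $\gf(q)[x]/(x^n+1)$ onto an ideal $\varphi(\C)$ of $\gf(q)[x]/(x^n-1)$, which is therefore a cyclic code of the same length $n$.

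Next I would observe that the weight-preserving property is the crux. On coefficient vectors $\varphi$ only multiplies each coordinate by $\pm 1$, and multiplication by a nonzero scalar never turns a nonzero field element into zero nor vice versa. Hence the support of $c(x)$ equals the support of $c(-x)$, so $\wt(c(x)) = \wt(\varphi(c(x)))$ for every $c(x)\in\C$. A bijective linear map that preserves the weight of every codeword automatically preserves the dimension (as it is a vector-space isomorphism onto its image) and the minimum distance (since the minimum distance of a linear code equals the minimum weight of its nonzero codewords, and $\varphi$ matches nonzero codewords with nonzero codewords of equal weight). This yields that $\C$ and $\varphi(\C)$ have identical parameters $[n,k,d]$.

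The only genuine subtlety, and the step I expect to require the most care, is confirming that the oddness of $n$ is exactly what makes $\varphi$ land in the cyclic quotient $\gf(q)[x]/(x^n-1)$ rather than back in $\gf(q)[x]/(x^n+1)$; for even $n$ one would have $(-x)^n+1 = x^n+1$ and the substitution would fix the ideal instead of swapping it, so the hypothesis cannot be dropped. Everything else is routine verification that $\varphi$ respects addition and multiplication modulo the relevant ideals and is its own inverse up to the analogous substitution $c(x)\mapsto c(-x)$ in the other direction. Once the ring-isomorphism and weight-preservation facts are in place, the statement follows without further computation.
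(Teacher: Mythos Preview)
The paper states this lemma without proof; it simply introduces $\varphi$ as an ``isomorphic map'' and records the lemma as a basic fact. Your argument is correct and is precisely the standard justification one would supply: the oddness of $n$ makes $x\mapsto -x$ carry $(x^n+1)$ onto $(x^n-1)$, giving a ring isomorphism between the two quotients, and on coefficient vectors the map is a monomial transformation (sign changes in the odd positions), hence weight-preserving. Nothing is missing.
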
 

\subsection{The $q$-weight of integers}
Let $s$ be an integer and the $q$-adic expansion of $s$ be $s=\sum_{i=0}^{m-1}s_iq^i$, where $0\leq s_i\leq q-1$. Define the {\it $q$-weight} of $s$ to be $w_q(s)=\sum_{i=0}^{m-1}s_i$ and the {\it sequence} of $s$ to be
$$\overline{s}=(s_{m-1},s_{m-2},\ldots,s_1,s_0).$$ 
For two positive integers $A$ and $B$ with $\overline{A}=(a_{m-1},a_{m-2},\ldots,a_{0})$ and $\overline{B}=(b_{m-1},b_{m-2},\ldots,b_{0})$, we write $\overline{A}>\overline{B}$ and say that $\overline{A}$ is greater than $\overline{B}$  if there exists an integer $0\leq i\leq m-1$ such that $a_i> b_i$ and $a_j=b_j$ for $j\in [i+1,m-1]$, and write $\overline{A}=\overline{B}$ and 
say that $\overline{A}$ equals $\overline{B}$  if $a_i=b_i$ for all $i \in [0,m-1]$. It is clear that $A>B$ if and only if $\overline{A}>\overline{B}$ and $A=B$ if and only if  $\overline{A}=\overline{B}$. 
Let $1\leq j\leq m$, it is easily seen that the sequence of $q^js \bmod {(q^m-1)}$ is
$$\overline{q^js \bmod {(q^m-1)}}=(s_{m-j-1},s_{m-j-2},\ldots,s_{m-j+1},s_{m-j}),$$
which is called the \emph{circular $j$-left-shift} of $(s_{m-1}, \ldots, s_0)$,  
where the subscript of each coordinate is regarded as an integer modulo $m$. With the preparations above, we have the following result, which is straightforward by definition.

\begin{lemma}\label{lem:5}
Let $0<i<q^m-1$ and $\overline{i}=(i_{m-1},i_{m-2},\ldots, i_0)$. Then $i$ is a coset leader modulo $q^m-1$ if and only if the circular 
$j$-left-shift of $(i_{m-1},i_{m-2}, \ldots,  i_0)$ is grater than or equal to $(i_{m-1},i_{m-2}, \ldots, i_0)$ for each $0 \leq j \leq m-1$.
\end{lemma}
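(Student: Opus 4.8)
The plan is to unwind the definition of coset leader into a family of inequalities indexed by the shift amount $j$, and then translate each inequality from the ordinary integer order into the sequence order, using the two facts already recorded in the paragraph preceding the lemma: first, that $A \le B$ if and only if $\overline{A} \le \overline{B}$ (so by trichotomy $A = B \iff \overline{A} = \overline{B}$ and $A < B \iff \overline{A} < \overline{B}$); second, that the sequence of $q^j i \bmod (q^m-1)$ is precisely the circular $j$-left-shift of $\overline{i}$.

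First I would recall that the $q$-cyclotomic coset of $i$ modulo $q^m-1$ is
$$C_i^{(q,\,q^m-1)} = \{\, q^j i \bmod (q^m-1) : 0 \le j \le m-1 \,\},$$
since $q^m \equiv 1 \pmod{q^m-1}$ forces the exponents to cycle with period dividing $m$. By definition, $i$ is the coset leader exactly when it is the smallest element of this set, that is, when
$$i \le q^j i \bmod (q^m-1) \quad \text{for every } 0 \le j \le m-1.$$
The value $j = 0$ gives the trivial equality $i = i$, so only $1 \le j \le m-1$ carries content, but retaining $j = 0$ in the range is harmless.

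Next I would apply the order correspondence term by term. Fix $j$ with $0 \le j \le m-1$ and write $\overline{i} = (i_{m-1}, \ldots, i_0)$. The discussion above the lemma identifies the sequence of $q^j i \bmod (q^m-1)$ with the circular $j$-left-shift $(i_{m-j-1}, \ldots, i_{m-j})$ of $\overline{i}$, subscripts read modulo $m$. Because integer order matches sequence order, the inequality $i \le q^j i \bmod (q^m-1)$ holds if and only if $\overline{i} \le \overline{q^j i \bmod (q^m-1)}$, i.e. if and only if the circular $j$-left-shift of $\overline{i}$ is greater than or equal to $\overline{i}$. Conjoining this equivalence over all $j$ gives the stated result.

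The only points needing care — and the closest thing here to an obstacle — are bookkeeping rather than conceptual. One must take the $q$-adic expansion of $q^j i \bmod (q^m-1)$ to the full length $m$, padding with leading zeros where necessary, so that it genuinely has $m$ coordinates and the shift formula applies verbatim; the hypothesis $0 < i < q^m-1$ guarantees $q^j i \bmod (q^m-1) \neq 0$, so each reduced residue is an honest length-$m$ sequence and no degenerate comparison arises. With this normalization in place, the equivalence follows immediately from the two quoted facts, which is why the statement is essentially straightforward by definition.
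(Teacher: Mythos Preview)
Your proposal is correct and matches the paper's approach: the paper does not give a proof at all, remarking only that the result is ``straightforward by definition,'' and your argument is precisely the unwinding of that definition using the two facts recorded in the paragraph above the lemma. Your care about padding to length $m$ and excluding the zero residue is appropriate and consistent with the paper's conventions.
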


\section{Negacyclic BCH codes with length $\frac{q^m-1}{2}$}\label{sec-MDSqeven}

In this section, we always let $m\geq 2$, $n=\frac{q^m-1}{2}$ and study the negacyclic BCH codes over $\gf(q)$ with length $n$. Throughout this section, whenever we say ``$x$ is a coset leader", we mean that ``$x$ is a coset leader modulo $q^m-1$". That is to say,  we omit the phrase ``modulo $q^m-1$". We start with the following lemmas, which will be useful for us to calculate the parameters of this family of negacyclic BCH codes with large dimensions.

\begin{lemma}\label{lem:13}\cite{Liu17}
 Let $m\geq 3$ be odd and $1\leq i\leq q^{(m+1)/2}-1$. Then $i$ is a coset leader if and only if $i\not\equiv 0 \pmod {q}$. Moreover, $|C_i^{(q,2n)}|=m$.
\end{lemma}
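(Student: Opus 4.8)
The plan is to work entirely with the $q$-adic digit sequences introduced just before Lemma~\ref{lem:5} and to invoke the coset-leader criterion of that lemma: since $n=\frac{q^m-1}{2}$ we have $2n=q^m-1$, so $C_i^{(q,2n)}=C_i^{(q,q^m-1)}$, and $i$ is a coset leader exactly when no circular left-shift of $\overline{i}$ is smaller than $\overline{i}$. Write $k=\frac{m+1}{2}$, so that $m=2k-1$ and the hypothesis $1\le i\le q^{(m+1)/2}-1$ reads $1\le i\le q^{k}-1$. In digit terms this says $i_j=0$ for all $j\ge k$: every nonzero digit of $i$ sits in the $k$ least significant positions $0,\dots,k-1$, and the $m-k=k-1$ top positions $k,\dots,m-1$ are all zero.

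For the necessity of $q\nmid i$ I would argue by contraposition. If $q\mid i$ and $i\ge 1$, then $i/q$ is a nonnegative integer with $i/q<i$, and since $q^{-1}\equiv q^{m-1}\pmod{q^m-1}$ we have $i/q\equiv i\,q^{m-1}\pmod{q^m-1}$, so $i/q$ lies in the same $q$-cyclotomic coset as $i$. Thus $i$ is not the smallest element of its coset and is not a coset leader. This direction needs no bound on $i$.

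The substance is the converse together with the size claim, and both come from one run-length analysis of $\overline{i}$. Assume $q\nmid i$, i.e.\ $i_0\ne 0$, and let $t$ be the largest index with $i_t\ne 0$; by the first paragraph $t\le k-1=\frac{m-1}{2}$. Then the leading block $i_{m-1},\dots,i_{t+1}$ is a maximal run of zeros of length $m-1-t$, bounded by the nonzero digit $i_t$ on one side and, cyclically, by $i_0\ne 0$ on the other. Any other maximal run of zeros lies strictly between positions $0$ and $t$ and so has length at most $t-1$. The key inequality is $t-1<m-1-t$, equivalently $t<\frac{m}{2}$, which holds because $t\le\frac{m-1}{2}$; hence the leading run is the \emph{unique} longest zero-run in the cyclic digit string. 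Consequently $\overline{i}$ is the unique cyclic rotation starting with the maximal number $m-1-t$ of zeros, so it is strictly smaller than every rotation distinct from it. By Lemma~\ref{lem:5}, $i$ is a coset leader. Moreover, uniqueness of the longest zero-run forbids any proper period $d\mid m$ with $d<m$ (such a period would translate the longest run to a second run of equal length), so all $m$ rotations are distinct and $|C_i^{(q,2n)}|=m$.

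The main obstacle I anticipate is careful bookkeeping rather than any deep idea: one must state cleanly that ``the longest zero-run yields the rotation with the most leading zeros,'' keep the cyclic adjacency (position $0$ neighbouring position $m-1$) straight throughout, and make precise the step from ``unique longest zero-run'' to ``aperiodic digit string,'' which is what delivers $|C_i^{(q,2n)}|=m$. Once these formal points are pinned down, the single inequality $t<\frac{m}{2}$ coming from $i\le q^{(m+1)/2}-1$ does all the real work.
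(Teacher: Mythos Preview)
Your argument is correct. Note, however, that the paper does not prove this lemma at all: it is quoted from \cite{Liu17} as a known result and stated without proof, so there is no ``paper's own proof'' to compare against.

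That said, your approach is a clean self-contained justification using only Lemma~\ref{lem:5}. The unique-longest-zero-run idea is exactly right: with $t\le\frac{m-1}{2}$ the leading zero block has length $m-1-t\ge\frac{m+1}{2}$, while any interior zero run (sandwiched between positions $0$ and $t$, both carrying nonzero digits since $q\nmid i$) has length at most $t-1\le\frac{m-3}{2}$, so the leading run is \emph{strictly} longest. Hence $\overline{i}$ is the unique rotation with the maximal number of leading zeros, which simultaneously gives the coset-leader property and $|C_i^{(q,2n)}|=m$. The only cosmetic point is that your aperiodicity detour is unnecessary: once every nontrivial rotation is \emph{strictly} larger than $\overline{i}$, a nontrivial period $d$ would make the $d$-shift equal to $\overline{i}$, an immediate contradiction.
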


\begin{lemma}\label{lem:14}\cite{Liu17}
 Let $m\geq 2$ be even and $1\leq i\leq 2q^{m/2}-1$. Then $i$ is a coset leader if and only if $i\not\equiv 0 \pmod{q}$. Moreover, $|C_{q^{m/2}+1}^{(q,2n)}|=\frac{m}{2}$ and $|C_i^{(q,2n)}|=m$ if $i\neq q^{m/2}+1$ and $i$ is a coset leader.
\end{lemma}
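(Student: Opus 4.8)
The plan is to work entirely with the $q$-adic sequence $\overline{i}$ and the circular-shift characterisation of coset leaders in Lemma \ref{lem:5}, since here $2n=q^m-1$ and $C_i^{(q,2n)}$ is exactly the $q$-cyclotomic coset modulo $q^m-1$. Write $t=m/2$. First I would dispose of the easy implication: if $q\mid i$ and $i\geq 1$, then $q^{m-1}\equiv q^{-1}\pmod{q^m-1}$ gives $q^{m-1}i\equiv i/q\pmod{q^m-1}$, so $i/q<i$ lies in the same coset and $i$ cannot be a coset leader. Hence only $i$ with $i\not\equiv 0\pmod q$, equivalently $i_0\neq 0$, can be coset leaders. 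For the range $1\leq i\leq 2q^t-1$ one has $i<q^{t+1}$ (using $q\geq 3$), so $\overline{i}=(i_{2t-1},\ldots,i_0)$ satisfies $i_j=0$ for $j>t$ and $i_t\in\{0,1\}$; together with $i_0\neq 0$ this pins down the shape of the sequence.

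For the converse (every such $i$ with $i_0\neq 0$ is a coset leader) I would argue by \emph{longest run of zeros}. Let $u$ be the highest position with $i_u\neq 0$; then the nonzero digits of $\overline{i}$ all lie in the window $[0,u]$ with $i_0,i_u\neq 0$, positions $u+1,\ldots,2t-1$ form a run of $2t-1-u$ leading zeros, and any internal zero-run has length at most $u-1$. By Lemma \ref{lem:5}, $i$ is a coset leader iff every circular left-shift $\overline{i}^{(j)}$ satisfies $\overline{i}^{(j)}\geq\overline{i}$; since a shift with strictly more leading zeros (highest nonzero position $<u$) would be strictly smaller, the crux is to show no rotation exposes a longer zero-run at the top. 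When $i_t=0$ we have $u\leq t-1$, so the leading run has length $\geq t$ while every internal run has length $\leq u-1\leq t-2$: the top run is \emph{strictly} longest, forcing $\overline{i}^{(j)}>\overline{i}$ for all $j\neq 0$, so $i$ is a coset leader. When $i_t=1$ we have $u=t$, the top run has length exactly $t-1$, and this is not exceeded by any internal run (which is also at most $t-1$, since positions $0$ and $t$ carry nonzero digits); so here ties must be examined.

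The main obstacle is precisely this tie in the case $i_t=1$. An internal zero-run can reach length $t-1$ only when $i_1=\cdots=i_{t-1}=0$, i.e. $i=q^t+c$ with $c=i_0\in[1,q-1]$; then the shift by $t$ yields $\overline{i}^{(t)}=(0^{t-1},c,0^{t-1},1)$, to be compared with $\overline{i}=(0^{t-1},1,0^{t-1},c)$. Reading this comparison from the top gives $\overline{i}^{(t)}>\overline{i}$ when $c\geq 2$ and $\overline{i}^{(t)}=\overline{i}$ exactly when $c=1$, i.e. $i=q^t+1=q^{m/2}+1$. In every remaining configuration only $j=0$ realises the minimal leading-zero count, so $\overline{i}$ is the strict minimum of its coset and $i$ is a coset leader. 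Finally, for the coset sizes I would reuse the run bound: the period $p$ of $\overline{i}$ divides $m=2t$ and must satisfy $p\geq t$, because a run of $t-1\geq p$ zeros in a period-$p$ sequence would force $\overline{i}=\overline{0}$; thus $p\in\{t,2t\}$. The equality $\overline{i}^{(t)}=\overline{i}$ forces $p=t=m/2$ and holds only for $i=q^{m/2}+1$, giving $|C_{q^{m/2}+1}^{(q,2n)}|=m/2$, whereas every other coset leader in the range has $\overline{i}^{(t)}\neq\overline{i}$ and hence period $2t$, i.e. $|C_i^{(q,2n)}|=m$.
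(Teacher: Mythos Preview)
The paper does not prove this lemma; it is quoted from \cite{Liu17} without argument, so there is nothing in the paper to compare against.  Your proposal is a correct and self-contained proof.  Identifying $2n=q^m-1$ and using Lemma~\ref{lem:5} is exactly the right framework, and your ``longest zero-run'' analysis cleanly isolates the only delicate case $i=q^{m/2}+c$, $1\leq c\leq q-1$, where two zero-runs of the maximal length $t-1$ coexist and the shift by $t$ must be examined directly.  The period argument for the coset sizes is also sound: the leading zero-run of length $\geq t-1$ rules out any period $p\leq t-1$, forcing $p\in\{t,2t\}$, and you have already shown $\overline{i}^{(t)}=\overline{i}$ precisely when $i=q^{m/2}+1$.

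One small remark: for $m=2$ (so $t=1$) the ``zero-run of length $t-1=0$'' gives no information, but the conclusion $p\in\{1,2\}$ is then automatic and your direct comparison of $\overline{i}$ with $\overline{i}^{(1)}$ already settles everything; it may be worth a one-line aside so the reader does not stumble there.
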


\begin{theorem}\label{thm:15}
Let $2\leq \delta \leq q^{m/2}+1$ if $m$ is even and $2\leq \delta \leq \frac{q^{(m+1)/2}+1}{2}$ if $m$ is odd. Then the negacyclic BCH code $\C_{(q,n,\delta,0)}$ has parameters
$$\left[n, \, n-m\left \lceil \frac{(2\delta-3)(q-1)}{2q} \right\rceil, \,  d \right]$$
and generator polynomial
\begin{align*}
	g(x)=\prod_{\substack{0\leq i\leq \delta-2\\ i\not\equiv \frac{q-1}{2} \pmod{q}}} \m_{\beta^{1+2i}}(x),
\end{align*}
where 
\begin{align*}
	d \geq \begin{cases}
	\delta +1 &{\rm if~}\delta \equiv \frac{q+1}2 \pmod{q},  \\
	\delta &{\rm otherwise}.
\end{cases}
\end{align*}
\end{theorem}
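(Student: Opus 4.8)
The plan is to translate everything into the language of $q$-cyclotomic cosets modulo $q^m-1$ and then invoke Lemmas~\ref{lem:13} and~\ref{lem:14} together with the BCH bound of Lemma~\ref{lem:2}. First I would record the setup: since $n=\frac{q^m-1}{2}$ we have $2n=q^m-1$, so $\ell=\ord_{2n}(q)=m$ and $\beta$ is a primitive element of $\gf(q^m)$. The defining set of $\C_{(q,n,\delta,0)}$ is the union $T=\bigcup_{i=0}^{\delta-2}C_{1+2i}^{(q,q^m-1)}$ of the cyclotomic cosets of the odd integers $1,3,\ldots,2\delta-3$. The key structural observation is that, because $q^m-1$ is even while each $1+2i$ is odd, every element of $C_{1+2i}^{(q,q^m-1)}$ is again odd; hence the whole analysis stays inside the odd residues.

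Next I would pin down the generator polynomial. One checks first that the largest exponent $2\delta-3$ stays within the ranges of Lemmas~\ref{lem:13} and~\ref{lem:14}: for odd $m$, $2\delta-3\le q^{(m+1)/2}-2$, and for even $m$, $2\delta-3\le 2q^{m/2}-1$. By those lemmas, an odd exponent $1+2i$ in this range is a coset leader exactly when $1+2i\not\equiv 0\pmod q$, i.e.\ when $i\not\equiv\frac{q-1}{2}\pmod q$. For an excluded index, where $1+2i\equiv 0\pmod q$, write $1+2i=qt$; then $(1+2i)q^{m-1}\equiv t\pmod{q^m-1}$, so $t$ (which is odd and strictly smaller) lies in the same coset and $\m_{\beta^{1+2i}}=\m_{\beta^{t}}$. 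Iterating until the quotient is no longer divisible by $q$ shows that the minimal polynomial of each excluded exponent already equals that of a smaller admissible exponent $1+2i'$ with $i'\le\delta-2$. Since distinct coset leaders give distinct irreducible minimal polynomials, the lcm collapses to the claimed product $\prod_{i\not\equiv (q-1)/2}\m_{\beta^{1+2i}}(x)$.

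For the dimension I would compute $\deg g=\sum|C_{1+2i}^{(q,q^m-1)}|$ over the admissible indices. Here it is important that all these cosets have full size $m$: the only exceptional small coset in Lemma~\ref{lem:14} is $C_{q^{m/2}+1}^{(q,2n)}$ of size $m/2$, but $q^{m/2}+1$ is even and hence never occurs among the odd exponents. Thus $\deg g=mN$ with $N=\#\{0\le i\le\delta-2:i\not\equiv\frac{q-1}{2}\pmod q\}$, and a direct count of the odd multiples of $q$ in $[1,2\delta-3]$ gives $N=\left\lceil\frac{(2\delta-3)(q-1)}{2q}\right\rceil$, yielding the stated dimension $n-mN$.

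Finally, the minimum distance. All of $\beta^{1},\beta^{3},\ldots,\beta^{2\delta-3}$ are roots of $g$, so Lemma~\ref{lem:2} with $e=1$, $h=0$ gives $d\ge\delta$, the generic case. When $\delta\equiv\frac{q+1}{2}\pmod q$, write $\delta=\frac{q+1}{2}+kq$; then $2\delta-1=q(2k+1)$, so $(2\delta-1)q^{m-1}\equiv 2k+1\pmod{q^m-1}$ shows that $\beta^{2\delta-1}$ lies in the same coset as $\beta^{1+2k}$ with $0\le k\le\delta-2$, and is therefore also a root of $g$. This extends the string of consecutive roots to $\beta^{1},\ldots,\beta^{2\delta-1}$, and Lemma~\ref{lem:2} now yields $d\ge\delta+1$. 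I expect the main obstacle to be the dimension bookkeeping --- verifying that every admissible coset has size exactly $m$ (the parity argument ruling out the exceptional coset) and carrying out the ceiling computation for $N$ --- rather than the distance bound, which follows routinely once the extra root is located.
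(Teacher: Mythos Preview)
Your proposal is correct and follows essentially the same route as the paper's proof: both arguments identify the coset leaders among $1,3,\ldots,2\delta-3$ via Lemmas~\ref{lem:13} and~\ref{lem:14}, count them to obtain the dimension, and extend the run of consecutive roots by one when $2\delta-1\equiv 0\pmod q$ to get the sharper BCH bound. Your write-up is in fact more careful in two places where the paper is terse---you spell out the parity reason why the exceptional coset $C^{(q,2n)}_{q^{m/2}+1}$ of size $m/2$ cannot appear among the odd exponents, and you verify explicitly that $\beta^{2\delta-1}$ lies in the coset of some $\beta^{1+2k}$ with $k\le\delta-2$ rather than merely noting that $2\delta-1$ is not a coset leader---but the underlying strategy is identical.
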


\begin{proof} 
If $i$ satisfies $0\leq i\leq \delta-2$ and $1+2i \not\equiv 0 \pmod{q}$, then it follows from Lemmas \ref{lem:13} and \ref{lem:14} 
that  $1+2i$ is a coset leader. By the definition of negacyclic BCH codes, we have the desired generator polynomial of $\C_{(q, n, \delta, 0)}$.

It is easily seen that $1+2i \equiv 0 \pmod{q}$ if and only if $i\equiv \frac{q-1}2 \pmod{q}$. Consequently, the number of integers 
$i$ satisfying $0\leq i\leq \delta-2$ and $1+2i\equiv 0 \pmod{q}$ is 
$$\left\lfloor \frac{2\delta-3-q}{2q} \right\rfloor +1.$$
Hence, the number of integers $i$ satisfying $0\leq i\leq \delta-2$ and $1+2i\not\equiv 0 \pmod{q}$ is 
$$\delta-2-\left\lfloor \frac{2\delta-3-q}{2q} \right\rfloor=\left \lceil \delta-2- \frac{2\delta-3-q}{2q} \right\rceil=\left \lceil \frac{(2\delta-3)(q-1)}{2q} \right\rceil.$$
From Lemmas \ref{lem:13} and \ref{lem:14}, we know that $|C_{1+2i }^{(q,2n)}|=m$. Thus, the dimension of $\C_{(q,n,\delta,0)}$ is
$$n-m\left \lceil \frac{(2\delta-3)(q-1)}{2q} \right\rceil.$$
Note that $1+2(\delta-1)$ is not a coset leader if $\delta \equiv \frac{q+1}2 \pmod{q}$. By Lemma \ref{lem:2}, we have the desired lower bounds on the minimum distance of $\C_{(q,n,\delta,0)}$. This completes the proof.
\end{proof}

\begin{example}\label{example-01}
We have the following examples of the negacyclic BCH code of Theorem \ref{thm:15}.
\begin{itemize}
\item Let $q=3$ and $m=5$, then the code $\mathcal{C}_{(3,121,6,0)}$ has parameters $[121,106, 6]$.
\item Let $q=3$ and $m=4$, then the code $\mathcal{C}_{(3,40,6,0)}$ has parameters $[40,28,6]$.
\item Let $q=5$ and $m=2$, then the code $\mathcal{C}_{(5,12,4,0)}$ has parameters $[12,8, 4]$.
\item Let $q=7$ and $m=2$, then the code $\mathcal{C}_{(7,24,5,0)}$ has parameters $[24,18, 5]$.
\item Let $q=9$ and $m=2$, then the code $\mathcal{C}_{(9,40,6,0)}$ has parameters $[40,32, 6]$.
\end{itemize}
All the five codes are distance-optimal according to the tables of best codes known  in \cite{Grassl2006}, 
and their minimum distances achieve the lower bounds in Theorem \ref{thm:15}. 
\end{example}

In the following, we study the negacyclic BCH codes of length $\frac{q^m-1}{2}$ with small dimensions. To this end, we need to find the first few largest odd coset leaders modulo $q^m-1$. From Lemma \ref{lem:8}, we know that the first largest coset leader is $(q-1)q^{m-1}-1$. Then we have the following result.

\begin{lemma}\label{lem:17}
Let $q$ be an odd prime power, then the first largest odd coset leader is given by
$$\delta_1=(q-1)q^{m-1}-1.$$
\end{lemma}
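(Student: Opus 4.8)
The plan is to show that $\delta_1=(q-1)q^{m-1}-1$ is both odd and the largest odd coset leader modulo $q^m-1$. First I would verify oddity: since $q$ is an odd prime power, $q^{m-1}$ is odd, so $(q-1)q^{m-1}$ is even (as $q-1$ is even), and hence $(q-1)q^{m-1}-1$ is odd. This handles the ``odd'' requirement immediately and is routine.

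The substantive content is that this value is a coset leader at all, and that no larger coset leader is odd. For the first point, I would invoke Lemma~\ref{lem:8} directly: it states that $(q-1)q^{m-1}-1$ is the largest coset leader modulo $q^m-1$ (over \emph{all} cosets, not just odd ones), with coset size $m$. Thus $\delta_1$ is automatically a coset leader. The only remaining task is to confirm it is odd, which the previous paragraph establishes. This is the cleanest route: since $(q-1)q^{m-1}-1$ is the maximum over all coset leaders and it happens to be odd, it is a fortiori the maximum over odd coset leaders.

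The step I expect to require the most care is justifying that the largest coset leader overall is the largest \emph{odd} coset leader, i.e.\ confirming that $\delta_1$ genuinely is odd so that no competition from even-valued leaders matters. Because any odd coset leader is in particular a coset leader, and Lemma~\ref{lem:8} tells us $(q-1)q^{m-1}-1$ dominates every coset leader, once oddity is checked there is nothing further to prove. So the main (and only nontrivial) obstacle is the parity computation, which reduces to observing that $q$ odd forces $q-1$ even and $q^{m-1}$ odd. I would write:
\begin{align*}
(q-1)q^{m-1}-1 \equiv (q-1)\cdot 1 - 1 \equiv -1 \equiv 1 \pmod 2,
\end{align*}
using $q \equiv 1 \pmod 2$. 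Combining this with Lemma~\ref{lem:8} completes the argument. The proof is therefore very short, and I would not expect to need the machinery of Lemma~\ref{lem:5} or the $q$-weight formalism here, since the key fact is already packaged in Lemma~\ref{lem:8}.
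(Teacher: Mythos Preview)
Your proposal is correct and mirrors the paper's approach exactly: the paper simply notes ``From Lemma~\ref{lem:8}, we know that the first largest coset leader is $(q-1)q^{m-1}-1$'' and states Lemma~\ref{lem:17} as an immediate consequence, without even writing out the parity check that you include. Your argument is thus slightly more explicit than the paper's but follows the same one-line route through Lemma~\ref{lem:8}.
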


To the best of our knowledge, except for the first largest coset leader modulo $q^m-1$ which is odd, the known first few largest coset leaders modulo $q^m-1$ are even. Hence, we will find out the second largest odd coset leader $\delta_2$ and the third largest odd coset leader $\delta_3$. We start with the following lemma.

\begin{lemma}\label{lem:18}
Let $q$ be an odd prime power, then there always exists a coset leader $S_i$ satisfying $w_q(S_i)=(q-1)m-i$ for $1\leq i\leq m$, where $w_q(S_i)$ is the $q$-weight of $S_i$. Let $M_i$ and $M_{i+1}$ be the largest coset leaders with $w_q(M_i)=(q-1)m-i$ and $w_q(M_{i+1})=(q-1)m-(i+1)$, respectively. Then $M_i>M_{i+1}$ for $1\leq i\leq m-1$.
\end{lemma}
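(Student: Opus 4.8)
The plan is to recast the whole lemma as a combinatorial statement about binary strings. Throughout I identify each integer in $[1,q^m-2]$ with its length-$m$ base-$q$ digit string $\overline{s}=(s_{m-1},\dots,s_0)$, and I use two facts established earlier: multiplication by $q$ modulo $q^m-1$ is a circular left-shift, so $w_q$ is constant on each $q$-cyclotomic coset, and (Lemma~\ref{lem:5}) a coset leader is exactly the cyclically minimal rotation of its digit string. For the existence assertion ($1\le i\le m$) I exhibit an explicit witness: let $S_i$ be the integer with digit string $(\underbrace{q-2,\dots,q-2}_{i},\underbrace{q-1,\dots,q-1}_{m-i})$, so that $w_q(S_i)=i(q-2)+(m-i)(q-1)=(q-1)m-i$. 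Its leading $i$ digits equal the global minimum digit $q-2$ and form a single block, so every nontrivial circular left-shift strictly increases the value; hence $S_i$ is cyclically minimal and, by Lemma~\ref{lem:5}, a coset leader of the required weight.

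The key preliminary step is to show that the \emph{largest} coset leader $M_k$ of weight $(q-1)m-k$ (for $1\le k\le m$) has all its base-$q$ digits in $\{q-2,q-1\}$. Indeed, a coset leader is cyclically minimal, so its leading digit equals its minimum digit; if that minimum were $\le q-3$, the value would be at most $(q-3)q^{m-1}+(q-1)\tfrac{q^{m-1}-1}{q-1}=(q-2)q^{m-1}-1$, strictly below the witness $S_k\ge(q-2)q^{m-1}$. Since $M_k\ge S_k$, the minimum digit of $M_k$ must be $q-2$, so all its digits lie in $\{q-2,q-1\}$ with exactly $k$ of them equal to $q-2$. Now set $b_j=(q-1)-\ell_j\in\{0,1\}$: this sends such an integer to a length-$m$ binary string $b$ with $k$ ones, of value $(q^m-1)-B(b)$ where $B(b)=\sum_j b_jq^j$, and by Lemma~\ref{lem:5} the integer is a coset leader precisely when $b$ is lexicographically maximal among its rotations. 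Thus $M_k=(q^m-1)-\min\{B(b):b \text{ lex-maximal with } k \text{ ones}\}$.

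For the monotonicity, fix $1\le i\le m-1$ and let $b$ be a lex-maximal binary string with $i+1$ ones attaining $M_{i+1}$. Delete the $1$ in the least significant set position $j_0$ to get $b'$, which has $i$ ones and $B(b')=B(b)-q^{j_0}$. If $b'$ is still lex-maximal, then $b'$ yields a coset leader of weight $(q-1)m-i$ of value $(q^m-1)-B(b')=M_{i+1}+q^{j_0}$, whence $M_i\ge M_{i+1}+q^{j_0}>M_{i+1}$. To prove $b'$ is lex-maximal, observe $b'_{m-1}=1$, so only rotations beginning at a set position $p$ of $b'$ can compete; let $u$ and $v$ be the readings of $b$ starting at positions $m-1$ and $p$, so $u\ge v$ by maximality of $b$. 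Turning off position $j_0$ subtracts $q^{j_0}$ from $u$ (it is the lowest set bit there) and subtracts $q^{\,m-1-p+j_0}$ from $v$; since $p\le m-1$ the second exponent is $\ge j_0$, so the readings $u',v'$ of $b'$ satisfy $u'-v'=(u-v)+\bigl(q^{\,m-1-p+j_0}-q^{j_0}\bigr)\ge 0$. Hence $b'\ge$ all its rotations, as required.

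I expect the main obstacle to be exactly this last claim: that deleting the lowest set bit preserves lexicographic maximality among rotations. Everything else is bookkeeping with base-$q$ digit strings. The delicate point is the exact accounting of which power of $q$ is removed from each competing rotation, together with the observation that the bit removed from a later-starting rotation always carries weight at least $q^{j_0}$ -- precisely what makes the inequality fall the right way. I would double-check the boundary case $b=1^{m}$ (the string attaining $M_m$), where all rotations coincide, to be sure the deletion argument still goes through when $u=v$.
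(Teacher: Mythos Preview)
Your proof is correct and follows essentially the same approach as the paper: both exhibit the explicit witness $S_i$, show that any $M_k$ has all digits in $\{q-2,q-1\}$ by comparing the leading (hence minimum) digit to that of $S_k$, and then prove $M_i>M_{i+1}$ by taking $M_{i+1}$, changing its lowest $q-2$ digit to $q-1$, and verifying the result is still a coset leader. Your translation to binary strings via $b_j=(q-1)-\ell_j$ and the explicit inequality $u'-v'=(u-v)+(q^{\,m-1-p+j_0}-q^{j_0})\ge 0$ amount to a somewhat cleaner bookkeeping of the same rotation-comparison argument the paper carries out directly on the digit sequences.
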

\begin{proof}
Let $\overline{S}_i=(\underbrace{q-2, \ldots, q-2,}_{i} \underbrace{q-1, \ldots, q-1}_{m-i}).$ By definition, it is easy to see that $S_i$ is a coset leader with $w_q(S_i)=(q-1)m-i$. Hence, there always exists a coset leader $S_i$ satisfying $w_q(S_i)=(q-1)m-i$ for $1\leq i\leq m$.

 We now prove that  for $1\leq i\leq m-1$, we have $M_i>M_{i+1}$ if $M_i$ and $M_{i+1}$ are the largest coset leaders with $w_q(M_i)=(q-1)m-i$ and $w_q(M_{i+1})=(q-1)m-(i+1)$, respectively.

Let the sequence of $M_i$ be 
$$\overline{M}_i=(b_{m-1},b_{m-2},\ldots,b_1,b_0).$$ 
If there exists a positive integer $0\leq i\leq m-1$ such that $b_i<q-2$, then the sequence of $M_iq^{m-1-i} \bmod{(q^m-1)}$ is 
\begin{equation*}
\begin{split}
(b_i,b_{i-1},\ldots,b_0,b_{m-1},b_{m-2},\ldots,b_{i+1}).
\end{split}
\end{equation*}
It is easily seen that $S_i$ is larger than $M_iq^{m-1-i} \bmod{(q^m-1)}$, which is contradictory to that $M_i$ is the largest coset modulo $q^m-1$ with $w_q(S_i)=(q-1)m-i$. Hence, $b_{m-1},b_{m-2},\ldots,b_1,b_0 \in \{q-1,q-2\}$.

If $b_{m-1}=q-1$, then the sequence of $M_i$ must be
$$\overline{M_i}=(q-1,q-1,\ldots,q-1 ).
$$
Otherwise, there is a $b_i=q-2$, then $M_iq^{m-1-i} \bmod{(q^m-1)}$ $<$ $M_i$, which is contradictory to that $M_i$ is a coset leader modulo $q^m-1$. Hence, the sequence of $M_i$ can be expressed as
\begin{equation}
\overline{M_i}=(\mathop{q-2}\limits_{m-1},\underbrace{q-1, \ldots, q-1,}_{m-2-l_1} \mathop{q-2}\limits_{l_1},\ldots,\mathop{q-2}_{l_{i-2}},
\underbrace{q-1, \ldots, q-1}_{ l_{i-2}-l_{i-1}-1},\mathop{q-2}_{l_{i-1}},
\underbrace{q-1, \ldots, q-1}_{l_{i-1}}),
\end{equation}
where $l_1,l_2,\ldots,l_{i-1}$ are positive integers.

Similarly, since $M_{i+1}$ is the largest coset leader with $w_q(M_{i+1})=(m-1)q-(i+1)$, the sequence of $M_{i+1}$ can be expressed as
\begin{equation}\label{eq:12-09-04}
\overline{M_{i+1}}=(\mathop{q-2}\limits_{m-1},\underbrace{q-1, \ldots, q-1,}_{m-2-j_1} \mathop{q-2}\limits_{j_1},\ldots,\mathop{q-2}_{j_{i-1}},
\underbrace{q-1, \ldots, q-1}_{ j_{i-1}-j_i-1},\mathop{q-2}_{j_i},
\underbrace{q-1, \ldots, q-1}_{j_i}),
\end{equation}
where $j_1,j_2,\ldots,j_{i}$ are positive integers.
Let $\overline{M'_i}$ be the sequence obtained by replacing the positive integer $q-2$ at the coordinate $j_i$ with $q-1$ in sequence $\overline{M_{i+1}}$, i.e., 
\begin{equation}
\overline{M'_i}=(\mathop{q-2}\limits_{m-1},\underbrace{q-1, \ldots, q-1,}_{m-2-j_1} \mathop{q-2}\limits_{j_1},\ldots,\mathop{q-2}_{j_{i-2}},
\underbrace{q-1, \ldots, q-1}_{ j_{i-2}-j_{i-1}-1},\mathop{q-2}_{j_{i-1}},
\underbrace{q-1, \ldots, q-1}_{j_{i-1}}).
\end{equation}

We now prove that $M'_i$ is a coset leader.
If $M'_i$ is not a coset leader,  then there exists a positive integer $t$ such that
\begin{equation}\label{eq:1210-01}
 M'_iq^{t} \bmod{(q^m-1)} <M'_i.
\end{equation}
It is obvious that these $t=m-1-j_h$ with $1\leq h\leq i-1$ are the possible values such that (\ref{eq:1210-01}) holds. Otherwise,
 $$\overline{M'_iq^{t} \bmod{(q^m-1)}}=(q-1,\ldots )>\overline{M'_i},$$
which is contradictory to (\ref{eq:1210-01}) since  $M'_iq^{t} \bmod {(q^m-1)}>M'_i$ if $\overline{M'_iq^{t} \bmod{(q^m-1)}}>\overline{M'_i}$.

It is obvious that $M_{i+1}=(q-1)q^{m-1}-q^{j_1}-q^{j_2}-\cdots-q^{j_i}-1$ and 
$$M'_i=(q-1)q^{m-1}-q^{j_1}-q^{j_2}-\cdots-q^{j_{i-1}}-1.$$ 
If $t=m-1-j_h$ for $1\leq h\leq i-1$, then
\begin{equation*}
\begin{split}
 M'_iq^{t} \bmod{(q^m-1)} &\equiv (M_{i+1}+q^{j_i})q^{t} \bmod{(q^m-1)} \\
 &\geq M_{i+1}+q^{j_i+t}\\
 &>M'_i,
\end{split}
\end{equation*}
which is contradictory to (\ref{eq:1210-01}).
Hence, $M'_{i}$ is a coset leader with $w_q(M'_{i})=(m-1)q-i$. Therefore, $M_i>M'_{i}>M_{i+1}$ since $M_i$ is the largest coset leader with $w_q(M_i)=(m-1)q-i$.
The desired conclusions then follow.
\end{proof}

With the help of Lemma \ref{lem:18}, we now determine the values of $\delta_2$ and $\delta_3$.

\begin{lemma}\label{lem:19}
Let $q\geq 3$ be an odd prime power.  Then the second largest odd coset leader $\delta_2=(q-1)q^{m-1}-q^{\lfloor\frac{2m-1}{3}\rfloor}-q^{\lfloor\frac{m-1}{3}\rfloor}-1$.
\end{lemma}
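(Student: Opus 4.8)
The plan is to combine a parity observation with Lemma~\ref{lem:18}, and then to pin down the largest coset leader of a prescribed $q$-weight. First I would note that since $q$ is odd, $q^i\equiv 1\pmod 2$, so every integer $s$ satisfies $s\equiv w_q(s)\pmod 2$; hence a coset leader is odd precisely when its $q$-weight is odd. The maximal $q$-weight of a nonzero residue modulo $q^m-1$ is $(q-1)m-1$ (weight $(q-1)m$ forces all digits equal to $q-1$, i.e.\ the zero coset), and $(q-1)m-1$ is odd because $q-1$ is even. By Lemma~\ref{lem:17} the unique coset leader of this weight is $\delta_1$. Writing $M_i$ for the largest coset leader of $q$-weight $(q-1)m-i$ as in Lemma~\ref{lem:18}, I claim $\delta_2=M_3$.

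To justify the claim, observe from Lemma~\ref{lem:5} that the leading digit $i_{m-1}$ of any coset leader equals its minimum digit. If this minimum is at most $q-3$, the coset leader is smaller than $(q-2)q^{m-1}<M_3$, while a minimum equal to $q-1$ gives the zero coset. Thus every odd coset leader other than $\delta_1$ has minimum digit $q-2$, so all its digits lie in $\{q-2,q-1\}$ and its weight is $(q-1)m-t$, where $t\ge 1$ is the number of digits equal to $q-2$; oddness forces $t$ odd, and $t=1$ gives exactly $\delta_1$. Hence any odd coset leader below $\delta_1$ has $t\ge 3$ and is therefore at most $M_t\le M_3$ by the monotonicity $M_3>M_4>\cdots$ of Lemma~\ref{lem:18}. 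This proves $\delta_2=M_3$.

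It remains to compute $M_3$. By the structure of $M_i$ in the proof of Lemma~\ref{lem:18}, $M_3=(q-1)q^{m-1}-q^{l_1}-q^{l_2}-1$ with $m-1>l_1>l_2\ge 0$, so maximizing $M_3$ is the same as minimizing $q^{l_1}+q^{l_2}$. I would encode the three digits equal to $q-2$ as marks on a cycle of length $m$ and let $g_1,g_2,g_3$ (with $g_1+g_2+g_3=m-3$) be the lengths of the runs of $(q-1)$-digits following the top, middle and bottom marks, so that $l_1=m-2-g_1$ and $l_2=g_3$. By Lemma~\ref{lem:5} the coset-leader condition is exactly that $(g_1,g_2,g_3)$ is the lexicographically smallest cyclic rotation of the run triple; in particular $g_1=\min\{g_1,g_2,g_3\}$. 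Since the term $q^{l_1}$ dominates, I first make $l_1$ minimal by taking $g_1$ maximal, namely $g_1=\lfloor (m-3)/3\rfloor$, which gives $l_1=\lfloor (2m-1)/3\rfloor$; then, with $l_1$ fixed, I minimize $l_2=g_3$ by distributing the remaining length $m-3-g_1$ as evenly as possible between $g_2$ and $g_3$. A short case analysis on $m\bmod 3$ yields $l_2=\lfloor (m-1)/3\rfloor$, where the lexicographic condition is what forces the balanced choice (for instance, when $m\equiv 2\pmod 3$ it selects $\{g_1,g_1+1,g_1+1\}$ over the competitor $\{g_1,g_1,g_1+2\}$). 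Finally I would check with Lemma~\ref{lem:5} that the resulting integer is genuinely a coset leader, which gives the stated value of $\delta_2$.

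The main obstacle is this last optimization: converting the coset-leader condition into the run-length inequalities, carrying the floor bookkeeping correctly through the three residues of $m$ modulo $3$, and verifying that the lexicographic tie-breaking (rather than mere minimality of $g_1$) is what pins down $l_2=\lfloor(m-1)/3\rfloor$; this asymmetry is the reason the two exponents involve different floors. I would also treat $m=2$ separately, since three marks do not fit into two digits, although the claimed formula still holds there by a direct check on the two-digit expansions.
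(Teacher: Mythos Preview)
Your approach is correct and follows essentially the same route as the paper: both reduce to showing $\delta_2=M_3$ (the largest coset leader of $q$-weight $(q-1)m-3$) via Lemma~\ref{lem:18}, and then locate $M_3$ by a case analysis on $m\bmod 3$ using Lemma~\ref{lem:5}. Your parity observation $s\equiv w_q(s)\pmod 2$ is a slightly cleaner way to dispose of the even-weight candidates than the paper's direct check that every weight-$(q-1)m-2$ leader is even, and your run-length/lex-minimality encoding of the three $(q-2)$-digits is just a repackaging of the paper's inequalities $m-2-i\le i-j-1$ and $m-2-i\le j$ on their positions; but neither difference alters the underlying argument.
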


\begin{proof}
If $\delta_2$ is a coset leader with $q$-weight $m(q-1)-1$, it is easily seen that the sequence of $\delta_2$ must be
$$\overline{\delta_2}=(\mathop{q-2}\limits_{m-1},\underbrace{q-1, \ldots, q-1}_{m-1}).$$
Then $\delta_2=(q-1)q^{m-1}-1$, which is contradictory to that $(q-1)q^{m-1}-1$ is the largest odd coset leader 
(see Lemma \ref{lem:17}). 

If $\delta_2$ is a coset leader with $q$-weight $m(q-1)-2$, then the sequence of $\delta_2$ has the form
$$\overline{\delta_2}=(\mathop{q-2}\limits_{m-1},\underbrace{q-1, \ldots, q-1}_{m-2-i},\mathop{q-2}\limits_i,\underbrace{q-1, \ldots, q-1}_i)$$
for $0\leq i\leq m-1$. Hence, $\delta_2=(q-1)q^{m-1}-q^i-1$. However, $(q-1)q^{m-1}-q^i-1$ is even for any $i$. This is a contradiction.

Hence, the $q$-weight of $\delta_2$ is less than or equal to $m(q-1)-3$. If $m=2$, it is easily seen that $q^2-2q-2$ is the largest coset leader with the $q$-weight $2(q-1)-3$ since the sequence of $q^2-2q-2$ has the form $(q-3,q-2)$. Then from Lemma \ref{lem:18}, we have $\delta_2=q^2-2q-2$ if $m=2$.

In the following, we consider the case $m\geq 3$.
Let $M_3$ be the largest coset leader with $w_q(M_3)=m(q-1)-3$. Then the sequence of $M_3$ has the form
 \begin{equation}\label{eq:3}
\overline{M}_3=(\mathop{q-2}\limits_{m-1},\underbrace{q-1, \ldots, q-1,}_{m-2-i} \mathop{q-2}_{i}, \underbrace{q-1, \ldots, q-1,}_{i-j-1} \mathop{q-2}\limits_{j},
\underbrace{q-1, \ldots, q-1}_{j}),
\end{equation}
where $m-2-i\leq i-j-1 \leq j$. Since $M_3$ is the largest coset leader with $w_q(M_3)=m(q-1)-3$, from Lemma \ref {lem:5} we have that
\begin{itemize}
\item $m-2-i=\left\lfloor \frac{m-3}{3} \right\rfloor$, \,\, $j\geq \left\lfloor \frac{m-3}{3} \right\rfloor$ and $i-j-1\geq \left\lfloor \frac{m-3}{3} \right\rfloor$.
 \item If $\frac{m-3}{3}$ is an integer, then $m-2-i=j$. Otherwise, $m-1-i=j$.
\end{itemize}
Hence, we obtain
\begin{equation*}
\left\{
\begin{array}{ll}
i= \frac{2m}{3}-1\,\,\text{ and}\,\, j=\frac{m}{3}-1  & \mbox{ if $m\equiv 0 \pmod{3} $,}\\
i=m-1-\left\lfloor \frac{m}{3} \right\rfloor\,\,\text{ and}\,\, j=\left\lfloor \frac{m}{3} \right\rfloor  & \mbox{ if $m\equiv 1 \pmod{3}$,}\\
i=m-1-\left\lfloor \frac{m}{3} \right\rfloor\,\,\text{ and}\,\, j=\left\lfloor \frac{m}{3} \right\rfloor  & \mbox{ if $m\equiv 2 \pmod{3}$.}\\
\end{array}
\right.
\end{equation*}
Then from (\ref{eq:3}) and Lemma \ref{lem:18}, we have $\delta_2=(q-1)q^{m-1}-q^{\lfloor\frac{2m-1}{3}\rfloor}-q^{\lfloor\frac{m-1}{3}\rfloor}-1$ if $m\geq 3$.

Combining the discussions in the case $m=2$ and the case $m \geq 3$, we obtain the desired results.
\end{proof}

\begin{lemma}\label{lem:20}
Let $q\geq 3$ be an odd prime power and $\delta_2$ be defined in Lemma \ref{lem:19}, then
\begin{equation*}
|C_{\delta_2}^{(q,2n)}|=\begin{cases}
	m &{\rm if}~3\nmid m,\\
	\frac{m}3&{\rm if}~3\mid m.
\end{cases}
\end{equation*}
\end{lemma}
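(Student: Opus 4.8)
The plan is to translate the coset-size computation into a question about the minimal period of the cyclic digit-sequence $\overline{\delta_2}$, and then resolve that period by a short orbit-counting argument.

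First I would recall from the proof of Lemma \ref{lem:19} that, writing $i=\lfloor\frac{2m-1}{3}\rfloor$ and $j=\lfloor\frac{m-1}{3}\rfloor$, we have $\delta_2=(q-1)q^{m-1}-q^i-q^j-1$, whose $q$-adic sequence $\overline{\delta_2}=(b_{m-1},\ldots,b_0)$ carries the digit $q-2$ exactly at the three positions $m-1$, $i$, $j$ and the digit $q-1$ at every other position (no borrows occur, since those positions held $q-1$). By the discussion preceding Lemma \ref{lem:5}, the circular $\ell$-left-shift of $\overline{\delta_2}$ is the sequence of $q^\ell\delta_2 \bmod (q^m-1)$; hence $|C_{\delta_2}^{(q,2n)}|$ equals the least $\ell>0$ with $q^\ell\delta_2\equiv\delta_2\pmod{q^m-1}$, i.e. the minimal period of $\overline{\delta_2}$ under circular shifts. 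Since the $m$-shift is the identity, this period divides $m$.

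Next, since $q\ge 3$ the digits $q-2$ and $q-1$ are distinct, so a shift by $\ell$ fixes $\overline{\delta_2}$ if and only if it maps the set $P=\{m-1,i,j\}\subseteq\mathbb{Z}_m$ of $(q-2)$-positions onto itself, that is $P+\ell=P$ in $\mathbb{Z}_m$. For $m\ge 3$ the integers $m-1$, $i$, $j$ are pairwise distinct (indeed $j<i<m-1$), so $|P|=3$. If $\ell\mid m$ is a period, then $P$ is a union of orbits of the subgroup $\langle\ell\rangle\le\mathbb{Z}_m$, each of size $m/\ell$; hence $m/\ell$ divides $|P|=3$, forcing $m/\ell\in\{1,3\}$.

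Finally I would read off the two cases. If $3\nmid m$, then $m/\ell=3$ is impossible, so the only admissible period is $\ell=m$, giving $|C_{\delta_2}^{(q,2n)}|=m$. If $3\mid m$, the three positions are $\{\tfrac m3-1,\tfrac{2m}3-1,m-1\}$, which form a single orbit of $\langle m/3\rangle$ and are thus invariant under the shift by $m/3$; no smaller divisor of $m$ can be a period (it would force $m/\ell>3$ to divide $3$), so the minimal period is $m/3$. The edge case $m=2$, where $i=m-1$ collapses the pattern to $\overline{\delta_2}=(q-3,q-2)$, must be checked separately, but there the two distinct digits rule out period $1$ and again give period $m=2$, consistent with $3\nmid 2$. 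The main obstacle is the bookkeeping in the first paragraph: correctly pinning down the three $(q-2)$-positions of $\overline{\delta_2}$ from the formula of Lemma \ref{lem:19} and justifying that the coset size equals this rotational period; once that is in place, the orbit-counting is routine.
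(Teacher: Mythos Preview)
Your proof is correct and takes a genuinely different route from the paper's. The paper argues algebraically: writing $r=|C_{\delta_2}^{(q,2n)}|$, it reduces $\delta_2(q^r-1)\equiv 0\pmod{q^m-1}$ to the congruence
\[
q^{\lceil 2(m-1)/3\rceil}+q^{\lfloor (m+1)/3\rfloor}+1\equiv 0\pmod{\tfrac{q^m-1}{q^r-1}},
\]
uses this to bound $r\ge\lfloor (m+2)/3\rfloor$, so that $r\in\{m,m/2,m/3\}$, then rules out $r=m/2$ by a short case check and verifies $r=m/3$ works when $3\mid m$. Your approach instead reads off from the digit pattern already computed in Lemma~\ref{lem:19} that the coset size is the rotational period of a length-$m$ word with exactly three distinguished positions, and settles that period by the orbit-counting observation $(m/\ell)\mid 3$. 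This is cleaner and reuses the structure of Lemma~\ref{lem:19} more directly; it also generalizes immediately to any coset leader whose $q$-adic expansion has a prescribed number of ``exceptional'' digits. The paper's computation, by contrast, is independent of the digit description and would go through even without Lemma~\ref{lem:19}'s explicit sequence. Your handling of the $m=2$ edge case (where the formula gives $\overline{\delta_2}=(q-3,q-2)$ and the three-position pattern degenerates) is appropriate.
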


\begin{proof}
It is obvious that $|C_{\delta_2}^{(q,2n)}|$ is a divisor of $m$ since ord$_n(q)=m$. Let $|C_{\delta_2}^{(q,2n)}|=r$, then
\begin{equation*}
\delta_2(q^r-1)\equiv 0 \pmod{q^m-1}. 
\end{equation*}
The congruence equation above is equivalent to
\begin{equation}\label{eq:nqmodd1}
q^{\lceil \frac{2(m-1)}3 \rceil}+q^{\lfloor\frac{m+1}{3}\rfloor}+1 \equiv 0 \pmod{\frac{q^m-1}{q^r-1}},
\end{equation}
which implies that $m-r\leq \lceil \frac{2(m-1)}3 \rceil$, i.e., 
$$r\geq m- \left \lceil \frac{2(m-1)}3 \right \rceil= \left \lfloor \frac{m+2}3 \right \rfloor.$$
Note that $r\mid m$, we have $r\in \{m, \frac{m}2, \frac{m}3 \}$. If $r=\frac{m}2$ and $m$ is even, from (\ref{eq:nqmodd1}) we have  
$$q^{\lfloor\frac{m+1}{3}\rfloor}-q^{\lceil \frac{m-4}6 \rceil}+1 \equiv 0 \pmod{q^{\frac{m}2}+1}.$$
When $m=2$, $q \equiv 0 \pmod{q+1}$, which is impossible. When $m\geq 4$, $\lfloor\frac{m+1}{3}\rfloor< \frac{m}2$, which is  impossible. If $r=\frac{m}3$ and $3 \mid m$, it is easy to check that (\ref{eq:nqmodd1}) holds. This means that $|C_{\delta_2}^{(q,2n)}|=\frac{m}3$. The desired conclusion then follows.
\end{proof}

\begin{lemma}\label{lem:21}
Let $q\geq 3$ be an odd prime power and let $m$ be a positive integer such that $q^m\geq 25$. Then
\begin{equation*}
\delta_3=\begin{cases}
	(q-1)q^{m-1}-q^{\lceil\frac{2m-1}{3}\rceil}-q^{\lfloor\frac{m}{3}-1\rfloor}-1 &{\rm if}~3\nmid(m+1),\\
	(q-1)q^{m-1}-q^{\frac{2m-1}{3}}-q^{\frac{m+1}{3}}-1 &{\rm if}~3\mid (m+1).
\end{cases}
\end{equation*}
Moreover, $|C_{\delta_3}^{(q,2n)}|=m$.
\end{lemma}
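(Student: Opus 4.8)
The plan is to mirror the strategy used for $\delta_2$ in Lemma \ref{lem:19}: locate $\delta_3$ inside a single $q$-weight class and then solve a refined combinatorial optimization to pick out the \emph{second} largest coset leader of that class. Since $q$ is odd, every $q^j$ is odd, so an integer $i$ and its $q$-weight $w_q(i)$ have the same parity; hence a coset leader is odd precisely when its $q$-weight is odd, i.e.\ of the form $m(q-1)-k$ with $k$ odd. The largest odd weight is $m(q-1)-1$, realized only by $\delta_1$ (Lemma \ref{lem:17}); the next is $m(q-1)-3$, whose largest coset leader is $\delta_2$ (Lemma \ref{lem:19}). Writing $M_k$ for the largest coset leader of weight $m(q-1)-k$, Lemma \ref{lem:18} gives $M_3>M_4>M_5>\cdots$, so every coset leader of weight at most $m(q-1)-5$ lies below $M_5<\delta_2$. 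Thus $\delta_3$ is either the second largest coset leader of weight $m(q-1)-3$ or equals $M_5$, and the first step is to establish the former by showing the second largest weight-$(m(q-1)-3)$ coset leader still exceeds $M_5$. The hypothesis $q^m\ge 25$ is exactly what guarantees that three odd coset leaders exist and that this comparison holds; in particular it rules out the degenerate case $q=3,\ m=2$, where only two odd coset leaders exist.

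For $m\ge 3$ I would encode the top coset leaders of weight $m(q-1)-3$ as in Lemma \ref{lem:18}: restricting to the largest candidates (those with a smaller digit are dominated by a shift argument exactly as in the proof of Lemma \ref{lem:18}), such a leader has all digits in $\{q-1,q-2\}$ with exactly three digits equal to $q-2$, the top one being forced at position $m-1$, so it equals $q^m-1-q^{m-1}-q^{e_1}-q^{e_2}$ with $m-1>e_1>e_2\ge 0$. Recording the cyclic gaps $(g_1,g_2,g_3)$ between the three digits $q-2$ gives $g_1+g_2+g_3=m-3$, and by Lemma \ref{lem:5} being a coset leader means $(g_1,g_2,g_3)$ is the lexicographically smallest cyclic rotation, so $g_1$ is the minimal gap. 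Since $e_1=m-2-g_1$ and $e_2=m-3-g_1-g_2$, maximizing the value amounts to first maximizing $g_1$ and then $g_2$, and $\delta_2$ is the most balanced triple. The dichotomy producing the two stated cases is whether this balanced configuration is unique in its minimal-gap class: if $m\equiv 0$ or $1\pmod 3$ (equivalently $3\nmid(m+1)$) the balanced necklace is the only one with minimal gap $\lfloor(m-3)/3\rfloor$, so the runner-up must lower the minimal gap by one and choose the best remaining arrangement, giving $\delta_3=(q-1)q^{m-1}-q^{\lceil\frac{2m-1}{3}\rceil}-q^{\lfloor\frac{m}{3}-1\rfloor}-1$; if $m\equiv 2\pmod 3$ (i.e.\ $3\mid(m+1)$) there is a second admissible multiset $\{b,b,b+2\}$ with the same minimal gap $b=(m-5)/3$, whose unique necklace yields $\delta_3=(q-1)q^{m-1}-q^{\frac{2m-1}{3}}-q^{\frac{m+1}{3}}-1$. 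In each residue class I would verify directly, via Lemma \ref{lem:5}, that the candidate is a coset leader, that it is second largest in its weight class, and that it exceeds $M_5$. The excluded small case $m=2$ (where three digits $q-2$ are impossible) is handled by hand: here $q^m\ge 25$ forces $q\ge 5$, and one checks that $\delta_3=q^2-3q-1$, with base-$q$ digits $(q-4,q-1)$, is a coset leader and the third largest odd coset leader, consistent with the $3\mid(m+1)$ formula at $m=2$.

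For the coset size I would argue as in Lemma \ref{lem:20}. The size $r=|C_{\delta_3}^{(q,2n)}|$ divides $m$, and a proper period $d=m/r<m$ of $\overline{\delta_3}$ would force the three digits $q-2$ to be distributed periodically, so $m/r$ divides $3$; the only nontrivial option is $m/r=3$, which requires $3\mid m$ and all three gaps equal. In every case above the gaps of $\delta_3$ are not all equal, so no such period exists and $r=m$. Equivalently one can reduce $\delta_3(q^r-1)\equiv 0\pmod{q^m-1}$ to a congruence of the shape \eqref{eq:nqmodd1} and rule out $r\in\{m/2,m/3\}$ directly.

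The hard part will be the refined optimization of the second paragraph. Unlike Lemma \ref{lem:19}, which only needs to identify the single most balanced necklace, here one must correctly pin down the runner-up among coset leaders of weight $m(q-1)-3$ and then prove it beats $M_5$; the delicate point is precisely the unique-versus-nonunique minimal-gap dichotomy, which is what splits the final answer according to whether $3\mid(m+1)$.
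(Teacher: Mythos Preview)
Your parity observation---that $i\equiv w_q(i)\pmod 2$, so odd coset leaders are exactly those of weight $m(q-1)-k$ with $k$ odd---is a clean structural insight the paper does not use, and for $m\ge 5$ the necklace encoding of the weight-$(m(q-1)-3)$ leaders with digits in $\{q-1,q-2\}$ does yield the correct runner-up. The paper proceeds more directly: it verifies by brute-force shift comparison that the stated $\delta_3$ is a coset leader, then argues that the open interval $(\delta_3,\delta_2)$ contains a \emph{unique} coset leader, which it exhibits explicitly and observes to be even (weight deficit $2$). Your route is more conceptual; the paper's is computational but avoids the necklace optimization.

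There is, however, a genuine gap. The reduction ``restrict to digits in $\{q-1,q-2\}$'' is justified in Lemma~\ref{lem:18} only for the \emph{largest} coset leader $M_3$ of a given weight; it does not transfer to the second largest. Concretely, for $m=3$ and $m=4$ (both allowed by $q^m\ge 25$) there is only \emph{one} necklace with three digits $q-2$---namely $\delta_2$ itself---so your scheme produces no runner-up at all. In these cases the formula gives $\delta_3=(q-2)q^{m-1}-2$, whose $q$-ary sequence is $(q-3,q-1,\ldots,q-1,q-2)$, with a digit $q-3$; this leader lies outside your $\{q-1,q-2\}$ class yet is the true $\delta_3$. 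The paper handles $m=3,4$ by a separate direct check, and you would need to do the same. Even for $m\ge 5$ you should make the domination argument explicit: any weight-deficit-$3$ coset leader with a digit $\le q-3$ must have leading digit $\le q-3$ (else a shift lowers it), hence value at most $(q-2)q^{m-1}-2$, and one checks this is below your necklace runner-up precisely when $m\ge 5$. The comparison with $M_5$ is likewise asserted but not carried out; it is routine once you note $M_5$ has leading block $(q-2,q-2)$ while your $\delta_3$ begins $(q-2,q-1,\ldots)$ for $m\ge 5$, but it should be written down.
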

\begin{proof}
 We prove the conclusions of this lemma only for the case that $3 \nmid (m+1)$. The proof of the conclusions for $3\mid (m+1)$ is 
 similar and omitted.

Let
\begin{equation}\label{M:val}
M=\left[(q-1)q^{m-1}-q^{\lceil\frac{2m-1}{3}\rceil}-q^{\lfloor\frac{m}{3}-1\rfloor}-1\right]q^i \bmod{(q^m-1)}, 
\end{equation} 
where $1 \leq i \leq m-1$. 
To show that $(q-1)q^{m-1}-q^{\lceil\frac{2m-1}{3}\rceil}-q^{\lfloor\frac{m}{3}-1\rfloor}-1$ is a coset leader, 
we only need to prove that for any $1\leq i\leq m-1$, 
\begin{equation}\label{eq:1fdfM}
M\geq (q-1)q^{m-1}-q^{\lceil\frac{2m-1}{3}\rceil}-q^{\lfloor\frac{m}{3}-1\rfloor}-1.
\end{equation}
The proof will be carried out by distinguishing the following three cases.

\noindent{\bf Case 1}: $1\leq i\leq \lfloor\frac{m-2}3\rfloor$. Note that $\lceil\frac{2m-1}3 \rceil+\lfloor \frac{m-2}3 \rfloor=m-1$. From (\ref{M:val}), we have
\begin{equation}\label{val:M}
\begin{split}
M=q^m-q^{\lceil \frac{2m-1}{3}\rceil+i}-q^{\lfloor \frac{m}{3}-1\rfloor+i}-q^{i-1}-1.
\end{split}
\end{equation}
If $i<\lfloor\frac{m-2}3\rfloor$, it is easily seen that (\ref{eq:1fdfM}) holds.
If $i=\lfloor\frac{m-2}3\rfloor$, then (\ref{val:M}) becomes
$$M=q^m-q^{m-1}-q^{\lfloor \frac{2m-6}{3}\rfloor}-q^{\lfloor\frac{m-5}{3}\rfloor}-1.$$
It is easy to check that $\lfloor \frac{2m-6}{3}\rfloor < \lceil \frac{2m-1}{3}\rceil$ for any $m\geq 3$, then (\ref{eq:1fdfM}) holds and the equality never holds.

\noindent{\bf Case 2}: $\lfloor\frac{m-2}3\rfloor+1\leq i\leq  \lceil \frac{2m}3 \rceil$. From (\ref{M:val}) we have
\begin{equation*}
\begin{split}
M=q^m-q^{\lfloor \frac{m}{3}-1\rfloor+i}-q^{i-1}-q^{i-\lfloor \frac{m-2}3 \rfloor-1}-1.
\end{split}
\end{equation*}
With an analysis similar as in Case 1, we can prove that (\ref{eq:1fdfM}) holds and the equality never holds.

\noindent{\bf Case 3}: $\lceil\frac{2m}3 \rceil+1\leq i\leq m-1$. From (\ref{M:val}) we have
\begin{equation*}
\begin{split}
M=q^m-q^{i-1}-q^{i-\lfloor \frac{m-2}3 \rfloor-1}-q^{i-\lceil \frac{2m}{3}\rceil-1} -1.
\end{split}
\end{equation*}
It is obvious that $i-1<m-1$ as $i\leq m-1$. Then (\ref{eq:1fdfM}) holds and the equality never holds.

Collecting the conclusions in all the cases above,   we deduce that $(q-1)q^{m-1}-q^{\lceil\frac{2m-1}{3}\rceil}-q^{\lfloor\frac{m}{3}-1\rfloor}-1$ is a coset leader if $3 \nmid (m+1)$. Since the equality never holds for any $1\leq i\leq m-1$, we have $|C_{\delta_3}^{(q,2n)}|=m$.

 We now prove that there does not exist an odd coset leader in the range $[\delta_3-1, \delta_2-1]$. Since $3 \nmid (m+1)$, we have $m \equiv 0 \pmod{3}$ or $m \equiv 1 \pmod{3}$. When $m \geq 6$, it is easily seen that the sequence of $\delta_3$ is
\begin{equation*}
\begin{split}
\overline{\delta}_3=(\mathop{q-2}_{m-1},\underbrace{q-1, \ldots, q-1,}_{\frac{m}{3}-2} \mathop{q-2}_{\frac{2m}{3}},
\underbrace{q-1, \ldots, q-1,}_{ \frac{m}{3}} \mathop{q-2}_{\frac{m}{3}-1},
\underbrace{q-1, \ldots, q-1}_{ \frac{m}{3}-1})
\end{split}
\end{equation*}
if $m \equiv 0 \pmod{3}$ and
\begin{equation*}
\begin{split}
\overline{\delta}_3=(\mathop{q-2}\limits_{m-1},\underbrace{q-1, \ldots, q-1,}_{ \frac{m-7}{3}} \mathop{q-2}_{\frac{2m+1}{3}},
\underbrace{q-1, \ldots, q-1,}_{ \frac{m+2}{3}} \mathop{q-2}_{\frac{m-4}{3}},
\underbrace{q-1, \ldots, q-1}_{\frac{m-4}{3}})
\end{split}
\end{equation*}
if $m \equiv 1 \pmod{3}$.

Since $\delta_2=(q-1)q^{m-1}-q^{\lfloor\frac{2m-1}{3}\rfloor}-q^{\lfloor\frac{m-1}{3}\rfloor}-1$,
the sequence of $\delta_2$ is
\begin{equation*}
\begin{split}
\overline{\delta_2}=(\mathop{q-2}_{m-1},\underbrace{q-1, \ldots, q-1,}_{\frac{m}{3}-1} \mathop{q-2}_{\frac{2m}{3}-1},
\underbrace{q-1, \ldots, q-1,}_{ \frac{m}{3}-1} \mathop{q-2}_{\frac{m}{3}-1},
\underbrace{q-1, \ldots, q-1}_{ \frac{m}{3}-1})
\end{split}
\end{equation*}
if $m \equiv 0 \pmod{3}$ and
\begin{equation*}
\begin{split}
\overline{\delta_2}=(\mathop{q-2}\limits_{m-1},\underbrace{q-1, \ldots, q-1,}_{ \frac{m-4}{3}} \mathop{q-2}_{\frac{2m-2}{3}}, \underbrace{q-1, \ldots, q-1,}_{ \frac{m-1}{3}} \mathop{q-2}\limits_{\frac{m-4}{3}},
\underbrace{q-1, \ldots, q-1}_{\frac{m-4}{3}})
\end{split}
\end{equation*}
if $m\equiv 1 \pmod{3}$.

From the sequences of $\delta_3$, $\delta_2$ and Lemma \ref {lem:5}, it is easy to see that the sequence of the unique coset leader in the range $[\delta_3-1, \delta_2-1]$ is
\begin{equation*}
\begin{split}
(\mathop{q-2}_{m-1},\underbrace{q-1, \ldots, q-1,}_{\frac{m}{3}-2} \mathop{q-2}_{\frac{2m}{3}},
\underbrace{q-1, \ldots, q-1}_{ \frac{2m}{3}})
\end{split}
\end{equation*}
if $m \equiv 0 \pmod{3}$ and
\begin{equation*}
\begin{split}
(\mathop{q-2}\limits_{m-1},\underbrace{q-1, \ldots, q-1,}_{ \frac{m-7}{3}} \mathop{q-2}_{\frac{2m+1}{3}},
\underbrace{q-1, \ldots, q-1}_{ \frac{2m+1}{3}})
\end{split}
\end{equation*}
 if $m\equiv 1 \pmod{3}$. However, from the sequence we know that the unique coset leader is even. Hence,
there does not exist an odd coset leader in the range $[\delta_3-1, \delta_2-1]$. With an analysis similar as in the case 
$m\geq 6$, when $m=3$ and $m=4$, we also can obtain the desired results. This completes the proof.
\end{proof} 

With the help of Lemmas \ref{lem:2} and \ref{lem:21}, we can obtain the following theorem.

\begin{theorem}\label{thm:22}
 Let $n=\frac{q^m-1}2$, where $m\geq 2$ and $q$ is an odd prime power. Let $\delta_1$, $\delta_2$ and $\delta_3$ be given in Lemma \ref{lem:17}, Lemma \ref{lem:19} and Lemma \ref{lem:21}, respectively. Let $\delta$ be an integer. Then the negacyclic BCH code $\mathcal{C}_{(q, n,\delta,0)}$ has parameters
\begin{equation*}
\begin{cases}
	[\frac{q^m-1}{2},m,\frac{\delta_1+1}{2}] &{\rm if}~\frac{\delta_2+3}{2}\leq \delta \leq \frac{\delta_1+1}{2},\\
	[\frac{q^m-1}{2},m+\kappa,d\geq \frac{\delta_2+1}{2}] &{\rm if}~\frac{\delta_3+3}{2}\leq \delta \leq \frac{\delta_2+1}{2}; 
\end{cases}
\end{equation*}
furthermore, if $q^m\geq 25$, the negacyclic BCH code $\mathcal{C}_{(q, n,(\delta_3+1)/2,0)}$ has parameters $[\frac{q^m-1}{2},2m+\kappa,d\geq \frac{\delta_3+1}{2}] $, where $\kappa =\frac{m}3$ if $m\equiv 0 \pmod{3}$, and $\kappa =m$ if $m\not \equiv 0 \pmod{3}$.
\end{theorem}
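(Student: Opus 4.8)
The plan is to assemble Theorem~\ref{thm:22} directly from the structural facts about the first three largest odd coset leaders that were established in Lemmas~\ref{lem:17}--\ref{lem:21}, together with the BCH bound of Lemma~\ref{lem:2} and a dimension count via cyclotomic coset sizes. The key observation driving the whole argument is that the defining set of $\mathcal{C}_{(q,n,\delta,0)}$ consists of the cosets $C_{1+2i}^{(q,2n)}$ for $0\le i\le \delta-2$, so the generator polynomial changes only when the index $1+2(\delta-1)$ passes through an odd coset leader. Since the largest three odd coset leaders are precisely $\delta_1>\delta_2>\delta_3$, and since between consecutive ones there are no further odd coset leaders (this is exactly what Lemma~\ref{lem:21} verifies for the gap $[\delta_3-1,\delta_2-1]$, and Lemma~\ref{lem:19} for the gap above $\delta_2$), the dimension stays constant on each of the three indicated intervals of $\delta$.

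First I would treat the top interval $\frac{\delta_2+3}{2}\le \delta\le \frac{\delta_1+1}{2}$. Here the largest odd index appearing is $2\delta-3\le \delta_1$, and the only odd coset leader in the range $(\delta_2,\delta_1]$ is $\delta_1$ itself. Hence the defining set is exactly $C_{\delta_1}^{(q,2n)}$, whose size is $m$ by Lemma~\ref{lem:8}/Lemma~\ref{lem:17}, giving dimension $m$. For the distance, since $\delta_1=(q-1)q^{m-1}-1$ is a coset leader and the code has a single nonzero cyclotomic coset in its defining set, the zeros of $g(x)$ are $\{\beta^{s}: s\in C_{\delta_1}^{(q,2n)}\}$; I would argue the minimum distance equals $\frac{\delta_1+1}{2}$ either by recognizing the code as (equivalent to) an irreducible cyclic code of this dimension whose weight is forced, or by pairing the BCH-type lower bound from Lemma~\ref{lem:2} with an upper bound (e.g.\ the fact that a dimension-$m$ negacyclic code with this check polynomial has all nonzero codewords of a single weight). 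The second interval $\frac{\delta_3+3}{2}\le \delta\le \frac{\delta_2+1}{2}$ is handled identically: the defining set is $C_{\delta_1}^{(q,2n)}\cup C_{\delta_2}^{(q,2n)}$, so the dimension is $m+|C_{\delta_2}^{(q,2n)}|=m+\kappa$ by Lemma~\ref{lem:20}, and Lemma~\ref{lem:2} applied with the consecutive roots $\beta^{1+2\cdot 0},\dots,\beta^{1+2(\delta-2)}$ gives $d\ge\frac{\delta_2+1}{2}$ once one checks that $1+2(\delta-1)$ reaches $\delta_2$ when $\delta=\frac{\delta_2+1}{2}$, i.e.\ $\delta_2=2\delta-1$.

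For the final claim, under $q^m\ge 25$ Lemma~\ref{lem:21} guarantees $\delta_3$ is a coset leader with $|C_{\delta_3}^{(q,2n)}|=m$, and it also certifies that no odd coset leader lies strictly between $\delta_3$ and $\delta_2$. Taking $\delta=\frac{\delta_3+1}{2}$, the defining set is $C_{\delta_1}^{(q,2n)}\cup C_{\delta_2}^{(q,2n)}\cup C_{\delta_3}^{(q,2n)}$ (the intermediate coset found in Lemma~\ref{lem:21} is \emph{even}, hence contributes an index of the form $1+2i$ only if it is odd, which it is not, so it is not forced into this particular defining set), yielding dimension $m+\kappa+m=2m+\kappa$; the bound $d\ge\frac{\delta_3+1}{2}$ again follows from Lemma~\ref{lem:2} since $\delta_3=2\cdot\frac{\delta_3+1}{2}-1$ makes $\beta^{1},\beta^{3},\dots,\beta^{\delta_3}$ consecutive roots. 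The bulk of the proof is thus bookkeeping: converting each largest-odd-coset-leader statement into the exact index range of $\delta$, and summing coset sizes.

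The main obstacle I anticipate is the \emph{exactness} of the minimum distance in the first case, $d=\frac{\delta_1+1}{2}$, as opposed to a mere lower bound. The BCH bound of Lemma~\ref{lem:2} only delivers $d\ge\frac{\delta_1+1}{2}$, so I must supply a matching upper bound; the cleanest route is to observe that a negacyclic code whose nonzeros form a single $q$-cyclotomic coset is a (shifted) irreducible cyclic code, and then use a weight argument or the sphere-packing/Lemma~\ref{bound2} bound, or exhibit an explicit low-weight codeword via the trace description of the dual, to pin the distance exactly. A secondary technical point is checking that the index arithmetic $2\delta-3$ versus $\delta_2,\delta_3$ lands correctly at the interval endpoints (so that no extra coset leader sneaks in and no required one is omitted), and that for the case $m\equiv 0\pmod 3$ the reduced coset size $\kappa=m/3$ from Lemma~\ref{lem:20} is correctly propagated into the total dimension count.
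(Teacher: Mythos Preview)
Your overall strategy is correct and matches the paper's proof: identify the check polynomial on each $\delta$-interval via the first three odd coset leaders, sum coset sizes for the dimension, and invoke Lemma~\ref{lem:2} for the distance lower bounds. One terminological slip to fix: what you repeatedly call the ``defining set'' (e.g.\ ``the defining set is exactly $C_{\delta_1}^{(q,2n)}$'') is in fact the set of \emph{nonzeros}, i.e.\ the zeros of the check polynomial; the actual defining set (the zeros of the generator) is its complement among the odd residues modulo $2n$. Your dimension counts come out right only because you are implicitly computing $\dim=\deg(h)$ rather than $n-\deg(g)$, so be sure to phrase this correctly in the write-up.

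For the exact distance $d=\frac{\delta_1+1}{2}$ in the first case, the paper takes precisely the route you flag as cleanest: it writes the code via its trace expression $\bc(a)=(\tr_{q^m/q}(a\theta^i))_{i=0}^{n-1}$ with $\theta=\beta^{-\delta_1}$ a primitive $(q^m-1)$-th root of unity, and then evaluates $\wt(\bc(a))$ for $a\neq 0$ by a direct additive-character computation, obtaining the constant value $\frac{(q-1)q^{m-1}}{2}=\frac{\delta_1+1}{2}$. Thus the code is a one-weight code and the BCH lower bound is tight; you should carry out this short character-sum calculation explicitly rather than leave it as an alternative among several.
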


\begin{proof}
	If $\frac{\delta_2+3}2\leq \delta \leq \frac{\delta_1+1}2$, then $\C_{(q, n, \delta, 0)}$ is the irreducible negacyclic code of length $n$ over $\gf(q)$ with check polynomial $\m_{\beta^{\delta_1}}(x)$. Then the trace expression of $\C_{(q, n, \delta, 0)}$ is given by
	   $$\C_{(q, n, \delta, 0)}=\{\bc(a)=(\tr_{q^m/q}(a \theta^{i}))_{i=0}^{n-1}:a\in \gf(q^m)\}, $$
	   where $\theta=\beta^{-\delta_1}=\beta^{q^{m-1}}$. Clearly, $\theta$ is a primitive $(q^m-1)$-th root of unity. For $a\in \gf(q^m)^*$,
	   \begin{align*}
	   	\wt(\bc(a))&=n-\frac{1}q\sum_{i=0}^{n-1}\sum_{x\in \gf(q)}\zeta_p^{\tr_{q/p}(x \tr_{q^m/q}(a\theta^i) )}\\
	   	&=n-\frac{1}{2q}\sum_{i=0}^{2n-1}\sum_{x\in \gf(q)}\zeta_p^{\tr_{q/p}(x \tr_{q^m/q}(a\theta^i) )}\\
	   	&=n-\frac{1}{2q}\sum_{x\in \gf(q)}\sum_{y\in \gf(q^m)^*} \zeta_p^{\tr_{q^m/p}(xay) }\\
	   	&=\frac{(q-1)n}{q}+\frac{q-1}{2q}-\frac{1}{2q}\sum_{x\in \gf(q)^*}\sum_{y\in \gf(q^m)} \zeta_p^{\tr_{q^m/p}(xay) }\\
	   	&=\frac{(q-1)q^{m-1}}2=\frac{\delta_1+1}2.
	   \end{align*}
	   where $p={\rm Char}(\gf(q))$ and $\zeta_p$ is a primitive $p$-th root of unity. Hence, $\C_{(q, n,\delta, 0)}$ is an $[n, m, \frac{\delta_1+1}2]$ one-weight code.
	   
	  If $\frac{\delta_3+3}2\leq \delta \leq \frac{\delta_2+1}2$, then $\C_{(q, n, \delta, 0)}$ is the negacyclic code of length $n$ over $\gf(q)$ with check polynomial $\m_{\beta^{\delta_1}}(x)\m_{\beta^{\delta_2}}(x)$. From Lemmas \ref{lem:2}, \ref{lem:17}, \ref{lem:19} and \ref{lem:20}, the desired conclusion then follows.
	  
	   If $\delta=\frac{\delta_3+1}2$, then $\C_{(q, n, \delta, 0)}$ is the cyclic code of length $n$ over $\gf(q)$ with check polynomial $\m_{\beta^{\delta_1}}(x)\m_{\beta^{\delta_2}}(x)\m_{\beta^{\delta_3}}(x)$. From Lemmas \ref{lem:2}, \ref{lem:17}, \ref{lem:19}, \ref{lem:20} and \ref{lem:21}, the desired conclusion then follows.   
\end{proof}

\begin{example}\label{example-02}
We have the following examples of the code of Theorem \ref{thm:22}.
\begin{itemize}
\item Let $q=3$, $m=5$ and $67\leq \delta \leq 81$, then the code $\mathcal{C}_{(3,121,\delta,0)}$ has parameters $[121,5,81]$.
\item Let $q=3$, $m=5$ and $ \delta = 63$, then the code $\mathcal{C}_{(3,121,\delta,0)}$ has parameters $[121,15, 63]$.
\item Let $q=3$, $m=4$ and $22\leq \delta \leq 27$, then the code $\mathcal{C}_{(3,40,\delta,0)}$ has parameters $[40,4,27]$.
\item Let $q=3$, $m=4$ and $14\leq \delta \leq 21$, then the code $\mathcal{C}_{(3,40,\delta,0)}$ has parameters $[40,8, 21]$.
\item Let $q=5$, $m=3$ and $48\leq \delta \leq 50$, then the code $\mathcal{C}_{(5,62,\delta,0)}$ has parameters $[62,3,50]$.
\end{itemize}
All the five codes are distance-optimal according to the tables of best codes known  in \cite{Grassl2006}.
\end{example}

\section{Negacyclic BCH codes with length $\frac{q^m+1}{2}$}\label{sec-MDSqodd}

Throughout this section, we always let $m\geq2$, $n=\frac{q^m+1}{2}$ and study the negacyclic BCH codes over $\gf(q)$ with length $n$. In this section, whenever we say ``$x$ is a coset leader", we mean that ``$x$ is a coset leader modulo $q^m+1$". 
That is to say, we omit the phrase ``modulo $q^m+1$".
 
 We first settle the parameters of this class of negacyclic BCH codes with some large dimensions. In fact, this work has been done when the designed distance of the code $\C_{(q,n,\delta,0)}$ is in the range $2\leq \delta \leq \frac{q^{\lfloor \frac{m-1}{2}\rfloor}+3}2$ in \cite{Pang18}. We here just give the parameters of $\C_{(q,n,\delta,0)}$ with the designed distance in a larger range. We start with the following result, which was given in \cite{Lid017,Liu17}.

\begin{lemma}\cite{Lid017,Liu17}
 Let $1\leq i\leq q^{m/2}$ if $m\geq 4$ is an even integer and $1\leq i< q^{(m+1)/2}-q+1$ if $m\geq 3$ is an odd integer. 
 Then $i$ is a coset leader if and only if $i \not\equiv 0 \pmod {q}$. Moreover, $|C_i^{(q,2n)}|=2m$.
\end{lemma}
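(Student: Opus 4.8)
The plan is to prove the equivalence and the size statement together by transporting the problem to the modulus $q^{2m}-1$, where Lemma \ref{lem:5} characterizes coset leaders (applied with $2m$ in place of $m$). First I would settle the easy implication ``$q\mid i\Rightarrow i$ is not a coset leader'': since $q^{2m}\equiv 1\pmod{q^m+1}$, one has $iq^{2m-1}\equiv iq^{-1}\equiv i/q\pmod{q^m+1}$, and as $1\le i/q<i\le q^m$ the residue $i/q$ lies in the coset of $i$ and is strictly smaller, so $i$ is not a leader. This needs no hypothesis on the range.

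For the converse and the coset size I would use the map $i\mapsto \hat i:=i(q^m-1)\bmod(q^{2m}-1)$ from $\Z_{q^m+1}$ to $\Z_{q^{2m}-1}$. Because $q^{2m}-1=(q^m-1)(q^m+1)$, this map is well defined and injective; it is $\Z$-linear, hence equivariant for multiplication by $q$; and on the representatives $0\le i\le q^m$ it satisfies $\hat i=i(q^m-1)\in[0,q^{2m}-2]$ with no reduction, so it is strictly increasing there. Consequently the coset of $i$ modulo $q^m+1$ is carried bijectively and order-preservingly onto the coset of $\hat i$ modulo $q^{2m}-1$; thus $i$ is a coset leader modulo $q^m+1$ if and only if $\hat i$ is a coset leader modulo $q^{2m}-1$, and $|C_i^{(q,2n)}|$ equals the number of distinct cyclic shifts of the length-$2m$ base-$q$ string of $\hat i$. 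Writing $\hat i=(i-1)q^m+(q^m-i)$ shows that this string carries the base-$q$ digits of $i-1$ in its top $m$ positions and those of $q^m-i$ in its bottom $m$ positions; when $q\nmid i$ (so $i_0\ge 1$ and no borrow occurs) it equals
\begin{equation*}
\overline{\hat i}=(i_{m-1},\ldots,i_1,i_0-1,\,q-1-i_{m-1},\ldots,q-1-i_1,\,q-i_0),
\end{equation*}
which enjoys the antipodal structure $a_s+a_{s+m}=q-1$ at every position. Lemma \ref{lem:5} then reduces everything to showing that $\overline{\hat i}$ is the least among its cyclic shifts and has least period $2m$.

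I would run the shift comparison by locating the runs of the minimal digit $0$. In the even case ($i\le q^{m/2}$, $q\nmid i$) the digits of $i-1$ occupy only the lowest $m/2$ of the top $m$ positions, so $\overline{\hat i}$ opens with a run of at least $m/2$ zeros; the bottom entry $q-i_0$ is nonzero, which is exactly where $q\nmid i$ enters and prevents this run from wrapping around. Since the top $m/2$ digits of $q^m-i$ are all $q-1$, every other maximal zero-run has length at most $m/2-1$. Hence the leading run is the strictly unique longest one, forcing the identity rotation to be the least shift and the period to be $2m$; this yields simultaneously that $i$ is a coset leader and $|C_i^{(q,2n)}|=2m$.

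The odd case is structurally identical, and it is here that the delicate point lies, which I expect to be the main obstacle. When $m$ is odd both the leading zero-run and a zero-run inside the bottom half can reach the common length $(m-1)/2$, so run-length alone no longer isolates the identity rotation and a digit-by-digit comparison is required. The bound $i<q^{(m+1)/2}-q+1$ is calibrated precisely to defeat this tie: at the threshold value $i=q^{(m+1)/2}-q+1$ a second maximal zero-run appears and produces either a strictly smaller rotation or a proper period (for example, with $q=3$, $m=5$ the string of $i=25$ is $(0,0,2,2,0,2,2,0,0,2)$, whose rotation beginning at the second zero-run is smaller, so $25$ fails to be a leader), whereas the strict inequality keeps $i-1$ small enough that the leading run wins. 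This also explains the asymmetry in the two ranges: for even $m$ the leading run has length $m/2$ and always strictly dominates the bottom runs (length $\le m/2-1$), so no refinement of the bound is needed, while for odd $m$ the correction $-q+1$ is essential. Carrying out this boundary comparison cleanly for all odd $m$ is the crux of the argument.
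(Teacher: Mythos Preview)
The paper does not prove this lemma; it is quoted from \cite{Lid017,Liu17} without argument, so there is no in-paper proof to compare against. Your proposal must therefore stand on its own.

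Your transport $i\mapsto\hat i=i(q^m-1)$ into $\Z_{q^{2m}-1}$ is correct and is a clean device: because $q^{2m}-1=(q^m-1)(q^m+1)$, the map is $q$-equivariant and strictly increasing on $[0,q^m]$, so coset-leadership and coset size transfer exactly, and Lemma~\ref{lem:5} becomes applicable to the length-$2m$ digit string of $\hat i$. The even case is essentially complete once one notes that when the leading zero run has length $\ell\ge m/2$, both the residual top-half window and the window in the bottom half where zeros can occur have width at most $m-\ell-1\le m/2-1$, so the leading run is the unique longest and the identity rotation is strictly minimal.

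The odd case, however, has a genuine gap, and the difficulty is not quite where you locate it. You place the potential tie in the bottom half, but in fact the hypothesis $i<q^{(m+1)/2}-q+1$ already rules that out: a bottom-half zero run of length $(m-1)/2$ would force $i_1=\cdots=i_{(m-1)/2}=q-1$, hence $i\ge q^{(m+1)/2}-q+i_0\ge q^{(m+1)/2}-q+1$. The tie that actually survives inside the range is a \emph{second top-half run}: when $i=a\,q^{(m-1)/2}+1$ with $1\le a\le q-1$ (which satisfies the bound for all $m\ge5$, and for $m=3$ when $a\le q-2$), the top half reads $(m-1)/2$ zeros, then $a$, then $(m-1)/2$ zeros, giving two runs of the maximal length $(m-1)/2$. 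Resolving this requires comparing the digit $a$ that follows the leading run against the digit $q-1$ that follows the competing run, and one further step when $a=q-1$. This comparison does go through, but your run-length heuristic alone does not settle it; until that digit-by-digit tie-break is written out, the odd-$m$ argument is incomplete.
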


With an analysis similar as in Theorem \ref{thm:15} and the help of Lemma \ref{lem:3}, we have the following theorem, 
which extends the work in \cite{Pang18}. 

\begin{theorem}\label{thm:25}
Let $2\leq \delta \leq \frac{q^{m/2}+3}{2}$ if $m\geq 4$ is an even integer and $2\leq \delta \leq \frac{q^{(m+1)/2}-q+2}{2}$ if $m\geq 3$ is an odd integer. Then the negacyclic BCH code $\C_{(q,n,\delta,0)}$ has parameters
$$\left[n,\, n-2m\left \lceil \frac{(2\delta-3)(q-1)}{2q} \right\rceil, \, d \right]$$
and generator polynomial
\begin{align*}
	g(x)=\prod_{\substack{0\leq i\leq \delta-2\\ i\not\equiv \frac{q-1}2 \pmod{q}}} \m_{\beta^{1+2i}}(x),
\end{align*}
where 
\begin{align*}
	d \geq \begin{cases}
	2\delta +1 &{\rm if~}\delta \equiv \frac{q+1}2 \pmod{q},  \\
	2\delta-1 &{\rm otherwise}.
\end{cases}
\end{align*}
\end{theorem}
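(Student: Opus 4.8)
The plan is to mirror the proof of Theorem \ref{thm:15} almost verbatim, replacing the coset-leader input (Lemmas \ref{lem:13} and \ref{lem:14}) by the lemma of \cite{Lid017,Liu17} stated just above, and replacing the BCH bound of Lemma \ref{lem:2} by the sharper bound of Lemma \ref{lem:3}. There are three deliverables to establish in turn: the generator polynomial, the dimension, and the two-case lower bound on $d$.

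First I would pin down the generator polynomial. For $0\leq i\leq \delta-2$ with $1+2i\not\equiv 0\pmod q$, I check that $1+2i$ lies in the range of the cited lemma: the hypothesis $\delta\leq\frac{q^{m/2}+3}{2}$ (even $m$) forces $1+2(\delta-2)=2\delta-3\leq q^{m/2}$, while $\delta\leq\frac{q^{(m+1)/2}-q+2}{2}$ (odd $m$) forces $2\delta-3<q^{(m+1)/2}-q+1$; hence each such $1+2i$ is a coset leader modulo $q^m+1$ with $|C_{1+2i}^{(q,2n)}|=2m$. For the excluded indices, namely those with $1+2i\equiv 0\pmod q$ (equivalently $i\equiv\frac{q-1}{2}\pmod q$), I would show their minimal polynomials are already present: writing $1+2i=qt$, the relation $q^{2m}\equiv 1\pmod{q^m+1}$ shows $t\in C_{1+2i}^{(q,2n)}$, and $t$ is again odd and strictly smaller; iterating the division by $q$ yields an odd coset leader $1+2i_0$ with $i_0<i$, $i_0\not\equiv\frac{q-1}{2}\pmod q$, and still $i_0\in[0,\delta-2]$, so that $\m_{\beta^{1+2i}}(x)=\m_{\beta^{1+2i_0}}(x)$ already appears in the product. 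This produces exactly the displayed $g(x)$.

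The dimension count is then identical to that in Theorem \ref{thm:15}: the number of $i\in[0,\delta-2]$ with $1+2i\equiv 0\pmod q$ is $\lfloor\frac{2\delta-3-q}{2q}\rfloor+1$, so the number of surviving indices is $\delta-2-\lfloor\frac{2\delta-3-q}{2q}\rfloor=\lceil\frac{(2\delta-3)(q-1)}{2q}\rceil$. Since distinct surviving coset leaders index distinct $q$-cyclotomic cosets, the associated minimal polynomials are pairwise distinct and each has degree $2m$; hence $\deg g(x)=2m\lceil\frac{(2\delta-3)(q-1)}{2q}\rceil$, giving the stated dimension $n-2m\lceil\frac{(2\delta-3)(q-1)}{2q}\rceil$.

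Finally, for the minimum distance I would invoke Lemma \ref{lem:3}, which immediately gives $d\geq 2\delta-1$. For the improved bound when $\delta\equiv\frac{q+1}{2}\pmod q$, note that then $1+2(\delta-1)=2\delta-1\equiv 0\pmod q$, so by the reduction argument above $\beta^{1+2(\delta-1)}$ is already a root of $g(x)$; consequently $g(x)=g_{(q,n,\delta+1,0)}(x)$, and applying Lemma \ref{lem:3} with designed distance $\delta+1$ yields $d\geq 2(\delta+1)-1=2\delta+1$. The main obstacle is the first step: one must verify carefully that every excluded minimal polynomial is genuinely captured by a surviving one, and that the division-by-$q$ reduction never leaves the range in which the cited coset-leader lemma remains valid. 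Once this is secured, the dimension and distance steps are routine, given Lemma \ref{lem:3} and the template of Theorem \ref{thm:15}.
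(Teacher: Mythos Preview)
Your proposal is correct and follows exactly the approach the paper indicates: mirror the proof of Theorem~\ref{thm:15}, replacing Lemmas~\ref{lem:13}--\ref{lem:14} by the coset-leader lemma of \cite{Lid017,Liu17} and replacing Lemma~\ref{lem:2} by Lemma~\ref{lem:3}. In fact you are more careful than the paper, which simply asserts the generator-polynomial form ``by the definition of negacyclic BCH codes'' without spelling out the division-by-$q$ reduction showing that each excluded factor $\m_{\beta^{1+2i}}(x)$ with $q\mid 1+2i$ already appears among the surviving ones.
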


\begin{example}\label{example-03}
We have the following examples of the code of Theorem \ref{thm:25}.
\begin{itemize}
\item Let $q=3$ and $m=5$, then the code $\mathcal{C}_{(3,122,3,0)}$ has parameters $[122,112, 5]$.
\item Let $q=5$ and $m=3$, then the code $\mathcal{C}_{(5,63,4,0)}$ has parameters $[63,51, 7]$.
\item Let $q=5$ and $m=3$, then the code $\mathcal{C}_{(5,63,5,0)}$ has parameters $[63,45,9]$.
\item Let $q=5$ and $m=2$, then the code $\mathcal{C}_{(5,13,4,0)}$ has parameters $[13,5,7]$.
\end{itemize}
All the four codes are distance-optimal according to the tables of best codes known  in \cite{Grassl2006} 
and their minimum distances achieve the lower bounds in Theorem \ref{thm:25}.  
\end{example}

In the rest of this section, we will study negacyclic BCH codes of length $n$ with some small dimensions. 
To this end, we need to find the first few largest odd coset leaders. From Lemma \ref{lem:7}, we can obtain 
the following results.

\begin{lemma}\label{lem:27}
Let $m\geq 2$ be an integer and $q$ be an odd prime power. Then the largest odd coset leader 
\begin{equation*}
\delta_1=\begin{cases}
n &{\rm if}~q^m\equiv 1 \pmod{4},\\
\frac{(q-1)n}{q+1}&{\rm if}~q^m\equiv 3 \pmod{4}.	
\end{cases}
\end{equation*}
Moreover, $|C_{\delta_1}^{(q,2n)}|=1$ if $q^m\equiv 1 \pmod{4}$ and $|C_{\delta_1}^{(q,2n)}|=2$ if $q^m\equiv 3 \pmod{4}$.
\end{lemma}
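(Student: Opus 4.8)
The plan is to read off $\delta_1$ from the list of largest coset leaders supplied by Lemma \ref{lem:7}, after deciding which of them are odd. Recall that here the modulus is $N=q^m+1=2n$, that $q^m\equiv -1\pmod N$ forces the cosets $\{i,N-i\}$ to coincide, and hence that $\phi_1=n=\frac{q^m+1}{2}$ is the largest coset leader, with $|C_{\phi_1}^{(q,2n)}|=1$. Thus the whole argument reduces to tracking parities of the top coset leaders.

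First I would determine the parity of $\phi_1=n$. Since $n=\frac{q^m+1}{2}$, one has $n$ odd $\iff q^m+1\equiv 2\pmod 4 \iff q^m\equiv 1\pmod 4$. This disposes of the first case immediately: if $q^m\equiv 1\pmod 4$ then $\phi_1=n$ is itself odd, and being the largest coset leader it is a fortiori the largest odd coset leader, so $\delta_1=n$ and $|C_{\delta_1}^{(q,2n)}|=|C_{\phi_1}^{(q,2n)}|=1$.

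For the remaining case $q^m\equiv 3\pmod 4$, the key structural observation is that this congruence forces $q\equiv 3\pmod 4$ and $m$ odd: if $q\equiv 1\pmod 4$ then $q^m\equiv 1$, while if $q\equiv 3\pmod 4$ then $q^m\equiv(-1)^m\pmod 4$. This is exactly what makes Lemma \ref{lem:7} applicable, since there the value $\phi_2=\frac{(q-1)(q^m+1)}{2(q+1)}=\frac{(q-1)n}{q+1}$ and the size $|C_{\phi_2}^{(q,2n)}|=2$ are guaranteed only for odd $m$. Because $\phi_1=n$ is now even and $\phi_2$ is the second largest coset leader, the largest odd coset leader can only be $\phi_2$, provided $\phi_2$ is odd. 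To verify this I would factor $q^m+1=(q+1)B$ with $B=\sum_{i=0}^{m-1}(-1)^i q^{m-1-i}$, so that $\phi_2=\frac{(q-1)B}{2}$; since $q\equiv 3\pmod 4$ gives $q-1=2c$ with $c$ odd, and $B\equiv m\equiv 1\pmod 2$ as a sum of $m$ odd terms, one gets $\phi_2=cB$, a product of two odd integers, hence odd. Therefore $\delta_1=\phi_2=\frac{(q-1)n}{q+1}$ and $|C_{\delta_1}^{(q,2n)}|=2$.

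The only delicate point is the bookkeeping in the last case: one must notice that $q^m\equiv 3\pmod 4$ forces $m$ odd, so that the formula for $\phi_2$ from Lemma \ref{lem:7} is legitimately available, and then carry out the parity computation for $\phi_2$ correctly. Everything else is immediate from the facts that $\phi_1=n$ is the top coset leader and $\phi_2$ is the next one, so no odd value can sneak in strictly between $\phi_2$ and $\phi_1$.
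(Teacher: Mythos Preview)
Your proof is correct and follows essentially the same route as the paper's: both invoke Lemma~\ref{lem:7}, check the parity of $\phi_1=n$ to settle the case $q^m\equiv 1\pmod 4$, observe that $q^m\equiv 3\pmod 4$ forces $q\equiv 3\pmod 4$ and $m$ odd so that $\phi_2$ is available, and then verify that $\phi_2=\frac{q-1}{2}\bigl(q^{m-1}-q^{m-2}+\cdots+1\bigr)$ is odd. Your parity check for $\phi_2$ (writing it as $cB$ with both factors odd) is in fact spelled out more explicitly than in the paper.
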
 

\begin{proof}
From Lemma \ref{lem:7}, we know that the first largest coset leader is $n$. It is easily seen that $n$ is an odd integer if $q^m\equiv 1 \pmod{4}$ and $n$ is an even integer if $q^m\equiv 3 \pmod{4}$. Thus, $\delta_1=n$ if $q^m\equiv 1 \pmod{4}$.

It is obvious that $q^m\equiv 3 \bmod{4}$ if and only if $q\equiv 3 \pmod{4}$ and $m$ is an odd integer. Then
$$\frac{(q-1)n}{q+1}=\frac{q-1}{2}(q^{m-1}-q^{m-2}+\cdots-q+1),$$
which is an odd integer if $q^m\equiv 3 \pmod 4$. Hence, $\delta_1=\frac{(q-1)n}{q+1}$ from Lemma \ref{lem:7}. This completes the proof.
\end{proof}

In order to determine the values of $\delta_2$ and $\delta_3$, we recall a necessary and sufficient condition for $0\leq i\leq q^m$ being a coset leader, which was given in \cite{Zhu21}.

\begin{lemma}\cite[Proposition III.8]{Zhu21}\label{lem:28}
Let $q$ be an odd prime power. Let $1\leq j\leq m-1$, $l$ and $h$ be integers satisfying $1\leq l\leq \frac{q^j-1}{2}$ and 
$$-\frac{l(q^{m-j}-1)}{q^j+1}< h<\frac{l(q^{m-j}+1)}{q^j-1}.$$ Then $0\leq i \leq q^m$ is a coset leader if and only if $ i\leq n$ and $i \neq l q^{m-j}+h$.
\end{lemma}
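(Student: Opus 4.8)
The plan is to characterize the non-coset-leaders among $0\le i\le q^m$ directly, working modulo $N:=q^m+1=2n$. The starting observation is that $q^m\equiv -1\pmod N$, so $iq^m\equiv N-i\pmod N$; hence $N-i$ always lies in the same $q$-cyclotomic coset as $i$, and a coset leader must therefore satisfy $i\le N/2=n$. This immediately yields the necessity of the condition $i\le n$, and reduces the problem to showing that, for $1\le i\le n$, the integer $i$ fails to be a coset leader precisely when $i=lq^{m-j}+h$ for some admissible triple $(j,l,h)$.

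For the ``if'' direction I would compute the two cyclic shifts of $i=lq^{m-j}+h$ most likely to drop below $i$. Using $q^m\equiv-1$ one finds the clean identity $iq^{j}\equiv hq^{j}-l\pmod N$, and consequently $iq^{m+j}\equiv l-hq^{j}\pmod N$. When $h\ge 1$ the quantity $hq^j-l$ is a nonnegative integer, and the upper bound $h<\frac{l(q^{m-j}+1)}{q^j-1}$ rearranges exactly to $hq^{j}-l<lq^{m-j}+h=i$, so $iq^{j}\bmod N<i$. When $h\le 0$ the quantity $l-hq^{j}$ is positive, and the lower bound $h>-\frac{l(q^{m-j}-1)}{q^j+1}$ rearranges exactly to $l-hq^{j}<i$, so $iq^{m+j}\bmod N<i$. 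In either case $i$ is not a coset leader; the constraint $1\le l\le\frac{q^j-1}{2}$ is what guarantees that the relevant residue actually lies in $[0,N)$ instead of wrapping around.

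For the converse I would start from a non-coset-leader $i\le n$, pick an exponent $t$ with $iq^{t}\bmod N<i$, and reduce $t$ modulo $2m$ into the range $[1,2m-1]$ (legitimate since $q^{2m}\equiv 1\pmod N$). The value $t=m$ is impossible because $iq^m\bmod N=N-i\ge n\ge i$, so two cases remain. If $1\le t\le m-1$, set $j=t$, $l=\lfloor iq^{j}/N\rfloor$ and $h=i-lq^{m-j}$; the division identity gives $iq^{j}\bmod N=hq^{j}-l$, the hypothesis $iq^{j}\bmod N<i$ gives the upper bound on $h$, positivity of the residue forces $h\ge 1$ (so the lower bound holds automatically), and $i\le n$ yields $l\le\frac{q^j-1}{2}$. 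If $m+1\le t\le 2m-1$, set $j=t-m$, write $iq^{j}=lN+R$, and this time take $\tilde l=l+1$ and $\tilde h=i-\tilde l q^{m-j}$; the same algebra now produces $iq^{t}\bmod N=\tilde l-\tilde h q^{j}$, whence $iq^{t}\bmod N<i$ delivers the lower bound on $\tilde h$, while the estimate $\tilde h<\tilde l/q^j\le 1$ forces $\tilde h\le 0$ and hence the upper bound.

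The main obstacle I anticipate is the bookkeeping in this second case of the converse: one must verify $1\le\tilde l\le\frac{q^j-1}{2}$, and the naive estimate only gives $\tilde l\le\frac{q^j+1}{2}$. The fix is to exploit that here $R>N-i$, which forces the strict inequality $i(q^{j}+1)>(l+1)N=\tilde l N$; combined with $i\le n=N/2$ this yields $\tilde l<\frac{q^j+1}{2}$, and integrality then sharpens it to $\tilde l\le\frac{q^j-1}{2}$. Beyond this, the only remaining work is checking that the residues never vanish (which uses $\gcd(q,N)=1$ together with $i\ge 1$) and confirming that $i=0$ admits no representation $lq^{m-j}+h$ under the stated bounds, so that it is correctly excluded and remains a coset leader.
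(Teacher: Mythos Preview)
The paper does not prove this lemma; it quotes it verbatim from \cite[Proposition III.8]{Zhu21} and uses it as a black box. Your argument is correct and self-contained: the key identity $iq^{j}\equiv hq^{j}-l\pmod{q^m+1}$ (and its negation $iq^{m+j}\equiv l-hq^{j}$) is exactly the right computation, the split into $h\ge 1$ versus $h\le 0$ cleanly handles the ``if'' direction, and your converse correctly extracts $(j,l,h)$ from the offending exponent $t$ by a division-with-remainder argument. The one delicate point you flagged---sharpening $\tilde l\le\frac{q^j+1}{2}$ to $\tilde l\le\frac{q^j-1}{2}$ via the strict inequality $i(q^{j}+1)>\tilde l(q^m+1)$ together with $i\le n$ and integrality---is handled correctly; note that $q$ odd makes $\frac{q^j+1}{2}$ an integer, so the strict bound really does drop $\tilde l$ by one.
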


\begin{lemma}\label{lem:29}
Let $1\leq i\leq n$ and $1\leq j\leq m-1$ be two integers and put 
$$a = iq^j \bmod {(q^m+1)}.$$
Then $i$ is not a coset leader if there exists a $j$ such that $1\leq a<i$ or $i+a>q^m+1$.
\end{lemma}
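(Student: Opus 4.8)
The plan is to establish this sufficient condition directly, by exhibiting, under either hypothesis, an element of the $q$-cyclotomic coset $C_i^{(q,2n)}$ (here $2n=q^m+1$) that is strictly smaller than $i$; since the coset leader is by definition the smallest member of its coset, this immediately shows that $i$ cannot be a coset leader. Before splitting into cases I would record two elementary facts. Since $\gcd(q,q^m+1)=1$ and $1\le i\le n<q^m+1$, the integer $iq^j$ is never divisible by $q^m+1$, so $a=iq^j\bmod(q^m+1)$ satisfies $1\le a\le q^m$; and since $q^m\equiv -1\pmod{q^m+1}$, the order of $q$ modulo $q^m+1$ equals $2m$, so that $iq^k\bmod(q^m+1)$ is a genuine element of $C_i^{(q,2n)}$ for every exponent $k$.

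In the first case, $1\le a<i$, the conclusion is immediate: $a=iq^j\bmod(q^m+1)$ is by definition an element of $C_i^{(q,2n)}$ lying strictly below $i$, so $i$ is not the least element of its coset and hence not a coset leader.

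In the second case I would exploit the relation $q^m\equiv -1\pmod{q^m+1}$. Since $1\le j\le m-1$, the exponent $m+j$ lies in the admissible range $[m+1,2m-1]$, and
$$iq^{m+j}=iq^j\cdot q^m\equiv -iq^j\equiv -a \pmod{q^m+1}.$$
Because $1\le a\le q^m$, the least nonnegative residue of $-a$ modulo $q^m+1$ is $(q^m+1)-a$, so $(q^m+1)-a=iq^{m+j}\bmod(q^m+1)$ is a nonzero element of $C_i^{(q,2n)}$. The hypothesis $i+a>q^m+1$ rearranges to $(q^m+1)-a<i$, which exhibits a coset element strictly smaller than $i$; thus $i$ is again not a coset leader.

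The argument involves no serious difficulty, and the proof is short. The only step requiring care is the second case, where one must confirm that $(q^m+1)-a$ is a bona fide nonzero residue (ensured by $1\le a\le q^m$) and that it genuinely arises as $iq^{m+j}\bmod(q^m+1)$ for an admissible power of $q$ (ensured by the reduction $q^m\equiv -1$ together with $m+1\le m+j\le 2m-1$).
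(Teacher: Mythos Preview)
Your proof is correct, and it is considerably more direct than the paper's. For the first case the two proofs agree. For the second case, the paper invokes the characterisation of coset leaders from \cite{Zhu21} (recorded here as Lemma~\ref{lem:28}): writing $iq^j=u(q^m+1)+a$, it manufactures an integer $l$ in the interval $\bigl(u+\tfrac{a-i}{q^m+1},\,u+\tfrac{a+i}{q^m+1}\bigr)$ and then checks that $i$ can be written in the forbidden form $lq^{m-j}+h$ with $1\le l\le\tfrac{q^j-1}{2}$ and $h$ in the prescribed range. Your approach bypasses this external criterion entirely: from $q^m\equiv -1\pmod{q^m+1}$ you read off that $iq^{m+j}\equiv -a\pmod{q^m+1}$, so $(q^m+1)-a$ lies in the coset and is smaller than $i$ by the hypothesis $i+a>q^m+1$. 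The paper's route has the minor advantage that Lemma~\ref{lem:28} is already in play for subsequent results, but your argument is self-contained and explains more transparently \emph{why} the condition $i+a>q^m+1$ forces $i$ off the leadership position. One small remark: your aside that the order of $q$ modulo $q^m+1$ is exactly $2m$ is true but unnecessary for the argument, since $iq^{m+j}\bmod(q^m+1)$ lies in $C_i^{(q,2n)}$ by definition regardless of the order.
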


\begin{proof}
By definition, $i$ is not a coset leader if there exists a $j$ such that $1\leq a<i$. We now only show that $i$ is not a coset leader if there exists a $j$ such that $i+a>q^m+1$.

Let $iq^j=u(q^m+1)+a$ and let $l$ satisfy
\begin{equation}\label{eq:uab}
u+\frac{a-i}{q^m+1}<l<
u+\frac{a+i}{q^m+1}.
\end{equation}
Since $i+a>q^m+1$ and $i\leq n$, we have $a>i$. Then
$$u= \frac{iq^j-a}{q^m+1} \leq \left\lfloor \frac{i(q^j-1)}{q^m+1}\right\rfloor \leq \frac{q^j-1}{2}.$$
Hence, we have $1\leq l\leq \frac{q^j-1}{2}$ in (\ref{eq:uab}). From (\ref{eq:uab}) and $iq^j=u(q^m+1)+a$, we obtain that
\begin{equation*}
\frac{i(q^j-1)}{q^m+1}<l
<\frac{i(q^j+1)}{q^m+1}, 
\end{equation*}
which is the same as
\begin{equation*}
\frac{(q^m+1)l}{q^j+1}<i< \frac{(q^m+1)l}{q^j-1}.
\end{equation*}
Hence, $i$ can be written as $i=lq^{m-j}+h$, where
$$1\leq l\leq \frac{q^j-1}{2}\,\,
\text{and}\,\,-\frac{l(q^{m-j}-1)}{q^j+1}< h<\frac{l(q^{m-j}+1)}{q^j-1}.$$
From Lemma \ref{lem:28}, we deduce that $i$ is not a coset leader. This completes the proof.
\end{proof}

\begin{lemma}\label{lem:30}
Let $q^m\equiv 1 \pmod{4}$ and $\frac{q^m-1}2-q^{m-1}<i<n$ be odd. Assume that $i$ has the form $i=(\frac{q-3}2)q^{m-1}+i_{m-2}q^{m-2}+\cdots+i_0$, where $i_0,i_1,\ldots, i_{m-2}\in \{\frac{q-3}{2},\frac{q-1}{2},\frac{q+1}{2}\}$. Then $i$ is not a coset leader.
\end{lemma}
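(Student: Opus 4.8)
The plan is to invoke Lemma~\ref{lem:29}: it suffices to exhibit an integer $j$ with $1\le j\le m-1$ such that $a:=iq^{j}\bmod(q^m+1)$ satisfies either $1\le a<i$ or $i+a>q^m+1$. The engine of the argument is the base-$q$ digit dynamics of the map $x\mapsto xq\bmod(q^m+1)$. Since $q^m\equiv-1\pmod{q^m+1}$, this map is a \emph{negacyclic} left shift of the digit vector $(i_{m-1},\dots,i_0)$ (with $i_{m-1}=\frac{q-3}{2}$): digits move up one position, and the top digits that wrap around re-enter at the bottom with a sign change, which forces a borrow. Concretely, $a=iq^{j}\bmod(q^m+1)$ has a high block in positions $j,\dots,m-1$ whose digits, read from the top, are $i_{m-1-j},i_{m-2-j},\dots,i_1,i_0$ (the last one, at position $j$, lowered by $1$ by the borrow), together with a low block in positions $0,\dots,j-1$ equal to the $q^{j}$-complement of $(i_{m-j},\dots,i_{m-1})$. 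Thus the high block is literally the cyclic rotation of $i$'s digit string bringing position $p:=m-1-j$ to the top, and recording this pattern is the computational backbone.

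I would first normalise the digits by $e_k:=i_k-\frac{q-1}{2}\in\{-1,0,1\}$, so $i=\frac{q^m-1}{2}+\sum_{k=0}^{m-1}e_kq^k$ with $e_{m-1}=-1$. The hypotheses translate into two clean facts. Because $q^m\equiv1\pmod 4$ the integer $\frac{q^m-1}{2}$ is even, so \emph{$i$ is odd iff $\sum_k e_k$ is odd}; and the lower range bound $i>\frac{q^m-1}{2}-q^{m-1}$ is equivalent to $\sum_{k=0}^{m-2}e_kq^k>0$, i.e.\ \emph{the highest nonzero $e_k$ with $k\le m-2$ equals $+1$} (so a digit $\frac{q+1}{2}$ genuinely occurs, and the string has the shape $\frac{q-3}{2},\,\frac{q-1}{2}^{\,s},\,\frac{q+1}{2},\,(\text{tail})$). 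The upper bound $i<n$ gives $\sum_k e_kq^k\le 0$, consistent with $e_{m-1}=-1$.

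The proof then splits on whether some cyclic rotation of the digit string to a lower occurrence of the minimal digit $\frac{q-3}{2}$ is lexicographically $\le$ the string itself. \textbf{Case 1} (such a rotation exists): taking $j=m-1-p$ for the corresponding position $p$, the high block of $a$ reproduces that rotation, so $a<i$; and in the borderline case where the rotation merely ties up to position $j$, the borrow lowers the digit at position $j$ from $i_0$ to $i_0-1$ and still gives $a<i$. So the branch $1\le a<i$ of Lemma~\ref{lem:29} applies. \textbf{Case 2} (the minimal digit is effectively unique at the top, every nontrivial rotation being strictly larger): the direct branch produces nothing, and one must instead find a conjugate with $a>q^m+1-i$. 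Here one checks that $q^m+1-i$ has leading digit $\frac{q+1}{2}$, brings an occurrence of the digit $\frac{q+1}{2}$ (guaranteed by the range) to the top by a suitable shift $j$, and compares the resulting digit string with that of $q^m+1-i$; the parity constraint that $\sum_k e_k$ be odd is exactly what excludes the boundary words such as $(\frac{q-3}{2},\frac{q-1}{2},\dots,\frac{q-1}{2},\frac{q+1}{2})$ for which every conjugate stays inside the interval $[\,i,\,q^m+1-i\,]$.

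The step I expect to be the main obstacle is Case~2: pinning down the correct shift $j$ and rigorously comparing the base-$q$ expansion of the shifted residue against that of $q^m+1-i=i'+2$ (where $i'$ is the digitwise complement of $i$), while simultaneously tracking the borrow of the negacyclic shift and the carry produced by the ``$+2$''. This is precisely the place where the oddness of $\sum_k e_k$ must be converted into a strict inequality rather than an equality, so the parity hypothesis is indispensable there. Finally, I would dispose of the small cases separately: $q=3$ (where $\frac{q-3}{2}=0$, so borrows may cascade and the leading-digit comparison of Case~1 degenerates) and small $m$ such as $m=2,3$ (where the generic rotation argument has too few digits to run), which can be verified by direct inspection of the relevant cyclotomic cosets.
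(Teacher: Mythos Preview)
Your setup is right and matches the paper: reduce to Lemma~\ref{lem:29}, normalise the digits via $e_k=i_k-\tfrac{q-1}{2}\in\{-1,0,1\}$, and record the two structural facts (i) $\sum_k e_k$ is odd, (ii) the highest nonzero $e_k$ with $k\le m-2$ equals~$+1$. Your description of the negacyclic shift is also correct.

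The gap is in your dichotomy. Your Case~2 (``the minimal digit is effectively unique at the top'') is \emph{not} the negation of Case~1, and more importantly the instruction ``bring an occurrence of $\tfrac{q+1}{2}$ to the top by a suitable shift'' does not pin down \emph{which} occurrence, and the choice is decisive. Take $q=5$, $m=6$, and
\[
i=(1,2,3,1,3,3)_{5}=4793,
\]
which satisfies all the hypotheses. There is a lower digit $\tfrac{q-3}{2}=1$ at position~$2$, but the rotation bringing it to the top is $(1,3,3,1,2,3)_5=5413>i$, and the actual negacyclic residue for $j=3$ is $a=5337$, with $a>i$ and $i+a<q^m+1$; so your Case~1 fails. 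Turning to Case~2, the digit $\tfrac{q+1}{2}=3$ sits at positions $3,1,0$. Shifting by $j=2$ (position~$3$) gives $a=10443$, and by $j=5$ (position~$0$) gives $a=8417$; in both cases $a>i$ yet $i+a<q^m+1$, so neither branch of Lemma~\ref{lem:29} fires. Only $j=4$ (position~$1$) works, giving $a=11059$ and $i+a=15852>q^m+1$. Thus ``a suitable shift'' exists, but nothing in your argument singles it out or proves it must exist.

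What the paper supplies here is exactly the missing combinatorics. From the parity fact (the counts of $+1$'s and $-1$'s among $e_0,\dots,e_{m-2}$ have the same parity) together with~(ii), it extracts four concrete local patterns in the digit string, at least one of which must occur:
\begin{itemize}
\item[(I)] $i_0=\tfrac{q-3}{2}$;
\item[(II)] a block $\tfrac{q-3}{2},\tfrac{q-1}{2},\dots,\tfrac{q-1}{2}$ at the bottom;
\item[(III)] two $\tfrac{q-3}{2}$'s with only $\tfrac{q-1}{2}$'s between them;
\item[(IV)] two $\tfrac{q+1}{2}$'s with only $\tfrac{q-1}{2}$'s between them.
\end{itemize}
For (I)--(III) one shifts the indicated $\tfrac{q-3}{2}$ to the top and obtains $1\le a<i$; for (IV) one shifts the \emph{upper} of the two $\tfrac{q+1}{2}$'s to the top and obtains $i+a>q^m+1$. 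In the example above, (IV) holds with $j_4=1$, $j_3=0$, which is precisely the shift $j=4$ that succeeds. Your plan has the right skeleton, but the four-pattern lemma (and the pigeonhole/alternation argument establishing it from parity) is the crux you are missing; the difficulty in Case~2 is not the borrow/carry bookkeeping you flag, but this prior combinatorial step guaranteeing the \emph{existence} of the pattern.
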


\begin{proof}
Since $q^m\equiv 1 \pmod{4}$, we deduce that $q\equiv 3 \pmod{4}$ and $m$ is an even integer, or $q\equiv 1 \pmod{4}$.
Let
$$e_0=\left| \left\{j: i_j=\frac{q+1}2,~{\rm where}~0\leq j\leq m-2 \right\}\right|$$
and
$$e_1=\left|\left\{j: i_j=\frac{q-3}2,~{\rm where}~ 0\leq j\leq m-2\right\} \right|.$$
If $q\equiv 1 \pmod{4}$, then $i_{m-2}q^{m-2}+\cdots+i_0$ is even since $i$ is odd. Hence, both $e_0$ and $e_1$ are odd, or both $e_0$ and $e_1$ are even since $i_0,i_1,\ldots, i_{m-2}\in \{\frac{q-3}{2},\frac{q-1}{2},\frac{q+1}{2}\}$. If $q\equiv 3 \pmod{4}$ and $m$ is even, then $i_{m-2}q^{m-2}+\cdots+i_0$ is odd since $i$ is odd. We can also deduce that both $e_0$ and $e_1$ are odd, or both $e_0$ and $e_1$ are even. Since the $q$-adic expansion of $\frac{q^m-1}{2}-q^{m-1}$ is $(\frac{q-3}2)q^{m-1}+(\frac{q-1}{2})(\sum_{j=0}^{m-2}q^j)$, the largest values of $0\leq j_1,j_2\leq m-2$ with $i_{j_1}=\frac{q+1}{2}$ and $i_{j_2}=\frac{q-3}{2}$ must satisfy $j_1>j_2$.

According to the analysis above, if $e_0=e_1$, then the sequence $(i_{m-2},i_{m-3},\ldots, i_1, i_0)$ must have one of the following forms:
\begin{itemize}
\item[(I)] $i_0=\frac{q-3}{2}$.
\item[(II)]  There exists $1\leq j_0 \leq m-3$ such that $i_{j_0-1}=\cdots=i_{0}=\frac{q-1}{2}$ and $i_{j_0}=\frac{q-3}{2}$.
\item[(III)] There exist $0\leq j_1<j_2\leq m-2$ such that $i_{j_1+1}=\cdots=i_{j_2-1}=\frac{q-1}{2}$ and $i_{j_1}=i_{j_2}=\frac{q-3}{2}$.
\item[(IV)] There exist $0\leq j_3<j_4\leq m-2$ such that $i_{j_3+1}=\cdots=i_{j_4-1}=\frac{q-1}{2}$ and $i_{j_3}=i_{j_4}=\frac{q+1}{2}$.
\end{itemize}
If $e_1>e_0$, then $e_1\geq e_0+2$ since $e_0$ and $e_1$ are both even or odd. It is clear that the sequence $(i_{m-2},i_{m-3},\ldots, i_1, i_0)$ must satisfy (III).
If $e_0>e_1$, then $e_0\geq e_1+2$. It is obvious that the sequence $(i_{m-2},i_{m-3},\ldots, i_1,i_0)$ must satisfy (IV).

To make use of Lemma \ref{lem:29}, let
\begin{equation}\label{LB}
a = iq^j \bmod{(q^m+1)}.
\end{equation}
In order to obtain the desired conclusion, it suffices to prove that there exists a $j$ such that $1\leq a<i$ or $i+a>q^m+1$. We now prove that this desired result holds in the following four cases.

 \noindent{\bf Case 1:}  $(i_{m-2}, i_{m-3}, \ldots, i_1,i_0)$ satisfies Condition (I). If $q=3$, i.e., $i_0=0$, then $q\mid i$. It follows that $i$ is not a coset leader. If $q>3$, let $j=m-1$ in (\ref{LB}), then
$$a=(\frac{q-3}{2})q^{m-1}-(\frac{q-3}{2})q^{m-2}-i_{m-2}q^{m-3}-\cdots-i_2q-i_1.$$
It is obvious that $1<a<i$.

\noindent{\bf Case 2:}  $(i_{m-2},i_{m-3},\ldots, i_1, i_0)$ satisfies Condition (II). In this case, let $j=m-j_0-1$ in (\ref{LB}),
then
\begin{align*}
a=&(\frac{q-3}{2})q^{m-1}+(\frac{q-1}{2})q^{m-2}+\cdots+(\frac{q-1}{2})q^{m-j_0-1}\\
&-(\frac{q-3}{2})q^{m-j_0-2}-i_{m-1}q^{m-j_0-3}-\cdots-i_{j_0+2}q-i_{j_0+1}.	
\end{align*}
It is clear that $1<a<\frac{q^m-1}{2}-q^{m-1}<i$.

\noindent{\bf Case 3:}  $(i_{m-2},i_{m-3},\ldots, i_1, i_0)$ satisfies Condition (III). Let $j=m-j_2-1$ in (\ref{LB}), then
\begin{equation*}
\begin{split}
a&=(\frac{q-3}{2})q^{m-1}+(\frac{q-1}{2})q^{m-2}+\cdots+(\frac{q-1}{2})q^{m-j_2+j_1}+\frac{q-3}{2}q^{m-j_2+j_1-1}\\
&+i_{j_1-1}q^{m-j_2+j_1-2}+\cdots+i_0q^{m-j_2-1}-(\frac{q-3}{2})q^{m-j_2-2}-i_{m-2}q^{m-j_2-3}-\cdots-i_{j_2+1}.
\end{split}
\end{equation*}
With an analysis similar as in Case 2, we have $1\leq i<a$.

\noindent{\bf Case 4:} $(i_{m-2}, i_{m-3},\ldots, i_1,i_0)$ satisfies Condition (IV). Let $j=m-j_4-1$ in (\ref{oddL-va}), we have
\begin{align*}
	a=&(\frac{q+1}2)q^{m-1}+(\frac{q-1}2)q^{m-2}+\cdots+(\frac{q-1}2)q^{m-j_4+j_3}+(\frac{q+1}2)q^{m-j_4+j_3-1}\\
	&+i_{j_3-1}q^{m-j_4+j_3-2}+\cdots+i_0q^{m-j_4-1}-(\frac{q-3}2)q^{m-j_4-2}-i_{m-2}q^{m-j_4-3}-\cdots-i_{j_4+1}.  
\end{align*}
Then
\begin{equation*}
\begin{split}
i+a&>\frac{q^m-1}{2}-q^{m-1}+a\\
&\geq(\frac{q-3}{2}+\frac{q+1}{2})q^{m-1}+(\frac{q-1}{2}+\frac{q-1}{2})q^{m-2}+\cdots+(\frac{q-1}{2}+\frac{q-1}{2})q^{m-j_4+j_3}\\
&+(\frac{q-1}{2}+\frac{q+1}{2})q^{m-j_4+j_3-1}+(\frac{q-1}{2})q^{m-j_4+j_3-2}+\cdots+ (\frac{q-1}{2})q^{m-j_4-1}\\
&+(\frac{q-1}{2}-\frac{q-3}{2})q^{m-j_4-2}+[\frac{q-1}{2}-(q-1)]q^{m-j_4-3}+\cdots+[\frac{q-1}{2}-(q-1)]\\
&>q^m+1.
\end{split}
\end{equation*}
Collecting the conclusions in all the cases above, we know that there always exists a $1\leq j\leq  m$ such that  $1\leq a<i$ or $i+a>q^m+1$. From Lemma \ref{lem:29}, the desired conclusion then follows.
\end{proof}

With the preparations above, we now determine the values of $\delta_2$ and $\delta_3$.

\begin{lemma}\label{lem:31}
Let $q$ be an odd prime power, then
\begin{equation*}
\delta_2=\begin{cases}
\frac{q^m-1}{2}-q^{m-1} &{\rm if}~q^m\equiv 1 \pmod{4},\\
\frac{(q-1)(q^m-2q^{m-2}-1)}{2(q+1)}&{\rm if}~q^m\equiv 3 \pmod{4}.
\end{cases}
\end{equation*}
Moreover, $|C_{\delta_2}^{(q,2n)}|=2m$.
\end{lemma}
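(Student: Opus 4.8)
The plan is to treat the two residue cases separately, since they draw on different inputs. Throughout I would keep in mind, from Lemma \ref{lem:27}, that $\delta_1=n$ when $q^m\equiv 1\pmod 4$ and $\delta_1=\frac{(q-1)n}{q+1}$ when $q^m\equiv 3\pmod 4$; the task is then to locate the largest odd coset leader strictly below $\delta_1$ and to compute its coset size.

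I would dispose of the case $q^m\equiv 3\pmod 4$ first, as it follows almost directly from Lemma \ref{lem:7}. Here $q^m\equiv 3\pmod 4$ forces $q\equiv 3\pmod 4$ and $m$ odd, so Lemma \ref{lem:7} applies and supplies the three largest coset leaders $\phi_1=n$, $\phi_2=\frac{(q-1)(q^m+1)}{2(q+1)}=\delta_1$, and $\phi_3=\frac{(q-1)(q^m-2q^{m-2}-1)}{2(q+1)}$, together with $|C_{\phi_3}^{(q,2n)}|=2m$. Since $\phi_1=n$ is even while $\phi_2=\delta_1$ is odd, and since no coset leader lies strictly between the consecutively ranked $\phi_2$ and $\phi_3$, the second largest odd coset leader is exactly $\phi_3$ provided $\phi_3$ is odd. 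I would verify this parity by writing $\phi_3=\frac{q-1}{2}\cdot A$ with $A=\frac{q^m+1}{q+1}-2\cdot\frac{q^{m-2}+1}{q+1}$: here $\frac{q-1}{2}$ is odd because $q\equiv 3\pmod 4$, while $\frac{q^m+1}{q+1}$ is a sum of $m$ odd terms and hence odd for odd $m$, so $A$ is odd and $\phi_3$ is odd. This yields $\delta_2=\phi_3$ and $|C_{\delta_2}^{(q,2n)}|=2m$, as claimed.

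The case $q^m\equiv 1\pmod 4$ is the substantial one, with $\delta_1=n$ and claimed $\delta_2=\frac{q^m-1}{2}-q^{m-1}$, whose $q$-adic sequence is $(\frac{q-3}{2},\frac{q-1}{2},\ldots,\frac{q-1}{2})$. First I would record that $\delta_2$ is odd (it equals the even number $\frac{q^m-1}{2}$ minus the odd number $q^{m-1}$) and that $\delta_2$ is a coset leader, the latter by checking the representation condition of Lemma \ref{lem:28}. The crux is to show that no odd coset leader lies in the open interval $(\delta_2,n)$. I would classify an odd $i$ there by its leading $q$-adic digit, which must be either $\frac{q-1}{2}$ or $\frac{q-3}{2}$. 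If the leading digit is $\frac{q-1}{2}$, then $i<n$ forces the remaining part $r$ to satisfy $0\le r\le\frac{q^{m-1}-1}{2}$, and a short computation gives $iq\bmod(q^m+1)=qr-\frac{q-1}{2}$, which lies in $[1,i)$ whenever $r\ge 1$ (the case $r=0$ means $q\mid i$); Lemma \ref{lem:29} then shows $i$ is not a coset leader. If the leading digit is $\frac{q-3}{2}$ and all lower digits lie in $\{\frac{q-3}{2},\frac{q-1}{2},\frac{q+1}{2}\}$, then Lemma \ref{lem:30} directly shows $i$ is not a coset leader; in the residual sub-case where some lower digit falls outside this set, I would exhibit an explicit shift $iq^{j}\bmod(q^m+1)$ that is either smaller than $i$ or forces $i+\bigl(iq^{j}\bmod(q^m+1)\bigr)>q^m+1$, again invoking Lemma \ref{lem:29}.

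The main obstacle is precisely this gap argument when $q^m\equiv 1\pmod 4$: because $q^m\equiv -1\pmod{q^m+1}$, multiplication by $q^j$ acts on the digit string as a sign-twisted shift rather than the clean circular shift available modulo $q^m-1$ (Lemma \ref{lem:5}), so pinning down each shifted value requires careful tracking of the wrap-around negations and of the borrows induced by subtracting $\frac{q-1}{2}$ or $\frac{q-3}{2}$. Lemma \ref{lem:30} absorbs the bulk of this bookkeeping, leaving only the residual digit cases to be handled by hand. Finally, to obtain $|C_{\delta_2}^{(q,2n)}|=2m$, I would argue as in Lemma \ref{lem:20}: writing $|C_{\delta_2}^{(q,2n)}|=r$ with $r\mid 2m$, the relation $\delta_2(q^r-1)\equiv 0\pmod{q^m+1}$ reduces to a divisibility congruence; observing that $\delta_2 q^m\bmod(q^m+1)=(q^m+1)-\delta_2\ne\delta_2$ rules out $r\mid m$, and the remaining divisors of $2m$ are excluded by comparing $q$-adic lengths in the congruence, leaving $r=2m$.
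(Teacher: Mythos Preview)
Your proposal is correct and follows essentially the same architecture as the paper: the $q^m\equiv 3\pmod 4$ case is read off from Lemma~\ref{lem:7} after a parity check, and in the $q^m\equiv 1\pmod 4$ case you verify via Lemma~\ref{lem:28} that $\frac{q^m-1}{2}-q^{m-1}$ is a coset leader, then rule out every odd $i$ in the gap by splitting on the leading $q$-ary digit and invoking Lemmas~\ref{lem:29} and~\ref{lem:30}.

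Two small differences are worth flagging. First, for leading digit $\frac{q-1}{2}$ you use the single shift $j=1$, which is cleaner than the paper's choice $j=m-k_1-1$ (the paper locates the position $k_1$ of the minimum lower digit and shifts there); your computation $iq\bmod(q^m+1)=qr-\frac{q-1}{2}\in[1,i)$ is valid and shorter. Second, for the coset size the paper takes a more direct route than your Lemma~\ref{lem:20}-style divisibility argument: having already computed $r_j=\delta_2 q^j\bmod(q^m+1)=\frac{q^m+1}{2}-q^j+q^{j-1}$ for $1\le j\le m-1$ (as a by-product of the Lemma~\ref{lem:28} verification), one simply observes that each $r_j$ lies strictly between $\delta_2$ and $n$, and that $\delta_2 q^m\equiv q^m+1-\delta_2\ne\delta_2$; this shows the orbit has more than $m$ elements and hence exactly $2m$. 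Your plan to rule out $r\mid m$ and then the residual divisors of $2m$ ``by comparing $q$-adic lengths'' would work but is more laborious, since divisors of $2m$ not dividing $m$ can be strictly smaller than $m$ (e.g.\ $r=2$ when $m=3$), and each must be checked; reusing the $r_j$ already in hand avoids this.
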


\begin{proof} 
Note that $q^m\equiv 3 \pmod 4$ if and only if $q\equiv 3 \pmod 4$ and $m$ is odd. Thereby,
$$\frac{(q-1)(q^m-2q^{m-2}-1)}{2(q+1)}=\left(\frac{q-1}{2}\right)\left(q^{m-2}(q-1)-\frac{q^{m-2}+1}{q+1}\right)\equiv 1 \pmod 2.$$
It follows from Lemma \ref{lem:7} and Lemma \ref{lem:27} that $\delta_2=\frac{(q-1)(q^m-2q^{m-2}-1)}{2(q+1)}$ and $|C_{\delta_2}^{(q,2n)}|=2m$.

In the following, we only consider the case $q^m\equiv 1 \pmod 4$. We first prove that $\frac{q^m-1}{2}-q^{m-1}$ is a coset leader. From Lemma \ref{lem:28}, it suffices to show that there do not exist $1\leq j\leq m-1$, $1\leq l\leq \frac{q^j-1}{2}$ and $-\frac{l(q^{m-j}-1)}{q^j+1}< h<\frac{l(q^{m-j}+1)}{q^j-1}$ such that
\begin{equation}\label{eq:mqih}
\frac{q^m-1}{2}-q^{m-1}=lq^{m-j}+h.
\end{equation}
If (\ref{eq:mqih}) holds, then we have
\begin{equation*}
\frac{(q^m+1)l}{q^j+1}<\frac{q^m-1}{2}-q^{m-1}< \frac{(q^m+1)l}{q^j-1},
\end{equation*}
which is the same as
\begin{equation}\label{eq:fff-0T}
\frac{\left(\frac{q^m-1}{2}-q^{m-1}\right)(q^j-1)}{q^m+1}<l
<\frac{\left(\frac{q^m-1}{2}-q^{m-1}\right)(q^j+1)}{q^m+1}.
\end{equation}
For each $1\leq j\leq m-1$, suppose that $(\frac{q^m-1}2-q^{m-1})q^j=u_j(q^m+1)+r_j$, where $0\leq r_j\leq q^m$. It is clear that 
$$\left(\frac{q^m-1}2-q^{m-1}\right)q^j\equiv \frac{q^m+1}2-q^j+q^{j-1} \pmod{q^m+1}.$$
Hence, $r_j=\frac{q^m+1}2-q^j+q^{j-1}$. Furthermore, (\ref{eq:fff-0T}) becomes
$$u_j+\frac{q^{m-1}-q^j+q^{j-1}+1}{q^m+1}<l<u_j+\frac{q^{m}-q^{m-1}-q^j+q^{j-1}}{q^m+1}.$$
Hence, there does not exist an integer $l$  such that (\ref{eq:fff-0T}) holds. This means that  $\frac{q^m-1}{2}-q^{m-1}$ is a coset leader.

On one hand, $\frac{q^m-1}{2}-q^{m-1}<r_j<\frac{q^m+1}2$ for $1\leq j\leq m-1$, and
$$\left(\frac{q^m-1}2-q^{m-1}\right)q^{m}\equiv \frac{q^m+1}2+1+q^{m-1} \pmod{q^m+1}.$$
 It follows that $|C_{\frac{q^m-1}2-q^{m-1}}^{(q,2n)}|>m$. On the other hand, $|C_{\frac{q^m-1}2-q^{m-1}}^{(q,2n)}|$ divides $2m$. Therefore,  $$|C_{\frac{q^m-1}2-q^{m-1}}^{(q,2n)}|=2m.$$

We next prove that there does not exist an odd coset leader in the range $\left[\frac{q^m-1}{2}-q^{m-1}+2,\frac{q^m-3}{2}\right]$.
Let $1\leq \mu\leq \frac{q^{m-1}-1}{2}$, then all odd positive integers in the range $\left[\frac{q^m-1}{2}-q^{m-1}+2,\frac{q^m-3}{2}\right]$
can be expressed as $\frac{q^m-1}{2}-q^{m-1}+2\mu$.

From Lemma \ref{lem:28}, in order to show this result, it suffices to prove that there always exist $1\leq j\leq m-1$, $1\leq l\leq \frac{q^j-1}{2}$ and $-\frac{l(q^{m-j}-1)}{q^j+1}< h<\frac{l(q^{m-j}+1)}{q^j-1}$ such that
\begin{equation*}
\frac{q^m-1}{2}-q^{m-1}+2\mu=l q^{m-j}+h
\end{equation*}
for any $1\leq \mu\leq \frac{q^{m-1}-1}{2}$,
which is the same as
\begin{equation*}
\frac{(q^m+1)l}{q^j+1}<\frac{q^m-1}{2}-q^{m-1}+2\mu< \frac{(q^m+1)l}{q^j-1},
\end{equation*}
which is the same as
\begin{equation}\label{eq:qm-0T}
\frac{\left(\frac{q^m-1}{2}-q^{m-1}+2\mu\right)(q^j-1)}{q^m+1}<l<\frac{\left(\frac{q^m-1}{2}-q^{m-1}+2\mu\right)(q^j+1)}{q^m+1}.
\end{equation}
It is easy to check that $1\leq l\leq \frac{q^j-1}{2}$ if $l$ satisfies (\ref{eq:qm-0T}). Hence, in order to show that there does not exist an odd coset leader in the range  $\left[\frac{q^m-1}{2}-q^{m-1}+2,\frac{q^m-3}{2}\right]$, we only need to show that there exist $j$ and $l$ satisfying (\ref{eq:qm-0T}).

Let $0< L<q^m-1$ be a positive integer such that
\begin{equation}\label{oddL-va}
\left(\frac{q^m-1}{2}-q^{m-1}+2\mu \right)q^j=(q^m+1)f+L.
\end{equation}
 Then
(\ref{eq:qm-0T}) becomes
\begin{equation*}
f+\frac{L-(\frac{q^m-1}{2}-q^{m-1}+2\mu)}{q^m+1}<l<f+\frac{L+\frac{q^m-1}{2}-q^{m-1}+2\mu}{q^m+1}.
\end{equation*}
This means that there exist $j$ and $l$ satisfying (\ref{eq:qm-0T}) if and only if
$L< \frac{q^m-1}{2}-q^{m-1}+2\mu$ or $L+\frac{q^m-1}{2}-q^{m-1}+2\mu>q^m+1$.

Let the $q$-adic expansion of $\frac{q^m-1}{2}-q^{m-1}+2\mu$ be
$$b_{m-1}q^{m-1}+b_{m-2}q^{m-2}+\cdots+b_0,$$
where $0\leq b_0,b_1,\ldots, b_{m-1}\leq q-1$. From the range of the value of $\mu$, we get that $b_{m-1}=\frac{q-1}{2}$ or $b_{m-1}=\frac{q-3}{2}$. Let $b_{k_0}=\mbox{max}\{b_0,b_1,\ldots, b_{m-2}\}$ and $b_{k_1}=\mbox{min}\{b_0,b_1,\ldots, b_{m-2}\}$. We now prove that $\frac{q^m-1}{2}-q^{m-1}+2\mu$ is not a coset leader in the following four cases.

\noindent{\bf Case 1:} $b_{m-1}=\frac{q-1}{2}$. From $\frac{q^m-1}2-q^{m-1}+2\mu \leq \frac{q^m-3}2$, we have $b_{k_1}\leq \frac{q-3}2$. Then
\begin{equation*}
\begin{split}
&(b_{m-1}q^{m-1}+b_{m-2}q^{m-2}+\cdots+b_{k_1}q^{k_1}+\cdots+b_0)q^{m-k_1-1}\\
\equiv & b_{k_1}q^{m-1}+\cdots+b_0q^{m-k_1-1}-b_{m-1}q^{m-k_1-2}-\cdots-b_{k_1+1} \pmod{q^m+1}.
\end{split}
\end{equation*}
Note that
\begin{align*}
	&b_{k_1}q^{m-1}+\cdots+b_0q^{m-k_1-1}-b_{m-1}q^{m-k_1-2}-\cdots-b_{k_1+1}\\
	\geq & b_{k_1}q^{m-1}+\cdots+b_0q^{m-k_1-1}-(\frac{q-1}2)q^{m-k_1-2}-q^{m-k_1-2}+1\\
	\geq & (\frac{q-1}2)q^{m-k_1-2}+1,
\end{align*}
and 
\begin{align*}
	&b_{k_1}q^{m-1}+\cdots+b_0q^{m-k_1-1}-b_{m-1}q^{m-k_1-2}-\cdots-b_{k_1+1}\\
	\leq & (\frac{q-3}2)q^{m-1}+(q-1)(q^{m-2}+\cdots+q^{m-k_1-1})-(\frac{q-1}2)q^{m-k_1-2}\\
	< & (\frac{q-1}2)q^{m-1}<\frac{q^m-1}2-q^{m-1}+2\mu.
\end{align*}
Hence, $\frac{q^m-1}{2}-q^{m-1}+2\mu$ is not a coset leader.

\noindent{\bf Case 2:} $b_{m-1}=\frac{q-3}{2}$ and $b_{k_0}\geq\frac{q+3}{2}$. Let $j=m-k_0-1$ in (\ref{oddL-va}), then
$$L=b_{k_0}q^{m-1}+\cdots+b_0q^{m- k_0-1}-(\frac{q-3}{2})q^{m-k_0-2}-b_{m-2}q^{m-k_0-3}-\cdots-b_{k_0+2}q-b_{k_0+1}.$$
From the $q$-adic expansion of $\frac{q^m-1}{2}-q^{m-1}+2\mu$, it is easily seen that 
$$L+\frac{q^m-1}{2}-q^{m-1}+2\mu>q^m+1.$$ 
From Lemma \ref{lem:29}, we deduce that $\frac{q^m-1}{2}-q^{m-1}+2\mu$ is not a coset leader.

\noindent{\bf Case 3:} $b_{m-1}=\frac{q-3}{2}$ and $b_{k_1}< \frac{q-3}{2}$. Let $j=m-k_1-1$ in (\ref{oddL-va}), then
$$L=b_{k_1}q^{m-1}+\cdots+b_0q^{m-k_1-1}-(\frac{q-3}{2})q^{m-k_1-2}-b_{m-2}q^{m-k_1-3}-\cdots-b_{k_1+2}q-b_{k_1+1}.$$
It is obvious that $L<\frac{q^m-1}{2}-q^{m-1}$. From Lemma \ref{lem:29}, we deduce that $\frac{q^m-1}{2}-q^{m-1}+2\mu$ is not a coset leader.

\noindent{\bf Case 4:} $b_{m-1}=\frac{q-3}{2}$, $b_{k_0}<\frac{q+3}{2}$ and $b_{k_1}\geq \frac{q-3}{2}$. In this case,
$$b_0,b_1,\ldots, b_{m-2}\in \left\{\frac{q-3}{2},\frac{q-1}{2},\frac{q+1}{2}\right\}.$$
Then from Lemma \ref{lem:30}, we know that $\frac{q^m-1}{2}-q^{m-1}+2\mu$ is not a coset leader.

Summarizing the conclusions in the four cases above, we conclude that $\frac{q^m-1}{2}-q^{m-1}+2\mu$ is not a coset leader for any $1\leq \mu\leq \frac{q^{m-1}-1}{2}$. The desired conclusion then follows.
\end{proof}


\begin{lemma}\label{lem:32}
Let $q\geq 3$ be an odd prime power. Let $m$ be a positive integer such that $q^m\geq 25$. Then
\begin{equation*}
\delta_3=\begin{cases}
	\frac{q^m-1}{2}-q^{m-1}-q+1 &{\rm if}~q^m\equiv 1 \pmod 4,\\
	\frac{(q-1)(q^m-2q^{m-2}-1)}{2(q+1)}-(q-1)^2 &{\rm if}~q^m\equiv 3 \pmod{4}~{\rm and}~m\geq 5,\\
	\frac{(q-1)(q^3-2q-1)}{2(q+1)}-(q+1) &{\rm if}~q\equiv 3 \pmod{4}~{\rm and}~m=3.
\end{cases}
\end{equation*}
Moreover, $|C_{\delta_3}^{(q,2n)}|=2m$.
\end{lemma}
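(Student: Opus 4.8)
The plan is to follow the three-part template used in the proof of Lemma~\ref{lem:31}: first exhibit the claimed $\delta_3$ as a coset leader, then rule out any odd coset leader strictly between $\delta_3$ and $\delta_2$, and finally pin down the coset size. Before anything else I would record a parity observation that makes the problem manageable: since $q^m+1$ is even and every $q^j$ is odd, the reduction $iq^j \bmod (q^m+1)$ always has the same parity as $i$, so an odd coset leader is simply a coset leader that happens to be odd. This lets me work throughout with the ordinary coset-leader criteria of Lemmas~\ref{lem:28} and~\ref{lem:29}. As the authors do elsewhere, I would split into the three regimes of the statement and give full detail only for $q^m\equiv 1\pmod 4$, the cases $q^m\equiv 3\pmod 4$ with $m\ge 5$ and $m=3$ being analogous but with heavier $q$-adic bookkeeping.

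For the first part, that the stated $\delta_3$ is a coset leader, I would compute its $q$-adic expansion explicitly; in the regime $q^m\equiv 1\pmod 4$ one finds $\overline{\delta_3}=(\frac{q-3}{2},\frac{q-1}{2},\ldots,\frac{q-1}{2},\frac{q-3}{2},\frac{q+1}{2})$, and a direct parity count confirms $\delta_3$ is odd. To certify it is a leader I would invoke Lemma~\ref{lem:28}: assuming a forbidden decomposition $\delta_3=lq^{m-j}+h$ with $1\le l\le \frac{q^j-1}{2}$ and $-\frac{l(q^{m-j}-1)}{q^j+1}<h<\frac{l(q^{m-j}+1)}{q^j-1}$, I would derive the sandwiching inequality on $l$ exactly as in Lemma~\ref{lem:31}, reduce $\delta_3 q^j$ modulo $q^m+1$ using $q^m\equiv -1\pmod{q^m+1}$ to read off the remainder $r_j$, and show the resulting interval for $l$ contains no integer, a contradiction.

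The main part, and the main obstacle, is showing there is no odd coset leader in the open interval $(\delta_3,\delta_2)$. Here $\delta_2-\delta_3=q-1$ in the first regime (and $(q-1)^2$ or $q+1$ in the others), so I would parametrise the odd candidates as $\delta_2-2\mu$ and, reading off their $q$-adic digits, partition them according to the size and position of their extreme digits, exactly as in Lemma~\ref{lem:30} and the four-case analysis inside Lemma~\ref{lem:31}. For each candidate I would produce an explicit shift exponent $j$ and apply Lemma~\ref{lem:29}, showing that $a=(\delta_2-2\mu)q^j\bmod(q^m+1)$ satisfies either $1\le a<\delta_2-2\mu$ or $a+(\delta_2-2\mu)>q^m+1$; when the digits all lie in $\{\frac{q-3}{2},\frac{q-1}{2},\frac{q+1}{2}\}$ I would instead quote Lemma~\ref{lem:30} directly. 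The delicate points are choosing the correct $j$ for each digit pattern and controlling the borrows when $2\mu$ exceeds the bottom digit; the regime $q^m\equiv 3\pmod 4$ is harder because the alternating expansions of $\delta_1$ and $\delta_2$ make the candidate patterns less uniform, and $m=3$ must be treated by hand since the interval is too short for the generic argument.

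Finally, for the coset size I would repeat the divisibility argument of Lemma~\ref{lem:31}: the order of $q$ modulo $q^m+1$ is $2m$, so $|C_{\delta_3}^{(q,2n)}|$ divides $2m$, and since the only divisor of $2m$ exceeding $m$ is $2m$ itself, it suffices to show the size exceeds $m$. I would do this by checking that the shifts $\delta_3 q^j\bmod(q^m+1)$ for $0\le j\le m-1$ all stay below $n$, whereas $\delta_3 q^m\bmod(q^m+1)=q^m+1-\delta_3$ exceeds $n$; thus $\delta_3 q^m$ is distinct from all earlier shifts, the coset size exceeds $m$, and therefore $|C_{\delta_3}^{(q,2n)}|=2m$.
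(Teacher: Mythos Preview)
Your overall strategy matches the paper's, but two points deserve comment.

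First, for $q^m\equiv 3\pmod 4$ (both $m\ge 5$ and $m=3$) the paper does \emph{not} redo the $q$-adic bookkeeping: it simply quotes results of Liu et al.\ \cite{Liuv2} identifying the fourth through sixth largest coset leaders modulo $q^m+1$, and then checks which of these are odd. Your plan to carry out the full analysis from scratch would work, but it is substantially more labour than the paper invests in these cases.

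Second, for $q^m\equiv 1\pmod 4$ your gap analysis is heavier than necessary. Because $\delta_2-\delta_3=q-1$, the open interval $(\delta_3,\delta_2)$ contains only the $q-3$ integers $\tfrac{q^m+1}{2}-q^{m-1}-\mu$ with $3\le\mu\le q-2$. The paper disposes of \emph{all} of them (not just the odd ones) in two short cases, splitting at $\mu=\tfrac{q-1}{2}$ and using the single shift $j=m-1$ (respectively $j=2m-1$, i.e.\ its negative); neither the four-case machinery of Lemma~\ref{lem:31} nor any appeal to Lemma~\ref{lem:30} is needed, since the gap here is of width $q-1$ rather than $q^{m-1}$.

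Finally, there is a small slip in your coset-size argument: it is not true that $\delta_3 q^j\bmod(q^m+1)<n$ for every $0\le j\le m-1$. With $r_j:=\delta_3 q^j\bmod(q^m+1)$ one computes $r_{m-1}=\tfrac{q^m+1}{2}+q^{m-2}+1>n$, so your threshold fails at $j=m-1$. The paper instead verifies the pair of inequalities $r_j>\delta_3$ and $r_j+\delta_3<q^m+1$ for $1\le j\le m-1$ (the same computation that certifies leadership via Lemma~\ref{lem:28}); since neither $r_j$ nor $q^m+1-r_j$ ever returns to $\delta_3$ for $1\le j\le m-1$, the coset size exceeds $m$ and hence equals $2m$. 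Your argument is easily repaired along these lines.
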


\begin{proof}
From \cite[Lemma 7]{Liuv2}, we know that the fourth largest coset leader is 
$$\frac{(q-1)(q^m-2q^{m-2}-1)}{2(q+1)}-(q-1)^2$$ 
if $m\geq 5$. With an analysis similar as in Lemma \ref{lem:31}, we deduce that $$\delta_3=\frac{(q-1)(q^m-2q^{m-2}-1)}{2(q+1)} -(q-1)^2$$ and $|C_{\delta_3}^{(q,2n)}|=2m$ if $q^m \equiv 3 \pmod 4$ and $m\geq5$. When $m=3$ and $q\equiv 3 \pmod 4$, the first three largest coset leaders are given in Lemma \ref{lem:7}. In addition, the authors of \cite{Liuv2} showed that the fourth, the fifth and the sixth largest coset leaders are $\frac{(q-1)(q^3-2q+1)}{2(q+1)}-1$, $\frac{(q-1)(q^3-2q+1)}{2(q+1)}-q$ and $\frac{(q-1)(q^3-2q+1)}{2(q+1)}-(q+1)$, respectively. Since the first, the fourth and the fifth coset leaders are even, we have $\delta_3=\frac{(q-1)(q^3-2q+1)}{2(q+1)}-(q+1)$ in this case. Moreover, the authors of \cite{Liuv2} proved that $|C_{\delta_3}^{(q,2n)}|=2m$ if $\delta$ takes on the value above.

For each $1\leq j\leq m-1$, suppose that $(\frac{q^m-1}2-q^{m-1}-q+1)q^j=u_j(q^m+1)+r_j$, where $0\leq r_j\leq q^m$. It is easy to verify that 
 \begin{align*}
r_j=\begin{cases}
	\frac{q^m-1}2+q^{j-1}-q^{j+1}+1 &{\rm if}~1\leq j\leq m-2,\\
	\frac{q^m-1}2+q^{m-2}+2 &{\rm if}~j=m-1.
\end{cases} 	
 \end{align*}
Therefore, $r_j-(\frac{q^m-1}{2}-q^{m-1}-q+1)>0$ and 
$$r_j+\left(\frac{q^m-1}{2}-q^{m-1}-q+1\right)<q^m+1$$
for any $1\leq j\leq m-1$. With an analysis similar as in Lemma \ref{lem:31}, we can deduce that $\frac{q^m-1}{2}-q^{m-1}-q+1$ is a coset leader, and $|C_{\frac{q^m-1}{2}-q^{m-1}-q+1}^{(q,2n)}|=2m$.

We next prove that there does not exist a coset leader in the range
\begin{equation}\label{eq:24}
\begin{split}
\left[\frac{q^m-1}{2}-q^{m-1}-q+3,\ \frac{q^m-1}{2}-q^{m-1}-2\right].
\end{split}
\end{equation}
Let $3\leq \mu\leq q-2$, then any positive integer in the range of (\ref{eq:24}) can be expressed as $\frac{q^m+1}{2}-q^{m-1}-\mu$. There are the following two cases.

\noindent{\bf Case 1}: $3\leq \mu \leq \frac{q-1}{2}$. 
It is easy to verify that
$$\left(\frac{q^m+1}2-q^{m-1}-\mu \right)q^{m-1}\equiv \frac{q^m+1}2+q^{m-2}-\mu q^{m-1} \pmod{q^m+1}$$ 
and 
$$ 0<\frac{q^m+1}2+q^{m-2}-\mu q^{m-1}< \frac{q^m+1}2-q^{m-1}-\mu.$$
Hence, $\frac{q^m+1}{2}-q^{m-1}-\mu$ is not a coset leader.

\noindent{\bf Case 2}: $\frac{q+1}2\leq \mu \leq q-2$. 
It is easy to verify that
$$\left(\frac{q^m+1}2-q^{m-1}-\mu \right)q^{2m-1}\equiv -\left[\frac{q^m+1}2+q^{m-2}-\mu q^{m-1}\right]  \pmod{q^m+1}$$ 
and 
$$ 0\leq -\left[\frac{q^m+1}2+q^{m-2}-\mu q^{m-1 }\right]< \frac{q^m+1}2-q^{m-1}-\mu.$$
Hence, $\frac{q^m+1}{2}-q^{m-1}-\mu$ is not a coset leader.

Summarising the conclusions in all the cases above, we conclude that $\frac{q^m+1}{2}-q^{m-1}-\mu$ is not a coset leader for any $3\leq \mu\leq q-2$. The desired conclusion then follows.
\end{proof}


From Lemmas \ref{lem:3} and \ref{lem:32}, we obtain the following theorem.

\begin{theorem}\label{thm:32}
 Let $n=\frac{q^m+1}2$, where $m\geq 2$ and $q$ is an odd prime power. Let $\delta_1$, $\delta_2$ and $\delta_3$ be given in Lemma \ref{lem:27}, Lemma \ref{lem:31} and Lemma \ref{lem:32}, respectively. Let $\delta$ be an integer. Then the negacyclic BCH code $\mathcal{C}_{(q,n,\delta,0)}$ has parameters $[n,2m(i-1)+\kappa,d\geq \delta_i]$ if $\frac{\delta_{i+1}+3}{2}\leq \delta \leq \frac{\delta_i+1}{2}~(i=1,2)$; and if $q^m\geq 25$, then the negacyclic BCH code $\mathcal{C}_{(q,n,(\delta_3+1)/2,0)}$ has parameters 
$[n,4m+\kappa,d\geq \delta_3]$, where 
 \begin{align*}
 	\kappa=\begin{cases}
	1 &{\rm if}~q^m\equiv 1 \pmod 4,\\
	2 &{\rm if}~q^m\equiv 3 \pmod 4.
\end{cases}
 \end{align*}
\end{theorem}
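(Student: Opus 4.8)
The plan is to mirror the argument of Theorem~\ref{thm:22}, reducing the determination of the parameters to a count of odd coset leaders modulo $q^m+1$ together with their coset sizes, all of which have already been pinned down in Lemmas~\ref{lem:27}, \ref{lem:31} and~\ref{lem:32}. The conceptual starting point is the remark that, since $q$ is odd and $2n=q^m+1$ is even, multiplication by $q$ preserves parity modulo $q^m+1$; hence every $q$-cyclotomic coset modulo $q^m+1$ consists either entirely of odd integers or entirely of even integers, and the $n$ roots $\beta,\beta^3,\ldots,\beta^{q^m}$ of $x^n+1$ are partitioned into cosets indexed precisely by the odd coset leaders. Because the defining set of $\mathcal{C}_{(q,n,\delta,0)}$ is $\{1,3,\ldots,2\delta-3\}$ together with the cosets it generates, a coset is a set of zeros of the generator polynomial if and only if its (odd) leader is at most $2\delta-3$, and a set of nonzeros if and only if its leader is at least $2\delta-1$. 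Consequently the dimension equals the sum of the sizes of those cosets whose odd leader is $\geq 2\delta-1$.

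First I would treat the two ranges $i=1,2$. For $\frac{\delta_{i+1}+3}{2}\le \delta\le\frac{\delta_i+1}{2}$ the inequalities translate to $\delta_{i+1}\le 2\delta-3$ and $2\delta-1\le\delta_i$, so the odd coset leaders that are $\geq 2\delta-1$ are exactly the $i$ largest ones $\delta_1,\ldots,\delta_i$; since these are the successive largest odd coset leaders, no odd coset leader lies strictly between consecutive $\delta_j$, and the code is therefore the \emph{same} for every $\delta$ in the range. The dimension is then $\sum_{j=1}^{i}|C_{\delta_j}^{(q,2n)}|$, which by Lemmas~\ref{lem:27} and~\ref{lem:31} equals $\kappa$ for $i=1$ and $\kappa+2m$ for $i=2$, i.e.\ $2m(i-1)+\kappa$. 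For the distance I would evaluate the code at the largest admissible designed distance $\delta=\frac{\delta_i+1}{2}$ in the range: its generator polynomial then has the consecutive odd zeros $\beta,\beta^3,\ldots,\beta^{\delta_i-2}$, and Lemma~\ref{lem:3} yields $d\ge 2\cdot\frac{\delta_i+1}{2}-1=\delta_i$.

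Next I would dispose of the special value $\delta=\frac{\delta_3+1}{2}$ under the hypothesis $q^m\ge25$ (needed so that Lemma~\ref{lem:32} is available). Here the defining set reaches exactly $2\delta-3=\delta_3-2$, so the coset of $\delta_3$ is a set of nonzeros, and the odd coset leaders $\geq 2\delta-1=\delta_3$ are precisely $\delta_1,\delta_2,\delta_3$; the dimension is $|C_{\delta_1}^{(q,2n)}|+|C_{\delta_2}^{(q,2n)}|+|C_{\delta_3}^{(q,2n)}|=\kappa+4m$. For the distance, Lemma~\ref{lem:3} now applies directly to this $\delta$ and gives $d\ge 2\delta-1=\delta_3$.

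The bulk of the difficulty has already been discharged in the preparatory lemmas, so the remaining obstacle is essentially one of bookkeeping. The step that must be handled with care is verifying that, in each range, no odd coset leader is omitted or double-counted between the stated $\delta_j$; this rests precisely on the facts that $\delta_1,\delta_2,\delta_3$ are the three \emph{largest} odd coset leaders and on the parity-preservation observation above. Once these are secured, the dimension counts and the three applications of the BCH bound of Lemma~\ref{lem:3} are immediate.
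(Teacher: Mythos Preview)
Your proposal is correct and matches the paper's approach, which simply states that the theorem follows from Lemmas~\ref{lem:3} and~\ref{lem:32}; you have spelled out the bookkeeping (parity preservation of cosets, identification of the nonzero cosets with the leaders $\delta_1,\ldots,\delta_i$, and the application of Lemma~\ref{lem:3} at the top of each range) that the paper leaves implicit.
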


\begin{example}\label{example-04}
We have the following examples of the code of Theorem \ref{thm:32}.
\begin{itemize}
\item Let $q=5$, $m=2$ and $3\leq \delta \leq 4$, then the code $\mathcal{C}_{(3,121,\delta,0)}$ has parameters $[13,5,7]$.
\item Let $q=7$, $m=2$ and $7\leq \delta \leq 9$, then the code $\mathcal{C}_{(7,25,\delta,0)}$ has parameters $[25,5,17]$.
\end{itemize}
All the two codes are distance-optimal according to the tables of best codes known  in \cite{Grassl2006} 
and their minimum distances achieve the lower bound in Theorem \ref{thm:32}. 
\end{example}

\section{The negacyclic BCH codes with large dimensions}\label{sec-largedimes}

In this section, we study the parameters of negacyclic BCH codes of length $\frac{q^m+1}{2}$ and $\frac{q^m-1}{2}$  with one or two zeros, which have  large dimensions. We first give the parameters of negacyclic BCH codes with only one zero. 

\begin{theorem}\label{Theorem-05}
Let $n=\frac{q^m+1}{2}$, where $m\geq 2$ and $q$ is an odd prime power. Then the negacyclic BCH code $\C_{(q,n,2,0)}$ with generator polynomial $\m_{\beta}(x)$ has parameters $[n,n-2m,d]$, where
\begin{eqnarray*}
d=\begin{cases}
5 & {\rm if}~q=3, \\
3 &  {\rm if}~$m$ {\rm ~is ~odd~ and~} q>3,  \\
4 &  {\rm if} ~m {\rm ~is ~even ~and~}q>3.
\end{cases}
\end{eqnarray*}
\end{theorem}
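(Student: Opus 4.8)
The plan is to first pin down the dimension and the trivial distance bound, then translate the distance computation into a question about $\gf(q)$-linear dependencies among roots of unity, and finally treat the three cases. For the dimension, the generator $\m_{\beta}(x)$ has degree $|C_1^{(q,q^m+1)}|$, and since $1\not\equiv 0\pmod q$ lies in the range covered by the lemma cited just before Theorem~\ref{thm:25}, this coset has size $2m$; concretely $C_1^{(q,q^m+1)}=\{\pm q^j\bmod(q^m+1):0\le j\le m-1\}$ because $q^m\equiv -1\pmod{q^m+1}$. Hence $\dim\C_{(q,n,2,0)}=n-2m$. The bound $d\ge 3$ is immediate from Lemma~\ref{lem:3} with $\delta=2$. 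The key reduction is that a weight-$w$ codeword is a polynomial $\sum_{k=1}^{w}c_kx^{a_k}$ with $c_k\in\gf(q)^*$ and distinct $a_k\in[0,n-1]$ satisfying $\sum_k c_k\beta^{a_k}=0$, so $d$ equals the smallest number of elements of $\{\beta^0,\beta^1,\dots,\beta^{n-1}\}$ that are $\gf(q)$-linearly dependent (the columns of a parity-check matrix over $\gf(q)$). Note that none of these $n$ roots of unity is the negative of another, so the always-dependent $\pm$-pairs never arise.

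The main tool for the finer analysis is the Frobenius $\sigma:x\mapsto x^{q^m}$, which fixes $\gf(q^m)$ and sends every $(q^m+1)$-th root of unity to its inverse. Applying $\sigma$ to a relation $\sum_k c_k\beta^{a_k}=0$ and decomposing along the $\gf(q^m)$-basis $\{1,\beta-\beta^{-1}\}$ of $\gf(q^{2m})$, I would split it into the two $\gf(q^m)$-equations $\sum_k c_k T_{a_k}=0$ and $\sum_k c_k S_{a_k}=0$, where $T_a=\beta^a+\beta^{-a}$ and $S_a=(\beta^a-\beta^{-a})/(\beta-\beta^{-1})$ both lie in $\gf(q^m)$, and the points $(T_a,S_a)$ all lie on the conic $X^2-\Delta Y^2=4$ with $\Delta=(\beta-\beta^{-1})^2\in\gf(q^m)$. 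For $m$ odd and $q>3$ this already settles $d=3$ explicitly: since $q+1\mid q^m+1$, the element $\eta=\beta^{(q^m+1)/(q+1)}$ is a primitive $(q+1)$-th root of unity in $\gf(q^2)$, so $\eta^2-\Tr(\eta)\eta+1=0$ with $\Tr(\eta)=\eta+\eta^{-1}\in\gf(q)^*$ nonzero (because $q>3$), giving the weight-$3$ codeword $1-\Tr(\eta)x^{\rho}+x^{2\rho}$ with $\rho=(q^m+1)/(q+1)$ and exponents in range.

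For $m$ even with $q>3$ I would exhibit a weight-$4$ codeword and prove no weight-$3$ codeword exists: the weight-$3$ equations above force $S_s,S_t,S_{t-s}$ to be pairwise $\gf(q)$-proportional for some admissible pair $s<t$, and the task is to show this cannot happen when $m$ is even, yielding $d=4$. For $q=3$ the exclusion of weight $3$ is clean: a weight-$3$ relation reduces to $1+\mu+\nu=0$ for roots of unity $\mu,\nu$, and applying $\sigma$ forces $\mu\nu=1$ and then $(\mu-1)^2=\mu^2+\mu+1=0$ in characteristic $3$, so $\mu=\nu=1$; this makes the three exponents coincide, a contradiction. Hence $d\ge 4$ for $q=3$, and it remains to rule out weight $4$ and to produce a weight-$5$ codeword to conclude $d=5$.

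I expect the main obstacle to be precisely these non-existence statements: that no weight-$3$ codeword exists for $m$ even and, especially, that no weight-$4$ codeword exists for $q=3$. Both amount to showing that short $\gf(q)$-dependencies among the conic points $(T_a,S_a)$, equivalently short vanishing $\gf(q)$-combinations of $(q^m+1)$-th roots of unity, simply cannot occur, and I anticipate this requires a careful case-by-case arithmetic analysis of the $q$-cyclotomic cosets modulo $q^m+1$ combined with the $S_a$/$T_a$ relations, paralleling the coset computations used earlier in the paper. Constructing the explicit weight-$4$ codeword (for $m$ even) and weight-$5$ codeword (for $q=3$) that attain the bounds is the remaining piece, and I would look for these among suitable short vanishing sums of roots of unity of admissible order dividing $q^m+1$.
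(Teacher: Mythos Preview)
Your overall framework is sound, and your treatment of the dimension, the case $m$ odd with $q>3$, and the exclusion of weight~$3$ for $q=3$ are all correct and close to the paper's arguments. However, the pieces you flag as the ``main obstacle'' are precisely where you are working much harder than necessary, because you have overlooked two shortcuts that the paper exploits.

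First, for $q=3$ you do not need to rule out weight~$4$ by a direct analysis at all. Since $q=3$, the coset $C_1^{(q,q^m+1)}$ contains $\{-3,-1,1,3\}$, which are the four consecutive odd integers $1+2i$ for $i\in\{-2,-1,0,1\}$. Lemma~\ref{lem:2} (the negacyclic BCH bound) then gives $d\geq 5$ immediately. Your route only reaches $d\geq 4$ and then leaves the weight-$4$ exclusion as an open subproblem; the paper never touches that subproblem.

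Second, for the upper bounds you plan to construct explicit codewords of weight~$4$ (for $m$ even, $q>3$) and weight~$5$ (for $q=3$). The paper avoids both constructions. For $q>3$ the sphere-packing bound already forces $d\leq 4$, so once $d\geq 3$ is known and $d=3$ is excluded (for $m$ even), you are done. For $q=3$ the sphere-packing bound gives $d\leq 6$, and then Lemma~\ref{bound2} (the Elias--Bassalygo-type bound quoted in the preliminaries) is applied with $t=n-5$, $r=2$ to show that an $[n,n-2m,6]$ ternary code would violate $3^{n-2m}\leq 3^{n-1}/(1+2(n-1)^2)$, which fails for $m\geq 2$. No weight-$5$ codeword is ever exhibited.

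Finally, for $m$ even and $q>3$ your conic/$S_a,T_a$ machinery would eventually work, but the paper's exclusion of $d=3$ is more direct: from $a_0+a_1\beta^{i_1}=-\beta^{i_2}$, raise both sides to the $(q^m+1)$-th power to obtain $a_0^2+a_1^2+a_0a_1(\beta^{i_1}+\beta^{-i_1})=1$, hence $\beta^{i_1}+\beta^{-i_1}\in\gf(q)$; applying $x\mapsto x^q$ factors this as $(\beta^{(q-1)i_1}-1)(\beta^{(q+1)i_1}-1)=0$, and since $\gcd(q\pm 1,q^m+1)=2$ for $m$ even this forces $\beta^{2i_1}=1$, a contradiction. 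This is essentially a specialization of your Frobenius idea but bypasses the conic entirely.
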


\begin{proof}
The dimension of the code follows from the fact that $|C_1^{(q,2n)}|=2m$. We now prove the desired conclusions of the Hamming distance of the code. The proof will be carried out by distinguishing the following two cases.

\noindent{\bf Case 1:} $q=3$. Note that $-3,-1,1,3\in C_1^{(q,2n)}$. From Lemma \ref{lem:2}, we have $d\geq 5$. From the sphere-packing bound, we have $d\leq 6$. Assume that there exists a negacyclic BCH code over $\gf(3)$ with parameters $[n, n-2m,6]$. Applying Lemma \ref{bound2}, we have $q=3$, $n=\frac{3^m+1}{2}$, $t=n-5$, $r=2$, and
    $$3^{n-2m}\leq \frac{3^{n-1}}{1+2(n-1)^2},$$
    which is impossible if $m>1$. Hence, $d=5$.

\noindent{\bf Case 2:} $q>3$. From Lemma \ref{lem:2} and the sphere-packing bound, we deduce that
  \begin{equation}\label{eq:bound2}
   3\leq d \leq 4.
  \end{equation}

Let $y=\beta^{\frac{q^m+1}{q+1}}$, where $m$ is odd. It is easily seen that $y \in \gf(q^2)$ since $\beta$ is a primitive $(q^m+1)$-th root of unity in $\gf(q^{2m})$. Then there always exist $a_1,a_2,a_3\in \gf(q)^*$ such that
$$a_1+a_2y+a_3y^2=0.$$ It is easy to check that $y^2\neq y$ and $y^2\neq 1$, then there exists a codeword in $\C_{(q,n,2,0)}$ with Hamming weight $3$. Hence, $d=3$ if $m$ is odd.

If $m$ is even and $d=3$, then there exist $a_0,a_1\in \gf(q)^*$ such that
$$a_0+a_1\beta^{i_1}+\beta^{i_2}=0,$$
which is equivalent to
\begin{equation}\label{eq:25}
a_0+a_1\beta^{i_1}=-\beta^{i_2}.
\end{equation}
Raising both sides of (\ref{eq:25}) to the $(q^m+1)$-th power, we have
$$a_0^2+a_1^2+a_0a_1(\beta^{i_1}+\beta^{-i_1})=1,$$
which means that $\beta^{i_1}+\beta^{-i_1} \in \gf(q)$. It follows that $(\beta^{i_1}+\beta^{-i_1})^q=\beta^{i_1}+\beta^{-i_1}$, i.e., 
$$\beta^{-qi_1}\left(\beta^{(q-1)i_1}-1\right)\left(\beta^{(q+1)i_1}-1\right)=0.$$
It implies that $\beta^{(q-1)i_1}=1$ or $\beta^{(q+1)i_1}=1$. It is clear that
$$\gcd(q^m+1,q-1)=\gcd(q^m+1,q+1)=2.$$ 
Hence, we obtain $\beta^{2i_1}=1$, which is contradictory to the fact that $\beta$ is a primitive $(q^m+1)$-th root of unity in $\gf(q^{2m})$ and $1\leq i_1< n$. From (\ref{eq:bound2}), the desired conclusion then follows.
\end{proof}

\begin{remark}\label{Rem-optimalCodes}
In Theorem \ref{Theorem-05},	 we have the following two infinite families of optimal codes.
\begin{itemize}
\item When $q=3$ and $n=\frac{q^m+1}2$, the negacyclic BCH code $\C_{(q, n, 2, 0)}$ is distance-optimal and dimension-optimal with respect to the sphere-packing bound and has parameters $[n, n-2m, 5]$.  
\item When $q\geq 5$, $m$ is even and $n=\frac{q^m+1}2$, the negacyclic BCH code $\C_{(q, n, 2, 0)}$ is distance-optimal with respect to the sphere-packing bound and has parameters $[n, n-2m, 4]$.  	
\end{itemize}
\end{remark}

\begin{example}\label{example-05}
We have the following examples of the code of Theorem \ref{Theorem-05}.
\begin{itemize}
\item Let $q=9$ and $m=2$, then the code $\mathcal{C}_{(9,41,2,0)}$ has parameters $[41,37,4]$.
\item Let $q=3$ and $m=4$, then the code $\mathcal{C}_{(3,41,2,0)}$ has parameters $[41,33,5]$.
\item Let $q=7$ and $m=2$, then the code $\mathcal{C}_{(5,25,2,0)}$ has parameters $[25,21,4]$.
\end{itemize}
All the codes are distance-optimal according to the tables of best codes known in \cite{Grassl2006}. 
\end{example}

\begin{theorem}\label{Theorem-06}
Let $n=\frac{q^m-1}{2}$, where $m\geq 2$ and $q$ is an odd prime power. Then the negacyclic BCH code $\C_{(q, n, 2, 0)}$ with generator polynomial $\m_{\beta}(x)$ is distance-optimal with respect to the sphere-packing bound and has parameters $[n, n-m, d]$, where
\begin{eqnarray*}
d=\begin{cases}
3 &{\rm  if~}q=3,  \\
2 &{\rm if~}q\neq 3.  \\
\end{cases}
\end{eqnarray*}
\end{theorem}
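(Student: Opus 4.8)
The plan is to collapse the entire problem to a single parity check over $\gf(q^m)$ and then treat $q=3$ and $q\neq 3$ separately, in the style of the proof of Theorem~\ref{Theorem-05}. First I would record the two structural facts that drive everything. Since $2n=q^m-1$, the order $\ell=\ord_{2n}(q)$ equals $m$ and $(q^\ell-1)/(2n)=1$, so $\beta=\alpha$ is in fact a primitive element of $\gf(q^m)$. Consequently $C_1^{(q,2n)}=\{1,q,\dots,q^{m-1}\}$ has size $m$, which gives $\dim\C_{(q,n,2,0)}=n-m$. Because the roots of $\m_\beta(x)$ are the Frobenius conjugates $\beta^{q^j}$ and the codeword coefficients lie in $\gf(q)$, a vector $(c_0,\dots,c_{n-1})$ lies in the code if and only if $\sum_{i=0}^{n-1}c_i\beta^{i}=0$. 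Hence $d$ is the smallest number of $\gf(q)$-linearly dependent elements among $\beta^0,\beta^1,\dots,\beta^{n-1}$, and this reformulation is what I would use to produce low-weight codewords.

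For $q=3$ I would first get $d\ge 3$ from the BCH bound (Lemma~\ref{lem:2}): taking $e=1$, $h=0$, $\delta=3$, the exponents $1$ and $3$ both lie in $C_1^{(3,2n)}$, so $\beta$ and $\beta^{3}$ are zeros of the code. To pin the distance exactly, I would invoke the sphere-packing bound: here $1+n(q-1)=1+(3^m-1)=3^m=q^m$, so the radius-$1$ spheres about the $q^{n-m}$ codewords are pairwise disjoint (by $d\ge 3$) and exactly exhaust $\gf(3)^n$. Thus the code is perfect, which forces its minimum distance to be odd, hence $d=3$; and the same equality rules out any $[n,n-m,d']$ code with $d'\ge 4$ (for $d'=4$ the perfect-code argument applies, while $d'\ge 5$ would need a strictly larger sphere sum to fit inside $q^m$), giving distance-optimality.

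For $q\neq 3$ (so $q\ge 5$) the lower bound $d\ge 2$ is immediate, since the code is proper and a weight-$1$ word $c_i\beta^{i}$ cannot satisfy $c(\beta)=0$. For the matching upper bound I would exhibit a weight-$2$ codeword: with $N=(q^m-1)/(q-1)$ the element $\beta^{N}$ is a generator of $\gf(q)^*$, so $\beta^{N}\cdot\beta^{0}+(-1)\cdot\beta^{N}=0$ yields a codeword supported on the two coordinates $0$ and $N$ with entries in $\gf(q)^*$. The one thing to check is that coordinate $N$ is admissible, i.e.\ $N\le n-1$; a short estimate shows $3q^m+3q-5\le q^{m+1}$ for $q\ge 5,\ m\ge 2$, which is equivalent to $N\le (q^m-3)/2=n-1$. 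Hence $d=2$. Distance-optimality then follows because a hypothetical $[n,n-m,3]$ code would force $1+n(q-1)\le q^m$ via the sphere-packing bound, whereas $1+\tfrac{q^m-1}{2}(q-1)\ge 2q^m-1>q^m$ for $q\ge 5$.

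The main obstacle is the \emph{sharpness} of the distance in each regime rather than the lower bounds. For $q=3$ the delicate step is upgrading $d\ge 3$ to $d=3$ with optimality; this rests entirely on the numerical coincidence $1+n(q-1)=q^m$ that makes the code perfect, after which the perfect-code argument (a word at distance $2$ from a codeword must lie in the radius-$1$ sphere of a second codeword, forcing two codewords within distance $3$) closes the case. For $q\ge 5$ the corresponding delicate point is confirming that the generator $\beta^{N}$ of $\gf(q)^*$ sits at an admissible coordinate, i.e.\ the index estimate $N\le n-1$. The remaining ingredients—the cyclotomic-coset size, the single-equation parity check, and the sphere-packing computations—are routine.
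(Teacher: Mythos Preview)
Your argument is correct. For the dimension and for the case $q\ge 5$ you follow essentially the paper's route: both exploit that $\beta^{(q^m-1)/(q-1)}\in\gf(q)^*$ to build a weight-$2$ codeword, and you are actually more careful than the paper in checking that the exponent $N=(q^m-1)/(q-1)$ is at most $n-1$ (this reduces simply to $q-1>2$). The distance-optimality for $q\ge 5$ via $1+n(q-1)>q^m$ is also what one would expect, and the paper leaves this step implicit.

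Where you genuinely diverge is the case $q=3$. The paper pins $d\le 3$ constructively: for even $m$ it takes $y=\beta^{(3^m-1)/8}\in\gf(9)$ and produces $a_1+a_2y+a_3y^2=0$ with $a_i\in\gf(3)^*$, while for odd $m$ it invokes Lemma~\ref{lem:12} together with a cited result on cyclic codes. Your route is cleaner and uniform in $m$: you note the numerical coincidence $1+n(q-1)=3^m$, which makes the radius-$1$ spheres an exact partition of $\gf(3)^n$; the standard perfect-code parity argument then forces $d=3$ and, as you observe, simultaneously rules out any $[n,n-m,\ge 4]$ code, giving distance-optimality for free. The paper's approach has the minor advantage of exhibiting an explicit minimum-weight codeword, but yours avoids the case split and the external citation.
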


\begin{proof}
The dimension of the code follows from the fact $|C_1^{(q,2n)}|=m$.
 We now settle the Hamming distance of the code. Recall that $\beta$ is a primitive $(q^m-1)$-th root of unity in $\gf(q^m)$.
It is clear that $\beta^{\frac{q^m-1}{q-1}}\in \gf(q)$, then there exist $a_1,a_2\in \gf(q)^*$ such that $a_1+a_2\beta^{\frac{q^m-1}{q-1}}=0$. This implies that $d=2$ if $q>3$.

If $q=3$, note that $1, 3\in C_{1}^{(q,2n)}$, it then follows from Lemma \ref{lem:2} that $d\geq 3$. When $m$ is odd, from Lemma \ref{lem:12} and \cite[Theorem 13]{Li2017}, we know that $d=3$. When $m$ is even, let $y=\beta^{\frac{3^m-1}{3^2-1}}\in \gf(3^2)$. Then there exist $a_1,a_2,a_3\in \gf(3)^*$ such that $a_1+a_2y+a_3y^2=0$. It is easy to check that $y^2\neq y$ and $y^2\neq 1$, then there exists a codeword in $\C_{(3,n,2,0)}$ with Hamming weight $3$. Hence, $d=3$. The desired conclusion then follows.
 \end{proof}

 \begin{example}\label{example-06}
We have the following examples of the code of Theorem \ref{Theorem-06}.
\begin{itemize}
\item Let $q=9$ and $m=2$, then the code $\mathcal{C}_{(9,41,2,0)}$ has parameters $[40,38,2]$.
\item Let $q=3$ and $m=3$, then the code $\mathcal{C}_{(3,13,2,0)}$ has parameters $[13,10,3]$.
\item Let $q=3$ and $m=4$, then the code $\mathcal{C}_{(3,40,2,0)}$ has parameters $[40,36,3]$.
\end{itemize}
All the codes are distance-optimal according to the tables of best codes known  in \cite{Grassl2006}. 
\end{example}

\begin{theorem}\label{Theorem-07}
Let $n=\frac{q^m-1}{2}$, where $m\geq 2$ and $q\geq 7$ is an odd prime power. Then the negacyclic BCH code $\C_{(q, n, 3, 0)}$ with generator polynomial $\m_{\beta}(x)\m_{\beta^3}(x)$ is almost distance-optimal with respect to the sphere-packing bound and has parameters $[n, n-2m, 3]$. 
\end{theorem}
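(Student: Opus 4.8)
The plan is to prove the three assertions—dimension $n-2m$, minimum distance exactly $3$, and almost distance-optimality—in turn. For the dimension, I would first record that the generator polynomial is $\m_{\beta}(x)\m_{\beta^3}(x)=\lcm(\m_{\beta}(x),\m_{\beta^3}(x))$, so the dimension equals $n-|C_1^{(q,2n)}|-|C_3^{(q,2n)}|$ once the two cosets are shown to be disjoint and of size $m$. Since $2n=q^m-1$, the coset $C_1^{(q,2n)}=\{1,q,\dots,q^{m-1}\}$ has size $m$ and contains no element equal to $3$ because $q\ge 7>3$; hence $C_1^{(q,2n)}\cap C_3^{(q,2n)}=\emptyset$. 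For $|C_3^{(q,2n)}|$ I would use that this size divides $m$, and that a proper divisor $m/e$ with $e>1$ would force $(q^m-1)\mid 3(q^{m/e}-1)$; but $0<3(q^{m/e}-1)<3q^{m/2}<q^m-1$ for $q\ge 7$ and $m\ge 2$, a contradiction. Thus $|C_3^{(q,2n)}|=m$ and the dimension is $n-2m$.

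The lower bound $d\ge 3$ is immediate from the BCH bound (Lemma \ref{lem:2}) with $e=1$, $h=0$, $\delta=3$, since $\beta^{1}$ and $\beta^{3}$ are roots of the generator. For the matching upper bound $d\le 3$ I would exhibit an explicit weight-$3$ codeword. Put $s=\frac{q^m-1}{q-1}$, so that $\gamma:=\beta^{s}$ is a primitive element of $\gf(q)^{*}$, and consider the support $\{0,s,2s\}$. Because $q\ge 7$ we have $2s=\frac{2(q^m-1)}{q-1}<\frac{q^m-1}{2}=n$, so these three exponents are distinct and lie in $[0,n-1]$. A polynomial $c(x)=a_1+a_2x^{s}+a_3x^{2s}$ lies in $\C_{(q,n,3,0)}$ iff $c(\beta)=c(\beta^3)=0$, which—using $\beta^{s}=\gamma\in\gf(q)$—becomes the $\gf(q)$-linear system
\begin{equation*}
a_1+a_2\gamma+a_3\gamma^{2}=0,\qquad a_1+a_2\gamma^{3}+a_3\gamma^{6}=0.
\end{equation*}
Taking $(a_1,a_2,a_3)=\bigl(\gamma^{5}(\gamma^{2}-1),\,-\gamma^{2}(\gamma^{4}-1),\,\gamma(\gamma^{2}-1)\bigr)$, the signed $2\times 2$ minors of the coefficient matrix, solves both equations, and all three coordinates are nonzero precisely because $\gamma$ has order $q-1\ge 6$, so $\gamma^{2}\ne 1$ and $\gamma^{4}\ne 1$. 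This is exactly where the hypothesis $q\ge 7$ enters. The resulting $c(x)$ has weight $3$, so together with the BCH bound we conclude $d=3$.

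For almost distance-optimality I would invoke the sphere-packing bound for an $[n,n-2m,d']$ code over $\gf(q)$. A distance $d'\ge 5$ would require $\binom{n}{2}(q-1)^2\le q^{2m}$, whereas $\binom{n}{2}(q-1)^2=\frac{(q^m-1)(q^m-3)(q-1)^2}{8}>q^{2m}$ for $q\ge 7$; hence no $[n,n-2m,5]$ code exists. The sphere-packing bound therefore permits at most $d'=4$, and our code attains $d=3=4-1$, so it is almost distance-optimal.

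I expect the construction of the weight-$3$ codeword to be the only genuine obstacle. The key idea is to place the whole support inside the subfield $\gf(q)$ via the exponents $0,s,2s$, which collapses the two $\gf(q^m)$-valued parity checks to a $2\times 3$ system over $\gf(q)$ whose minors are visibly nonzero exactly when $q\ge 7$; the remaining verifications (coset sizes, the BCH step, and the sphere-packing estimate) are routine.
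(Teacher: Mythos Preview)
Your proposal is correct and takes essentially the same approach as the paper: both exhibit a weight-$3$ codeword supported on $\{0,s,2s\}$ with $s=\frac{q^m-1}{q-1}$ (the paper's $i_1=\frac{2n}{q-1}$ and $i_2=\frac{4n}{q-1}$ are precisely $s$ and $2s$), reducing the parity checks at $\beta$ and $\beta^{3}$ to a $2\times 3$ linear system over $\gf(q)$ and solving it. You write out the signed minors and make explicit that $q\ge 7$ is needed for $\gamma^{2}\neq 1$ and $\gamma^{4}\neq 1$, whereas the paper fixes the constant term to $1$ and inverts the remaining $2\times 2$ block; the dimension, BCH, and sphere-packing steps are handled identically.
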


\begin{proof}
It is easy to check that the generator polynomial of $\C_{(q, n, 3,0)}$ is $\m_{\beta}\m_{\beta^3}(x)$, and 
$$\deg(\m_{\beta}(x)\m_{\beta^3}(x))=2m.$$	
Hence, $\dim(\C_{(q, n, 3, 0)})=n-2m$. We next consider the Hamming distance of this code. From Lemma \ref{lem:2}, we have $d\geq 3$. Let $i_1=\frac{2n}{q-1}$ and $i_2=\frac{4n}{q-1}$. Consider the following system of equations:
\begin{eqnarray}\label{Eq:29}
\begin{cases}
1+a_1\beta^{i_1}+a_2\beta^{i_2}=0,  \\
1+a_1\beta^{3i_1}+a_2\beta^{3i_2}=0.
\end{cases}
\end{eqnarray}
Note that $\beta^{i_1}, \beta^{i_2}\in \gf(q)^*$. We have  
$$\left|\begin{array}{cc}
\beta^{i_1}& \beta^{i_2}\\
\beta^{3i_1}& \beta^{3i_2}	
\end{array}
  \right|=\beta^{\frac{10n}{q-1}}\left(\beta^{\frac{4n}{q-1}}-1 \right) \in \gf(q)^*.$$
Then the equations above have a unique solution $(a_1,a_2)\in (\gf(q)^*)^2$. Clearly, $1+a_1x^{i_1}+a_2x^{i_2}$ is a codeword of $\C_{(q, n, 3, 0)}$. Hence, $d\leq 3$. Consequently, $d=3$. According to the sphere-packing bound, the minimum distance of the $[n,n-2m]$ linear code over $\gf(q)$ is at most $4$. Therefore, $\C_{(q, n, 2, 0)}$ is almost distance-optimal. The desired conclusion then follows.
\end{proof}

\begin{remark}
	By definition, it is clear that $\C_{(3, n,2,0)}=\C_{(3,n,3,0)}$ and $\C_{(5, n,3,0)}=\C_{(5,n,4,0)}$. Hence, we only consider the case that $q\geq 7$ in Theorem \ref{Theorem-07}.
\end{remark}





\begin{theorem}\label{thm41}
 Let $q=5$, $m\geq 2$ and $n=\frac{q^m-1}2$. Then the negacyclic BCH code $\C_{(q, n, 4,0)}$ with generator polynomial $\m_1(x)\m_{3}(x)$ is distance-optimal with respect to the sphere-packing bound and has parameters $[n, n-2m, 4]$. 
 \end{theorem}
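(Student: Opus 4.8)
The plan is to sandwich the minimum distance as $4\le d\le 4$: the lower bound will come from the BCH bound of Lemma~\ref{lem:2}, and the upper bound together with the claimed optimality will both come from the sphere-packing (Hamming) bound, so that no explicit low-weight codeword is actually required. I would first record the dimension. Since $q=5$ and $2n=q^m-1$, the root $\beta$ is a primitive $(q^m-1)$-th root of unity, the coset $C_1^{(5,2n)}=\{1,5,25,\dots\}$ has size $m$, and $3\notin C_1^{(5,2n)}$ for $m\ge 2$; one also checks $|C_3^{(5,2n)}|=m$ (the only point needing care is $m$ even, where $3\mid q^m-1$, but the multiplicative order of $5$ modulo $(q^m-1)/3$ is still $m$). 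Hence $C_1^{(5,2n)}$ and $C_3^{(5,2n)}$ are disjoint cosets of size $m$, so $\deg(\m_1(x)\m_3(x))=2m$ and $\dim(\C_{(5,n,4,0)})=n-2m$.

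For the lower bound, the decisive observation is that $5\in C_1^{(5,2n)}$, so $\beta^5$ is a conjugate of $\beta$ and therefore already a root of $\m_1(x)$; this is precisely the coincidence recorded in the preceding remark that makes $\C_{(5,n,3,0)}=\C_{(5,n,4,0)}$. Consequently the generator polynomial $g(x)=\m_1(x)\m_3(x)$ vanishes at the three consecutive odd-index powers $\beta^{1+2\cdot 0}=\beta$, $\beta^{1+2\cdot 1}=\beta^3$, and $\beta^{1+2\cdot 2}=\beta^5$. Applying Lemma~\ref{lem:2} with $e=1$, $h=0$, and $\delta=4$ then yields $d\ge 4$.

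For the upper bound and the optimality, I would show that no $[n,n-2m,d']$ code over $\gf(5)$ with $d'\ge 5$ can exist. Such a code would correct $t=2$ errors, so by the sphere-packing bound one would need $5^{2m}\ge 1+4n+16\binom{n}{2}$. But $16\binom{n}{2}=8n(n-1)=2(q^m-1)(q^m-3)=2\cdot 5^{2m}-8\cdot 5^m+6$, and this term alone exceeds $5^{2m}$ for every $m\ge 2$ (equivalently $5^{2m}-8\cdot 5^m+6>0$). This contradiction forces $d\le 4$; combined with $d\ge 4$ it gives $d=4$, and the very same computation shows that $\C_{(5,n,4,0)}$ is distance-optimal with respect to the sphere-packing bound.

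The genuinely structural input is the membership $5\in C_1^{(5,2n)}$, which supplies the third consecutive zero ``for free'' and thereby lets the BCH bound deliver exactly the value that the sphere-packing bound caps; this is what distinguishes the case $q=5$ from the case $q\ge 7$ of Theorem~\ref{Theorem-07}, where only $d\ge 3$ is available. The remaining steps are routine but must be done carefully: confirming $|C_3^{(5,2n)}|=m$ in all residue classes of $m$, so that $\deg(\m_1\m_3)=2m$ exactly, and checking the sphere-packing inequality uniformly for all $m\ge 2$. I do not expect any serious obstacle; should an explicit witness be preferred over the sphere-packing upper bound, a weight-$4$ codeword could be produced by solving a $2\times 2$ linear system forcing a polynomial to vanish at $\beta$ and $\beta^3$, as in the proof of Theorem~\ref{Theorem-07}, but the sphere-packing route is cleaner and simultaneously gives optimality.
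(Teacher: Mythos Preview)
Your proposal is correct and follows the same route as the paper: the BCH bound of Lemma~\ref{lem:2} gives $d\ge 4$ via the three consecutive zeros $\beta,\beta^3,\beta^5$ (using $5\in C_1^{(5,2n)}$), and the sphere-packing bound gives $d\le 4$ together with the optimality claim. The paper's own proof is terser---it simply declares the dimension obvious and invokes the two bounds without computation---so your write-up supplies exactly the details it omits.
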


\begin{proof}
 The conclusions of generator polynomial and the dimension of $\C_{(q, n, 4,0)}$ are obvious. From Lemma \ref{lem:2}, we have $d\geq 4$. By the the sphere packing bound, the minimal distance $d\leq 4$. Hence, $d=4$ and $\C_{(q, n, 4, 0)}$ is distance-optimal. The desired conclusion then follows.
\end{proof}

\begin{example}\label{example-07}
 Let $q=5$ and $m=2$, then the code $\mathcal{C}_{(5,12,3,0)}$ has parameters $[12,8,4]$. It is an almost MDS code and
is distance-optimal according to the tables of best codes known  in \cite{Grassl2006}. 
\end{example}

\section{Summary and concluding remarks}\label{sec-finals}


The main contributions of this paper are the following:
\begin{itemize}
\item We determined the first, second and third largest odd coset leaders modulo $q^m-1$ and $q^m+1$. Based on these odd coset  leaders, we analysed the parameters of the negacyclic BCH code $\mathcal{C}_{(n,q,\delta,0)}$ for $\delta$ in some ranges, where $n=\frac{q^m-1}{2}$ or $n=\frac{q^m+1}{2}$ (see Theorem \ref{thm:15}, Theorem \ref{thm:22}, Theorem \ref{thm:25} and Theorem \ref{thm:32}).

\item We investigated the parameters of the negacyclic BCH code $\C_{(q,n,2,0)}$ of length $\frac{q^m+1}{2}$ and  the parameters of the negacyclic BCH code $\C_{(q,n,2,0)}$ and  $\C_{(q,n,3,0)}$ of length $\frac{q^m-1}{2}$ (see Theorem \ref{Theorem-05}, Theorem \ref{Theorem-06}, Theorem \ref{Theorem-07} and Theorem \ref{thm41}).

\item We presented three infinite families of optimal negacyclic codes (see Remark \ref{Rem-optimalCodes} and Theorem \ref{thm41}). 
\end{itemize} 

According to the tables of best codes known in~\cite{Grassl2006}, many of the negacyclic BCH codes presented in this paper are optimal
(see Examples \ref{example-01},  \ref{example-02},  \ref{example-03},  \ref{example-04},  \ref{example-05}, \ref{example-06} and \ref{example-07}). Similar to cyclic BCH codes, negacyclic BCH codes 
have good parameters in general. This justifies why negacyclic codes are interesting.  

Note that the class of negacyclic BCH codes of length $(q^m+1)/2$ treated in this paper are LCD codes. The dimensions of all the codes treated 
in this paper were settled and a lower bound on their minimum distances was derived. However, it is extremely difficult to determine the minimum distance of these codes in general. According to our Magma experiments, in most cases the lower
bound on their minimum distances is actually the minimum distance of the codes. It would be interesting to settle the minimum distance of these negacyclic codes.

\end{document}